\newtheorem{theorem}{Theorem}[section]
\newtheorem{lemma}[theorem]{Lemma}
\newtheorem{cor}[theorem]{Corollary}
\newtheorem{conj}{Conjecture}
\theoremstyle{definition}
\newtheorem{definition}[theorem]{Definition}
\newtheorem{question}{Question}
\renewcommand{\subset}{\subseteq}
\renewcommand{\supset}{\supseteq}
\renewcommand{\epsilon}{\varepsilon}
\renewcommand{\nu}{v}
\newcommand{\abs}[1]{\left|#1\right|}                   
\newcommand{\absf}[1]{|#1|}                             
\newcommand{\vnorm}[1]{\left\|#1\right\|}    
\newcommand{\vnormf}[1]{\|#1\|}                         
\newcommand{\sign}[1]{\mbox{sign}#1}
\newcommand{\Z}{\mathbb{Z}}                             
\newcommand{\N}{\mathbb{N}}
\newcommand{\E}{\mathbb{E}}
\newcommand{\R}{\mathbb{R}}
\newcommand{\figwidth}{.4\textwidth}                
\newcommand{\figwidtha}{.8\textwidth}                
\newcommand{\embolden}[1]{\textbf {#1}}
\begin{document}

\title{Euclidean Partitions Optimizing Noise Stability}
%
\thanks{This research is supported by NSF Graduate Research Fellowship DGE-0813964.  Part of this work was carried out while visiting the Quantitative Geometry program at MSRI}
\author{Steven Heilman}
\address{Courant Institute, New York University, New York NY 10012}
\email{heilman@cims.nyu.edu}

\keywords{Standard simplex, plurality, optimization, MAX-k-CUT, Unique Games Conjecture}
\subjclass[2010]{68Q25}
%
\begin{abstract}
The Standard Simplex Conjecture of Isaksson and Mossel \cite{isaksson11} asks for the partition $\{A_{i}\}_{i=1}^{k}$ of $\mathbb{R}^{n}$ into $k\leq n+1$ pieces of equal Gaussian measure of optimal noise stability.  That is, for $\rho>0$, we maximize
$$
\sum_{i=1}^{k}\int_{\mathbb{R}^{n}}\int_{\mathbb{R}^{n}}1_{A_{i}}(x)1_{A_{i}}(x\rho+y\sqrt{1-\rho^{2}})
e^{-(x_{1}^{2}+\cdots+x_{n}^{2})/2}e^{-(y_{1}^{2}+\cdots+y_{n}^{2})/2}dxdy.
$$
Isaksson and Mossel guessed the best partition for this problem and proved some applications of their conjecture.  For example, the Standard Simplex Conjecture implies the Plurality is Stablest Conjecture.  For $k=3,n\geq2$ and $0<\rho<\rho_{0}(k,n)$, we prove the Standard Simplex Conjecture.  The full conjecture has applications to theoretical computer science \cite{isaksson11,khot07,mossel10} and to geometric multi-bubble problems.
\end{abstract}
\maketitle

\section{Introduction}\label{secintro}

The Standard Simplex Conjecture \cite{isaksson11} asks for the partition $\{A_{i}\}_{i=1}^{k}$ of $\R^{n}$ into $k\leq n+1$ sets of equal Gaussian measure of optimal noise stability.  This Conjecture generalizes a seminal result of Borell, \cite{borell85,mossel10}, which corresponds to the $k=2$ case of the Standard Simplex Conjecture.  Borell's result says that the two disjoint regions of fixed Gaussian measures $0<a<1$ and $1-a$ and of optimal noise stability must be separated by a hyperplane.  Since two disjoint sets of total Gaussian measure $1$ can be described by a single set and its complement, Borell's result can be stated as follows.  Let $A\subset\R^{n}$ have Gaussian measure $0<a<1$ and let $\rho\in(0,1)$.  Then the following quantity, which is referred to as the noise stability of $A$, is maximized when $A$ is a half-space.
\begin{equation}\label{zero0}
\int_{\R^{n}}\int_{\R^{n}}1_{A}(x)1_{A}(x\rho+y\sqrt{1-\rho^{2}})
e^{-(x_{1}^{2}+\cdots+x_{n}^{2})/2}e^{-(y_{1}^{2}+\cdots+y_{n}^{2})/2}dxdy.
\end{equation}
When we say that $A$ is a half-space, we mean that $A$ is the set of points lying on one side of a hyperplane.  If $\rho\in(-1,0)$, then the noise stability \eqref{zero0} of $A$ is minimized among all sets of Gaussian measure $a$, when $A$ is a half-space.  We can rewrite \eqref{zero0} probabilistically as follows.  Let $X=(X_{1},\ldots,X_{n}),Y=(Y_{1},\ldots,Y_{n})\in\R^{n}$ be two standard Gaussian random vectors such that $\E(X_{i}Y_{j})=\rho\cdot 1_{(i=j)}$.  Then the noise stability \eqref{zero0} of $A$ is equal to $\mathbb{P}((X,Y)\in A\times A)$.

For modern proofs of Borell's theorem with additional stability statements, see \cite{mossel12,eldan13}.  In the present work, we prove a specific case of the Standard Simplex Conjecture for $k=3$, when $0<\rho<\rho_{0}(n)$.  Already for the case $k=3$, the methods used in the case $k=2$ do not seem to apply, so new techniques are required to treat the case $k=3$.  We first discuss consequences of the full conjecture and we then state the conjecture precisely.  The Standard Simplex Conjecture appears to be first stated explicitly in \cite{isaksson11}.  If true, this conjecture implies:
\begin{itemize}
\item Optimal hardness results for approximating the MAX-k-CUT problem \cite[Theorem 1.13]{isaksson11}, a generalization of the MAX-CUT problem. (These hardness results are optimal, assuming the Unique Games Conjecture).
\item The Plurality is Stablest Conjecture \cite{khot07},\cite{isaksson11}[Theorem 1.10], an extension of the Majority is Stablest Conjecture \cite{mossel10} asserting that: the most noise-stable way to determine the winner of an election between $k$ candidates is to take the plurality. (This result assumes that no one person has too much influence over the election's outcome, and each candidate has an equal probability of winning).
\item The solution of a multi-bubble problem in Gaussian space \cite{corneli08,isaksson11,ledoux96}: in $\R^{n}$, minimize the total Gaussian perimeter of $k\leq n+1$ sets of Gaussian measure $1/k$.
\end{itemize}

The MAX-k-CUT problem asks for the partition of the vertices of any graph into $k$ sets of maximum total edge perimeter.  For the precise statement, see Definition \ref{maxkcutdef} below.  For a graph on $n$ vertices, the MAX-k-CUT problem cannot be solved time polynomial in $n$, unless P=NP \cite{frieze95}.  Yet, we can always find an approximate solution of the MAX-k-CUT problem in time polynomial in $n$ \cite{frieze95}.  To create this approximate solution, we label the vertices of the graph by vectors in $\R^{n}$, solve an appropriate semidefinite program for these vectors, and we then ``round'' these vectors into $k$ bins.  In particular, two vectors are rounded into the same bin if they lie in the same subset of a given partition $\{A_{i}\}_{i=1}^{k}$ of $\R^{n}$.  The best way to perform this rounding procedure is then provided by the partition $\{A_{i}\}_{i=1}^{k}$ of optimal noise stability.  That is, the Standard Simplex Conjecture exactly describes the best way to solve the MAX-k-CUT problem \cite[Theorem A.6]{isaksson11}.  This connection between combinatorial optimization and geometry has been well-studied; see e.g. \cite{rag09,austrin10,khot09,khot11,braverman11,heilman11}.  For a survey of the complexity theoretic motivation for problems related to the Standard Simplex Conjecture, see \cite{khot12}, where Grothendieck inequalities are emphasized.

The Plurality is Stablest Conjecture for $k=2$ was proven in \cite{mossel10}, where it was found to be a consequence of Borell's theorem, after applying a nonlinear central limit theorem, which is referred to as an invariance principle.  For $k=2$, this problem is known as the Majority is Stablest Theorem.  For more on the invariance principle, see also \cite{chat06}.  The invariance principle of \cite{mossel10} is proven by combining the Lindeberg replacement method with the hypercontractive inequality \cite{gross75}.  The Plurality is Stablest Conjecture says that the Plurality function nearly maximizes discrete noise stability over all functions $f\colon\{1,\ldots,k\}^{n}\to\{1,\ldots,k\}$.  In this context, we think of the domain of $f$ as $n$ voters who vote for any one of $k$ candidates.  Given $n$ votes $(a_{1},\ldots,a_{n})\in\{1,\ldots,k\}^{n}$, the value $f(a_{1},\ldots,a_{n})\in\{1,\ldots,k\}$ is the winner of the election.  The Plurality is Stablest conjecture also assumes that each candidate has an equal probability of winning the election, and no one person has too much influence over the outcome of the election.  It turns out that the latter assumption means that the function $f$ can be well approximated by a function $g\colon\R^{n}\to\{1,\ldots,k\}$.  That is, the noise stability of $f$ is close to the sum of noise stabilities of the sets $g^{-1}(1),\ldots,g^{-1}(k)$.  This approximation procedure, which uses an invariance principle, shows the equivalence of the Plurality is Stablest Conjecture and Standard Simplex Conjecture \cite[Theorems 1.10 and 1.11]{isaksson11}.  We are therefore partially motivated to solve the Standard Simplex Conjecture to attempt to complete the picture set out by the sequence of works \cite{borell85,khot07,mossel10,isaksson11}.

The problem of minimizing Gaussian perimeter arises as an endpoint case of the Standard Simplex Conjecture.  The Standard Simplex Conjecture is a statement involving a sum of terms of the form \eqref{zero0}, and the Gaussian perimeter is recovered by letting $\rho\to 1^{-}$.

We now precisely state the Standard Simplex Conjecture.  Let $\rho\in(-1,1)$, $n\geq1$, $n\in\Z$, let $f\colon\R^{n}\to\R$ be bounded and measurable, and define $d\gamma_{n}(y)\colonequals e^{-(y_{1}^{2}+\cdots+y_{n}^{2})/2}dy/(2\pi)^{n/2}$, $y=(y_{1},\ldots,y_{n})\in\R^{n}$.  For $x\in\R^{n}$, define
\begin{equation}\label{six0}
T_{\rho}f(x)\colonequals\int_{\R^{n}}f(x\rho+y\sqrt{1-\rho^{2}})d\gamma_{n}(y).
\end{equation}
The operator defined by \eqref{six0} is known as the noise operator, or Bonami-Beckner operator, or Ornstein-Uhlenbeck operator.  In particular, the Ornstein-Uhlenbeck operator is often written with $\rho=e^{-t}$, $t>0$, so that $T_{e^{-t}}$ becomes a semigroup.

\begin{definition}\label{partdef}
Let $A_{1},\ldots,A_{k}\subset\R^{n}$ be measurable, $k\leq n+1$.  We say that $\{A_{i}\}_{i=1}^{k}$ is a \textbf{partition} of $\R^{n}$ if $\cup_{i=1}^{k}A_{i}=\R^{n}$, and $\gamma_{n}(A_{i}\cap A_{j})=0$ for $i\neq j$, $i,j\in\{1,\ldots,k\}$.  Let $\{z_{i}\}_{i=1}^{k}$ be the vertices of a regular simplex centered at the origin of $\R^{n}$.  For each $i\in\{1,\ldots,k\}$, define $A_{i}\colonequals\{x\in\R^{n}\colon \langle x,z_{i}\rangle=\max_{j\in\{1,\ldots,k\}}\langle x,z_{j}\rangle\}$, the Voronoi region of $z_{i}$.  We call $\{A_{i}\}_{i=1}^{k}$ a \textbf{regular simplicial conical partition}.
\end{definition}

\begin{conj}[\textbf{Standard Simplex Conjecture}, \cite{isaksson11}]\label{SSC}
Let $n\geq2$, let $\rho\in[-1,1]$, and let $3\leq k\leq n+1$.  Let $\{A_{i}\}_{i=1}^{k}$ be a partition of $\R^{n}$.
\begin{itemize}
\item[(a)]  If $\rho\in(0,1]$, and if $\gamma_{n}(A_{i})=1/k$, $\forall$ $i\in\{1,\ldots,k\}$, then among all such partitions of $\R^{n}$, the quantity
\begin{equation}\label{six1.5}
J\colonequals\sum_{i=1}^{k}\int_{\R^{n}}1_{A_{i}}(x)T_{\rho}(1_{A_{i}})(x)d\gamma_{n}(x)
\end{equation}
is maximized by a regular simplicial conical partition.
\item[(b)] If $\rho\in[-1,0)$ (with no restriction on the measures of the sets $A_{i}$, $i\in\{1,\ldots,k\}$), then among all partitions of $\R^{n}$, the quantity $J$ is minimized by a regular simplicial conical partition.
\end{itemize}
\end{conj}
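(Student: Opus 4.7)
The plan is to prove both parts of Conjecture \ref{SSC} within a single semigroup--stochastic calculus framework, extending the Mossel--Neeman proof of Borell's theorem ($k=2$) \cite{mossel12} to $k \geq 3$. First I establish existence of extremizers via weak $L^2(\gamma_n)$ compactness of the partition indicator vectors and continuity of the quadratic form $J$. The first variation yields the interface condition $T_\rho 1_{A_i} - \lambda_i = T_\rho 1_{A_j} - \lambda_j$ on $\partial A_i \cap \partial A_j$, with Lagrange multipliers $\lambda_i \in \R$ for the measure constraints of (a), and $\lambda_i = 0$ for (b) since no such constraints are imposed there. A regular simplicial conical partition satisfies this equation by its $S_k$-symmetry, confirming it is a critical point in both settings; the new content of Conjecture \ref{SSC} lies in the global comparison to this critical point.

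The core of the proof is a stochastic interpolation. Construct coupled Brownian motions $(B_t, \tilde B_t)$ on $\R^n$ with $(B_1, \tilde B_1)$ having the law prescribing noise stability at parameter $\rho$, and form the vector-valued martingales $M_t \colonequals (T_{\rho(t)} 1_{A_i}(B_t))_{i=1}^k$ and $\tilde M_t$ analogously, with a reparameterization $\rho(t)$ running from $0$ to $\rho$. Define a comparison function $\Phi \colon \Delta^{k-1} \times \Delta^{k-1} \to \R$ so that $\Phi$ agrees with $J$ on marginal vectors produced by a regular simplicial conical partition. By It\^o's formula,
\[
J(A_1, \ldots, A_k) \;=\; \Phi(M_0, \tilde M_0) \;+\; \tfrac{1}{2}\int_0^1 \E\bigl[\langle \nabla^2 \Phi(M_t, \tilde M_t),\, \Sigma_t\rangle\bigr]\,dt,
\]
where $\Sigma_t$ is the instantaneous quadratic covariation matrix. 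For $\rho > 0$, a negative semidefinite Hessian of $\Phi$ against admissible $\Sigma_t$ gives the upper bound of part (a); for $\rho < 0$, a positive semidefinite Hessian of $\Phi$ gives the lower bound of part (b). Part (b) permits unequal measures because $\Phi$ extends continuously to the full boundary of $\Delta^{k-1} \times \Delta^{k-1}$ (representing degenerate partitions) and the stochastic integral retains the favorable sign for every admissible terminal distribution. Strict definiteness of $\nabla^2 \Phi$ away from the barycenter identifies the regular simplicial conical partition as the unique optimizer up to $\gamma_n$-null modifications.

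The main obstacle is establishing the required multivariate Hessian inequalities for $\Phi$. For $k=2$, $\Phi$ reduces to the bivariate Gaussian cdf and the inequality follows from a classical Mehler-type computation. For $k \geq 3$, $\Phi$ is the Gaussian measure of a simplicial cone parameterized by $(v, \tilde v) \in \Delta^{k-1}\times \Delta^{k-1}$, and its Hessian involves integrals over the $(k{-}2)$-skeleton of the cone's boundary for which no simple closed form is available. A strategy to attack the inequality would be threefold: (i) exploit the $S_k$-symmetry of $\Phi$ to reduce it to its values on a fundamental domain; (ii) verify the inequality on each edge of $\Delta^{k-1}$, where only two regions are ``active,'' thereby reducing to the $k=2$ Borell inequality; and (iii) propagate from edges to the full simplex via a Choquet-type decomposition of admissible covariances $\Sigma_t$ into rank-one pieces each aligned with an edge. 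Executing step (iii) — effectively showing that multivariate Borell-type inequalities are generated by their $k=2$ specializations — is the conceptual heart of Conjecture \ref{SSC} and the single step most resistant to existing techniques; the present paper's perturbative restriction to small $\rho$ can be viewed as the first-order regime in which the $k=2$-to-$k=3$ propagation becomes tractable.
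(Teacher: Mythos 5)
The statement you are proving is stated in the paper only as a conjecture: the paper itself offers no proof of Conjecture \ref{SSC} in its full generality, and its actual contribution (Theorems \ref{thm1} and \ref{thm2}) is the special case $k=3$, $n\geq 2$, $0<\rho<\rho_{0}(n,k)$, obtained by entirely different means --- existence of a maximizer of the derivative functional $\psi_{\rho}$ via convexity (Lemma \ref{lemma1}), a quantitative closeness-to-simplicial statement at $\rho=0$ (Lemma \ref{lemma5}), and an iterative ``feedback loop'' (Lemma \ref{lemma8}) that upgrades approximate to exact simpliciality, followed by the Fundamental Theorem of Calculus in $\rho$. So there is no paper proof of the full statement to match yours against; the only question is whether your argument stands on its own.

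It does not, and you essentially say so yourself: step (iii) of your program --- establishing the semidefiniteness of $\nabla^{2}\Phi$ for $k\geq 3$ by propagating the $k=2$ Borell inequality from the edges of $\Delta^{k-1}$ to its interior --- is left entirely unexecuted, and it is not a technical detail but is essentially equivalent to the conjecture. The Mossel--Neeman interpolation for $k=2$ works because $\Phi$ is the bivariate Gaussian quadrant probability, whose Hessian inequality is an explicit computation; for $k\geq 3$ the candidate $\Phi$ (the joint Gaussian measure of pairs of simplicial cones as a function of their marginals) has no known closed form and no known concavity/convexity structure, and no Choquet-type decomposition of admissible covariations into edge-aligned rank-one pieces is known to exist. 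Two further cautions: your treatment of part (b) (taking all Lagrange multipliers $\lambda_{i}=0$ and asserting the sign of the stochastic integral persists for arbitrary measures) is unsupported, and the paper's Theorem \ref{thm3} shows that for small $\rho<0$ the regular simplicial conical partition fails even to optimize the derivative functional $\psi_{\rho}$ over $\Delta_{k}^{0}(\gamma_{n})$ --- a warning that the $\rho<0$ regime will not follow by a sign flip of the $\rho>0$ argument. As written, your proposal is a reasonable research program (and correctly identifies the hard step), but it is not a proof.
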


The following theorem is our main result.

\begin{theorem}[\textbf{Main Theorem}]\label{thm0}
Fix $n\geq2$, $k=3$.  There exists $\rho_{0}=\rho_{0}(n,k)>0$ such that Conjecture \ref{SSC} holds for $\rho\in(0,\rho_{0})$.
\end{theorem}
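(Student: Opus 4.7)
The strategy is to analyze $J$ as $\rho\to0^{+}$ via the Hermite expansion of $T_{\rho}$, extract the first-order-in-$\rho$ term, show it is uniquely maximized by the regular simplicial conical partition $A^{*}$, and transfer to small $\rho>0$ via compactness plus a Hessian analysis at $A^{*}$. Expanding $T_{\rho}=\sum_{m\geq0}\rho^{m}P_{m}$, where $P_{m}$ is the $L^{2}(\gamma_{n})$-projection onto the $m$th Hermite chaos, gives
\[
J=\sum_{m\geq0}\rho^{m}L_{m}(A),\quad L_{m}(A)\colonequals\sum_{i=1}^{3}\|P_{m}1_{A_{i}}\|_{L^{2}(\gamma_{n})}^{2}\ge 0,\quad\sum_{m\geq0}L_{m}(A)=1.
\]
Under the constraint $\gamma_{n}(A_{i})=1/3$ one has $L_{0}\equiv1/3$; the first variable term is the sum of squared Gaussian moment vectors,
\[
L_{1}(A)=\sum_{i=1}^{3}|b_{i}(A)|^{2},\quad b_{i}(A)\colonequals\int_{A_{i}}x\,d\gamma_{n}(x)\in\R^{n},\quad \sum_{i}b_{i}=0.
\]

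The key lemma is: among equal-measure $3$-partitions of $\R^{n}$, $L_{1}$ is uniquely maximized (modulo $O(n)$-rotations and Lebesgue-null sets) by $A^{*}$. I would prove it via Lagrangian duality. The set of fractional partitions is convex and its image under the linear moment map $f\mapsto(b_{1},b_{2},b_{3})$ is convex, so the convex functional $\sum|b_{i}|^{2}$ attains its maximum at an extreme point, which is an honest partition. A first-variation ``swap'' calculation between $A_{i}$ and $A_{j}$ forces every maximizer to be a weighted Voronoi partition $A_{i}=\{x:\langle x,v_{i}\rangle-c_{i}\ge\langle x,v_{j}\rangle-c_{j}\ \forall j\}$ with the self-consistency $v_{i}\propto b_{i}$. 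Since $b_{1}+b_{2}+b_{3}=0$ the three $b_{i}$ lie in a $2$-plane and the corresponding partition is a product with $\R^{n-2}$, so the problem reduces to $\R^{2}$; the equal-measure condition combined with self-consistency then cuts the finite-dimensional family of candidates down to the $O(n)$-orbit of $A^{*}$.

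With the key lemma in hand I split partitions by distance from $A^{*}$. Put the pseudometric $d(A,B)\colonequals\inf_{R\in O(n)}\sum_{i}\gamma_{n}(A_{i}\triangle RB_{i})$ on the space of equal-measure partitions modulo $O(n)$; the space is compact in $d$ and $L_{1}$ is $d$-continuous, so for each $\delta>0$ there exists $\eta(\delta)>0$ with $L_{1}(A)\le L_{1}(A^{*})-\eta$ whenever $d(A,A^{*})\ge\delta$. Combined with the uniform tail bound $\sum_{m\geq2}\rho^{m}|L_{m}(A)-L_{m}(A^{*})|\leq2\rho^{2}$, this yields $J(A^{*})-J(A)\geq\rho\eta-2\rho^{2}>0$ for $\rho<\eta/2$, handling $d(A,A^{*})\geq\delta$. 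For $d(A,A^{*})<\delta$, parametrize $A$ as a normal perturbation $h$ of the interfaces $\partial A^{*}$. The reflection across $\{x:\langle x,z_{i}\rangle=\langle x,z_{j}\rangle\}$ swaps $A_{i}^{*}$ with $A_{j}^{*}$, which forces $P_{m}1_{A_{i}^{*}}\equiv P_{m}1_{A_{j}^{*}}$ on the interface for every $m$; consequently the first variation of every $L_{m}$ at $A^{*}$ vanishes and $A^{*}$ is a critical point of $J$ for every $\rho$. The Hessian of $L_{1}$ at $A^{*}$ is strictly negative-definite on measure-preserving perturbations modulo rotations/relabeling (the second-order form of the key lemma), while the Hessians of the $L_{m}$, $m\geq2$, are uniformly bounded in operator norm. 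Hence the Hessian of $J$ at $A^{*}$ has the form $\rho Q_{1}+\rho^{2}R(\rho)$, inheriting strict negative-definiteness for $\rho$ small, so $J(A^{*})-J(A)\geq c\rho\|h\|^{2}$ in the local regime, closing the argument.

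The main obstacle is the key lemma together with its quantitative second-order refinement. The Lagrangian reduction to weighted-Voronoi candidates is routine, but the subsequent finite-dimensional optimization --- ruling out critical weighted-Voronoi configurations other than the $O(n)$-orbit of $A^{*}$, in particular degenerations where two $b_{i}$ become nearly parallel or where the triangle $\{b_{1},b_{2},b_{3}\}$ collapses --- is the geometric core, and is likely the most technically delicate step.
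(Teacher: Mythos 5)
Your global dichotomy and the expansion $J=\sum_{m\geq0}\rho^{m}L_{m}$ match the paper's starting point (compare \eqref{six2}--\eqref{six3} and Lemma \ref{lemma5}), but the proposal has two gaps, one fixable and one fatal as written. The fixable one: the space of equal-measure partitions modulo $O(n)$ is \emph{not} compact in the symmetric-difference metric (rapidly oscillating partitions admit no strongly convergent subsequence), so ``compact plus continuous'' does not by itself produce $\eta(\delta)$. What does work is weak compactness of the fractional relaxation $\Delta_{3}(\gamma_{n})$ combined with the uniqueness of the maximizer of $L_{1}$ over that relaxation (Lemma \ref{lemma0}); the paper instead proves an explicit quantitative stability bound, Lemma \ref{lemma5}, giving $d_{2}\leq 6\epsilon^{1/8}$. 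Note also that your ``key lemma'' is essentially already available from \cite{khot09} for $k=3$, so it is not where the real difficulty lies.

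The fatal gap is the local regime. A partition with $d(A,A^{*})<\delta$ is close to $A^{*}$ only in measure; it need not be a normal graph perturbation of the interfaces $\partial A^{*}$. It can differ from $A^{*}$ on small sets scattered arbitrarily and, in particular, can be completely wild far from the origin, where the Gaussian weight is tiny but where $T_{\rho}1_{A_{i}}$ must still be controlled pointwise. The step ``parametrize $A$ as a normal perturbation $h$ of the interfaces'' therefore assumes a structural regularity of near-optimal competitors that is precisely what must be proven; without it, criticality of $A^{*}$ together with negative-definiteness of a Hessian on graph perturbations says nothing about general competitors. This is exactly the obstruction the paper identifies (``this intuition does not translate into a proof''), and it is why the paper replaces any second-variation argument with the first-order Euler--Lagrange inclusion \eqref{three1} for a maximizer of $(d/d\rho)J$, combined with the seven-step iterative bootstrap of Lemma \ref{lemma8}, which gradually enlarges the region on which the optimizer is shown to coincide exactly with a simplicial cone, the far-from-origin errors being controlled by Lemmas \ref{lemma6.1} and \ref{lemma7}; the Fundamental Theorem of Calculus then transfers the conclusion from $(d/d\rho)J$ back to $J$. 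Separately, even restricted to graph perturbations, your claim that the Hessians of $L_{m}$, $m\geq2$, are uniformly bounded in the same norm in which $Q_{1}$ is coercive is unsubstantiated and nontrivial: the identity $\sum_{m}L_{m}=1$ bounds the values of the $L_{m}$, not their second variations.
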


Theorem \ref{thm0} seems to have no direct relation to Gaussian isoperimetric problems \cite{corneli08}, since these problems are implied by letting $\rho\to1^{-}$ in Conjecture \ref{SSC}.   Also, \cite[Lemma A.4,Theorem A.6]{isaksson11} shows that Theorem \ref{thm0} seems to give no new information about the MAX-k-CUT problem, since in this problem, $\rho<0$ is most relevant.  Surprisingly, our proof strategy does not work for $\rho<0$, as we show in Theorem \ref{thm3}.

Let $X=(X_{1},\ldots,X_{n}),Y=(Y_{1},\ldots,Y_{n})$ be jointly standard normal $n$-dimensional Gaussian random variables such that the covariances satisfy $\mathbb{E}(X_{i}Y_{j})=\rho\cdot1_{\{i=j\}}$, $i,j\in\{1,\ldots,n\}$.  In \cite{isaksson11}, the quantity \eqref{six1.5} is written as $\sum_{i=1}^{k}\mathbb{P}((X,Y)\in A_{i}\times A_{i})$.  To see that our formulation of Conjecture \ref{SSC} is equivalent to that of \cite{isaksson11}, let $A\subset\R^{n}$ and note that
\begin{flalign*}
\int_{\R^{n}} 1_{A}T_{\rho}1_{A}d\gamma_{n}
&=\int_{\R^{n}} 1_{A}(x)\int_{\R^{n}} 1_{A}(x\rho+y\sqrt{1-\rho^{2}})d\gamma_{n}(y)d\gamma_{n}(x)\\
&=\int_{\R^{n}}\int_{\R^{n}} 1_{A}(x)1_{A}(x\rho+y\sqrt{1-\rho^{2}})d\gamma_{n}(y)d\gamma_{n}(x)
=\mathbb{P}((X,Y)\in A\times A).
\end{flalign*}


%
%
%
%
%

\subsection{MAX-k-CUT and the Unique Games Conjecture}
We now rigorously describe the complexity theoretic notions referenced above.
\begin{definition}[\embolden{MAX-k-CUT}]\label{maxkcutdef}
Let $k,n\in\N$, $k\geq2$.  We define the weighted MAX-k-CUT problem.  We are given a symmetric matrix $\{a_{ij}\}_{i,j=1}^{n}$ with $a_{ij}\geq0$ for all $i,j\in\{1,\ldots,n\}$.  The goal of the MAX-k-CUT problem is to find the following quantity:
$$
\max_{c\colon \{1,\ldots,n\}\to\{1,\ldots,k\}}\sum_{\substack{i,j\in\{1,\ldots,n\}\colon\\ c(i)\neq c(j)}}a_{ij}.
$$
\end{definition}

\begin{definition}[\embolden{$\Gamma$-MAX-2LIN(k)}]\label{max2lindef}
Let $k\in\N$, $k\geq2$.  We define the $\Gamma$-MAX-2LIN(k) problem.  In this problem, we are given $m\in\N$ and $2m$ variables $x_{i}\in\Z/k\Z$, $i\in\{1,\ldots,2m\}$.  We are also given a matrix $\{a_{ij}\}_{i,j=1}^{2m}$ with $a_{ij}\geq0$ for all $i,j\in\{1,\ldots,2m\}$.  An element $(i,j)$ corresponds to one of $m$ linear equations of the form $x_{i}-x_{j}=c_{ij}(\mathrm{mod}\,k)$, $i,j\in\{1,\ldots,2m\}$, $c_{ij}\in\Z/k\Z$.  The goal of the $\Gamma$-MAX-2LIN(k) problem is to find the following quantity:
\begin{equation}\label{zero1}
\max_{(x_{1},\ldots,x_{2m})\in(\Z/k\Z)^{2m}}
\sum_{\substack{(i,j)\in E\colon\\ x_{i}-x_{j}=c_{ij}(\mathrm{mod}\,k)}}a_{ij}.
\end{equation}
\end{definition}

\begin{definition}[\textbf{Unique Games Conjecture}, \cite{khot07}]
For every $\epsilon\in(0,1)$, there exists a prime number $p(\epsilon)$ such that no polynomial time algorithm can distinguish between the following two cases, for instances of $\Gamma$-MAX-2LIN($p(\epsilon)$) with $w=1$:
\begin{itemize}
\item[(i)] \eqref{zero1} is larger than $(1-\epsilon)m$, or
\item[(ii)] \eqref{zero1} is smaller than $\epsilon m$.
\end{itemize}
\end{definition}

If \eqref{zero1} were equal to $m$, then we could find $(x_{1},\ldots,x_{2m})$ achieving the maximum in \eqref{zero1} by linear algebra.  One can therefore interpret the Unique Games Conjecture as an assertion that approximate linear algebra is hard.

\begin{theorem}\label{thm9}
\textnormal{(\textbf{Optimal Approximation for MAX-k-CUT}, \cite{isaksson11}[Theorem 1.13],\cite{frieze95}).}
Let $k\in\N$, $k\geq2$.  Let $\{A_{i}\}_{i=1}^{k}\subset\R^{k-1}$ be a regular simplicial conical partition.  Define
$$
\alpha_{k}\colonequals
\inf_{-\frac{1}{k-1}\leq\rho\leq1}\,
\frac{k-k^{2}\sum_{i=1}^{k}\int_{\R^{n}}1_{A_{i}}T_{\rho}1_{A_{i}}d\gamma_{n}}{(k-1)(1-\rho)}
=\inf_{-\frac{1}{k-1}\leq\rho\leq0}\,
\frac{k-k^{2}\sum_{i=1}^{k}\int_{\R^{n}}1_{A_{i}}T_{\rho}1_{A_{i}}d\gamma_{n}}{(k-1)(1-\rho)}.
$$
Assume Conjecture \ref{SSC} and the Unique Games Conjecture.  Then, for any $\epsilon>0$, there exists a polynomial time algorithm that approximates MAX-k-CUT within a multiplicative factor $\alpha_{k}-\epsilon$, and it is NP-hard to approximate MAX-k-CUT within a multiplicative factor of $\alpha_{k}+\epsilon$.
\end{theorem}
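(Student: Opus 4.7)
The plan is to establish the two assertions of the theorem separately: (i) a polynomial-time algorithm achieving approximation factor $\alpha_{k}-\epsilon$, due to Frieze--Jerrum, whose analysis uses only the specific regular simplicial partition and does not need Conjecture~\ref{SSC}; and (ii) a UGC-based NP-hardness of any $(\alpha_{k}+\epsilon)$-approximation, via a Khot--Kindler--Mossel--O'Donnell-style dictator test, whose soundness invokes Conjecture~\ref{SSC} as a black box.

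For (i), I would write down the standard SDP relaxation of MAX-$k$-CUT: assign each vertex $i$ a unit vector $v_{i}\in\R^{N}$ subject to $\langle v_{i},v_{j}\rangle\geq -1/(k-1)$, and maximize $\sum_{i,j}a_{ij}\cdot\frac{k-1}{k}(1-\langle v_{i},v_{j}\rangle)$; this SDP is solvable in polynomial time to additive error $\epsilon$ by interior-point methods. I would round by drawing a random Gaussian matrix $G\colon\R^{N}\to\R^{k-1}$ with i.i.d.\ $N(0,1)$ entries and assigning vertex $i$ to the color $\ell$ for which $Gv_{i}\in A_{\ell}$. Since $(Gv_{i},Gv_{j})$ is jointly Gaussian with cross-correlation $\rho_{ij}=\langle v_{i},v_{j}\rangle$ in each of the $k-1$ coordinates, the probability that edge $(i,j)$ is cut equals $1-\sum_{\ell}\int 1_{A_{\ell}}T_{\rho_{ij}}1_{A_{\ell}}\,d\gamma_{k-1}$. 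Comparing this edgewise to the SDP contribution, taking the infimum over $\rho\in[-1/(k-1),1]$, and using linearity of expectation yields the ratio $\alpha_{k}$; standard derandomization then produces a deterministic $(\alpha_{k}-\epsilon)$-approximation. The reduction of the infimum to $\rho\in[-1/(k-1),0]$ will follow from showing that the per-edge ratio is at least $1$ on $[0,1]$, a direct computation using the symmetry of the regular simplicial partition and the Mehler expansion of $T_{\rho}$.

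For (ii), I would perform a long-code reduction from $\Gamma$-MAX-2LIN($p(\epsilon)$): the vertex set of the MAX-$k$-CUT instance is $V_{\mathrm{UG}}\times\{1,\ldots,k\}^{p(\epsilon)}$, and weighted edges implement a noisy-hypercube dictator test whose noise parameter $\rho$ ranges over the interval $[-1/(k-1),0]$ that attains $\alpha_{k}$. Completeness: if the UG instance has value at least $1-\epsilon$, then dictator colorings of each long code yield a MAX-$k$-CUT fraction close to $(k-1)/k$. Soundness: any coloring that yields cut fraction exceeding $\alpha_{k}\cdot(k-1)/k + \delta$ must have some long code with a coordinate of significant low-degree influence, which, as in Khot--Regev, can be decoded into a UG assignment of value at least some $\epsilon'=\epsilon'(\delta)$. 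The core of the soundness is the ``influence-free'' bound: any partition of the biased hypercube into $k$ pieces of equal measure and negligible low-degree influences has cut fraction at most $\alpha_{k}\cdot(k-1)/k+o(1)$. I would prove this in two steps: first, the Mossel--O'Donnell--Oleszkiewicz invariance principle reduces the question to partitions of $\R^{n}$ of Gaussian measure $1/k$ each; second, Conjecture~\ref{SSC} bounds the resulting Gaussian noise stability by that of the regular simplicial partition, which by definition of $\alpha_{k}$ gives the required bound.

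The principal obstacle is the quantitative marriage of the invariance principle with Conjecture~\ref{SSC}: one needs an effective stability version of the conjecture strong enough to absorb the invariance-principle error terms, and one must propagate the resulting approximation-preserving reduction through the Unique Games machinery so that the $\delta,\epsilon'$ losses remain subconstant. The algorithmic half (i), by contrast, is a per-edge SDP analysis whose ratio is tight by construction of $\alpha_{k}$.
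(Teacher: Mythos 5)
The paper offers no proof of this theorem: it is quoted verbatim from Isaksson--Mossel \cite{isaksson11} and Frieze--Jerrum \cite{frieze95}, and your outline reproduces exactly the standard argument of those references --- the SDP relaxation with random Gaussian rounding into the regular simplicial conical partition (whose per-edge ratio is $\alpha_{k}$ by construction, with the restriction of the infimum to $\rho\le 0$ following from the Hermite expansion \eqref{six2}) for the algorithmic half, and a long-code dictator test whose soundness combines the invariance principle with Conjecture \ref{SSC} for the hardness half. Your sketch is essentially correct and is the same approach as the cited sources.
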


\subsection{Plurality is Stablest}

We now briefly describe the Plurality is Stablest Conjecture.  This Conjecture seems to first appear in \cite{khot07}.  The work \cite{khot07} emphasizes the applications of this conjecture to MAX-k-CUT and to MAX-2LIN(k).

Let $n\geq2,k\geq3$
Let $(W_{1},\ldots,W_{k})$ be an orthonormal basis for the space of functions $\{g\colon\{1,\ldots,k\}\to[0,1]\}$ equipped with the inner product $\langle g,h\rangle_{k}\colonequals\frac{1}{k}\sum_{\sigma\in\{1,\ldots,k\}}g(\sigma)h(\sigma)$.  Assume that $W_{1}=1$.  By orthonormality, for every $\sigma\in\{1,\ldots,k\}$, there exists $\widehat{g}(\sigma)\in\R$ such that the following expression holds: $g=\sum_{\sigma\in\{1,\ldots,k\}}\widehat{g}(\sigma)W_{\sigma}$.  Define
$$\textstyle\Delta_{k}\colonequals\{(x_{1},\ldots,x_{k})\in\R^{k}\colon\forall\,1\leq i\leq k, 0\leq x_{i}\leq1,\sum_{i=1}^{k}x_{i}=1\}.$$
Let $f\colon\{1,\ldots,k\}^{n}\to\Delta_{k}$, $f=(f_{1},\ldots,f_{k})$, $f_{i}\colon\{1,\ldots,k\}^{n}\to[0,1]$, $i\in\{1,\ldots,k\}$.  Let $\sigma=(\sigma_{1},\ldots,\sigma_{n})\in\{1,\ldots,k\}^{n}$.  Define $W_{\sigma}\colonequals\prod_{i=1}^{n}W_{\sigma_{i}}$, and let $\abs{\sigma}\colonequals\abs{\{i\in\{1,\ldots,n\}\colon\sigma_{i}\neq1\}}$.  Then for every $\sigma\in\{1,\ldots,k\}^{n}$ there exists $\widehat{f_{i}}(\sigma)\in\R$ such that $f_{i}=\sum_{\sigma\in\{1,\ldots,k\}^{n}}\widehat{f_{i}}(\sigma)W_{\sigma}$, $i\in\{1,\ldots,k\}$.  For $\rho\in[-1,1]$ and $i\in\{1,\ldots,k\}$, define
$$
T_{\rho}f_{i}\colonequals\sum_{\sigma\in\{1,\ldots,k\}^{n}}\rho^{\abs{\sigma}}\widehat{f_{i}}(\sigma)W_{\sigma},\quad
T_{\rho}f\colonequals(T_{\rho}f_{1},\ldots,T_{\rho}f_{k})\in\R^{k}.
$$

Let $m\geq2$, $k\geq3$.  For each $j\in\{1,\ldots,k\}$, let $e_{j}=(0,\ldots,0,1,0,\ldots,0)\in\R^{k}$ be the $j^{th}$ unit coordinate vector.  Let $\sigma\in\{1,\ldots,k\}^{n}$.  Define $\mathrm{PLUR}_{m,k}\colon\{1,\ldots,k\}^{m}\to\Delta_{k}$ such that
$$\mathrm{PLUR}_{m,k}(\sigma)
\colonequals\begin{cases}
e_{j}&,\mbox{if }\abs{\{i\in\{1,\ldots,m\}\colon\sigma_{i}=j\}}>
\abs{\{i\in\{1,\ldots,m\}\colon\sigma_{i}=r\}},\\
&\qquad\qquad\forall\,r\in\{1,\ldots,k\}\setminus\{j\}\\
\frac{1}{k}\sum_{i=1}^{k}e_{i}&,\mbox{otherwise}
\end{cases}
$$

\begin{conj}[\textbf{Plurality is Stablest Conjecture}, \cite{isaksson11}]\label{PS}
Let $n\geq2$, $k\geq3$, $\rho\in[-\frac{1}{k-1},1]$, $\epsilon>0$.  Let $\langle\cdot,\cdot\rangle$ denote the standard inner product on $\R^{n}$.  Then there exists $\tau>0$ such that, if $f\colon\{1,\ldots,k\}^{n}\to\Delta_{k}$ satisfies $\sum_{\sigma\in\{1,\ldots,k\}^{n}\colon\sigma_{j}\neq1}(\widehat{f_{i}}(\sigma))^{2}\leq\tau$ for all $i\in\{1,\ldots,k\}$, $j\in\{1,\ldots,n\}$, then
\begin{itemize}
\item[(a)]  If $\rho\in(0,1]$, and if $\frac{1}{k^{n}}\sum_{\sigma\in\{1,\ldots,k\}^{n}}f(\sigma)=\frac{1}{k}\sum_{i=1}^{k}e_{i}$, then
$$\frac{1}{k^{n}}\sum_{\sigma\in\{1,\ldots,k\}^{n}}\langle f(\sigma),T_{\rho}f(\sigma)\rangle\leq
\lim_{m\to\infty}\frac{1}{k^{m}}\sum_{\sigma\in\{1,\ldots,k\}^{m}}\langle\mathrm{PLUR}_{m,k}(\sigma), T_{\rho}(\mathrm{PLUR}_{m,k})(\sigma)\rangle
+\epsilon.
$$
\item[(b)] If $\rho\in[-1/(k-1),0)$, then
$$\frac{1}{k^{n}}\sum_{\sigma\in\{1,\ldots,k\}^{n}}\langle f(\sigma),T_{\rho}f(\sigma)\rangle\geq
\lim_{m\to\infty}\frac{1}{k^{m}}\sum_{\sigma\in\{1,\ldots,k\}^{m}}\langle\mathrm{PLUR}_{m,k}(\sigma), T_{\rho}(\mathrm{PLUR}_{m,k})(\sigma)\rangle
-\epsilon.
$$
\end{itemize}
\end{conj}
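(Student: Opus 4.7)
The strategy is to reduce Conjecture \ref{PS} to Conjecture \ref{SSC} via a vector-valued nonlinear invariance principle in the style of \cite{mossel10,isaksson11}, and then to invoke Conjecture \ref{SSC} (unconditionally only in the regime covered by Theorem \ref{thm0}). I describe part (a); part (b) is symmetric with inequalities reversed and Conjecture \ref{SSC}(b) replacing Conjecture \ref{SSC}(a). The reduction is essentially the Isaksson--Mossel equivalence theorem, whose quantitative version produces the claimed $\epsilon$-approximation.

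\textbf{Step 1 (Gaussian replacement and rounding).} Given low-influence $f=(f_{1},\ldots,f_{k})\colon\{1,\ldots,k\}^{n}\to\Delta_{k}$, expand each $f_{i}=\sum_{\sigma}\widehat{f_{i}}(\sigma)W_{\sigma}$ and associate to each basis element $W_{j}$ a Gaussian random variable $\widetilde{W}_{j}$ with first and second moments matching those of $W_{j}$ under the uniform measure on $\{1,\ldots,k\}$. Form the Gaussian multilinear polynomial $\widetilde{f_{i}}=\sum_{\sigma}\widehat{f_{i}}(\sigma)\widetilde{W}_{\sigma}$ with $\widetilde{W}_{\sigma}=\prod_{j=1}^{n}\widetilde{W}_{\sigma_{j}}$ built from independent copies. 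After a preliminary smoothing by $T_{1-\delta}$ with $\delta=\delta(\epsilon)>0$ small (to secure hypercontractive fourth-moment bounds), the Mossel--O'Donnell--Oleszkiewicz invariance principle gives $\mathbb{E}\,\psi(f(\sigma))=\mathbb{E}\,\psi(\widetilde{f}(g))+O(\tau^{c})$ for every smooth bounded test $\psi\colon\R^{k}\to\R$, where $c=c(\delta)>0$ and $g$ is a standard Gaussian. Next, compose $\widetilde{f}$ with the Lipschitz projection $\Pi\colon\R^{k}\to\Delta_{k}$ followed by a ``nearest-vertex'' map, producing a Gaussian partition $\{A_{i}\}_{i=1}^{k}$ with $\gamma_{n}(A_{i})=1/k+O(\epsilon)$ (repairable by an arbitrarily small measure-preserving perturbation). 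Applying the invariance principle to the noise-stability integrand (suitably regularized) yields
$$
\tfrac{1}{k^{n}}\sum_{\sigma}\langle f(\sigma),T_{\rho}f(\sigma)\rangle=\sum_{i=1}^{k}\int_{\R^{n}}1_{A_{i}}\,T_{\rho}1_{A_{i}}\,d\gamma_{n}+O(\epsilon),
$$
provided $\tau$ is chosen sufficiently small in terms of $\epsilon,k,\rho$.

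\textbf{Step 2 (Apply SSC and identify the plurality limit).} Conjecture \ref{SSC}(a) bounds the right-hand side above by the Gaussian noise stability of the regular simplicial conical partition of $\R^{k-1}$. To finish part (a), identify this Gaussian stability with the plurality limit: under the uniform law on $\{1,\ldots,k\}^{m}$, the centered-and-rescaled vector of empirical counts converges by the multidimensional central limit theorem to a non-degenerate Gaussian supported on the hyperplane $\{\sum_{i}x_{i}=0\}\simeq\R^{k-1}$, and the plurality winner is precisely the index of the Voronoi cell of the nearest simplex vertex. A standard continuity argument (together with the evident scale invariance of Voronoi regions) therefore gives
$$
\lim_{m\to\infty}\tfrac{1}{k^{m}}\sum_{\sigma}\langle\mathrm{PLUR}_{m,k}(\sigma),T_{\rho}\mathrm{PLUR}_{m,k}(\sigma)\rangle=\sum_{i=1}^{k}\int_{\R^{k-1}}1_{A_{i}^{\ast}}\,T_{\rho}1_{A_{i}^{\ast}}\,d\gamma_{k-1},
$$
where $\{A_{i}^{\ast}\}$ is the regular simplicial conical partition of $\R^{k-1}$. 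Combining with Step 1 closes the inequality up to $\epsilon$.

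\textbf{Main obstacle.} The delicate point is the rounding in Step 1: the projection-then-round map is only Lipschitz, and the invariance principle's error bounds degrade near the codimension-one boundary of $\Delta_{k}$ where small perturbations flip the rounded vertex. Calibrating $\tau,\delta,\epsilon$ against this boundary effect requires hypercontractive estimates, with constants tracked uniformly in $k$, showing that $\widetilde{f}$ places negligible mass on a neighborhood of the singular set. For part (b) the range $\rho\in[-1/(k-1),0)$ is subtler because $T_{\rho}$ is not positivity-preserving, so one needs the sign-agnostic variant of the invariance principle from \cite{mossel10} together with Conjecture \ref{SSC}(b). Crucially, the entire plan is conditional on Conjecture \ref{SSC}; unconditionally it yields Conjecture \ref{PS} only in the regime permitted by Theorem \ref{thm0}, namely $k=3$ and $\rho\in(0,\rho_{0}(n,3))$.
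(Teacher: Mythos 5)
The statement you are proving is Conjecture \ref{PS}, which the paper does not prove: it is stated as an open conjecture (due to \cite{khot07,isaksson11}) and is only cited as motivation. Your proposal is likewise not a proof of the statement. What you outline is the known Isaksson--Mossel reduction of Plurality is Stablest to the Standard Simplex Conjecture via the invariance principle, and you say explicitly that ``the entire plan is conditional on Conjecture \ref{SSC}.'' Since Conjecture \ref{SSC} remains open for $k\geq4$ and for all $\rho$ outside $(0,\rho_{0}(n,3))$, the conditional reduction does not establish Conjecture \ref{PS}. That is the gap: a reduction to an open conjecture is not a proof.

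There is also a quantitative error in your final claim that the argument ``unconditionally yields Conjecture \ref{PS}'' in the regime of Theorem \ref{thm0}. The paper addresses exactly this point before Corollary \ref{thm0.1} and shows the resulting cases are vacuous: Theorem \ref{thm2} requires $0<\rho<\rho_{0}(n,k)$ with $\rho_{0}$ decaying (inverse-exponentially) in $n$, and by \eqref{six2} one must take $\epsilon<3k\rho$ for the conclusion to be nontrivial; the invariance principle then forces $\log\tau=-C(\log\epsilon)^{2}/\epsilon$, so $\tau$ depends on $\rho$ and hence on $n$, and no function $f$ on $\{1,\ldots,k\}^{n}$ can satisfy the low-influence hypothesis. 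This is precisely why the paper only states the heavily restricted Weak Form in Corollary \ref{thm0.1} (requiring $f$ to be already well-approximated by a function of $m<\log\log\log\log\log(n)$ Gaussian variables) rather than any genuine case of Conjecture \ref{PS}. Your Step 1 calibration of $\tau$ against $\epsilon,k,\rho$ runs directly into this obstruction and cannot be carried out in the only regime where the Gaussian input is available unconditionally.
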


\subsection{A Synopsis of the Main Theorem}

We now describe the proof of Theorem \ref{thm0}.  We first take the derivative $d/d\rho$ of the quantity $J$ defined by \eqref{six1.5}.  This procedure is common, and it dates back at least to the the proof of the Log-Sobolev Inequality by Gross \cite{gross75}.  Taking this derivative allows us to relate $J$ to the works \cite{khot09,khot11}.  In Section \ref{secvar}, we modify the results of \cite{khot09,khot11} to prove the existence of a partition that maximizes $(d/d\rho)J$.  Then, in Section \ref{secpert}, we further modify results of \cite{khot09,khot11} to show that, if $\rho>0$ is small, then a partition maximizing $(d/d\rho)J$ is close to a partition maximizing $(d/d\rho)|_{\rho=0}J$.  And by \cite{khot09}, we know that the partition maximizing $(d/d\rho)|_{\rho=0}J$ is a regular simplicial conical partition, for dimension $n\geq2$ and $k=3$ partition elements.

So, for small $\rho>0$, a partition maximizing $(d/d\rho)J$ is close to a regular simplicial conical partition.  The structure of the operator $T_{\rho}$ then permits the exploitation of a feedback loop.  This feedback loop says: if our partition maximizes $(d/d\rho)J$ for small $\rho>0$, and if this partition is close to a regular simplicial conical partition, then this partition is even closer to a regular simplicial conical partition.  This feedback loop is investigated in Section \ref{seciter}, especially in the crucial Lemma \ref{lemma8}.  A similar feedback loop was already apparent in \cite{khot09}[Lemma 3.3].  The full argument of Theorem \ref{thm0} is then assembled in Section \ref{secmain}.  By using this feedback loop, we show in Theorem \ref{thm1} that a regular simplicial conical partition maximizes $(d/d\rho)J$ for small $\rho>0$, $k=3$, $n\geq2$.  Then, the Fundamental Theorem of Calculus allows us to relate $(d/d\rho)J$ to $J$, therefore completing the proof of the main theorem, Theorem \ref{thm0}.

Since Lemma \ref{lemma8} is rather lengthy and crucial to this investigation, we will further describe the idea behind it.  If we know that our partition maximizes $(d/d\rho)J$, and if we also know that this partition is close to a regular simplicial conical partition, then the first variation should immediately tell us that our partition is actually a regular simplicial conical partition.  Unfortunately, this intuition does not translate into a proof.  The main technical problem is that the sets we are dealing with are unbounded, and we need to know precise information about the Ornstein-Uhlenbeck operator applied to these sets, for points that are very far from the origin.  Since the Gaussian measure decays exponentially away from the origin, this means that it becomes hard to say something precise about the points in these sets that are very far from the origin.  So, we require very precise estimates of the Ornstein-Uhlenbeck operator, and the errors that it accrues when we evaluate it far from the origin.  These estimates are performed in Lemmas \ref{lemma6.1} and \ref{lemma7}.  Unfortunately, to use these estimates effectively, we need to slowly enlarge the regions where we use these estimates.  The details of enlarging these regions becomes surprisingly complicated, occupying the seven steps of Lemma \ref{lemma8}.

In Section \ref{secmain}, we also show the surprising fact that our strategy fails for small negative correlation.  That is, for small $\rho<0$, $(d/d\rho)J$ is not maximized by the regular simplicial conical partition.  This result does not confirm or deny Conjecture \ref{SSC} for $\rho<0$.  However, one may interpret from this result that the case of Conjecture \ref{SSC} for $\rho<0$ could be more difficult than the case $\rho>0$.

We should also emphasize the lack of symmetrization in the proof of Theorem \ref{thm0}.  Symmetrization is one of a few general strategies that solves many optimization problems.  In our context, symmetrization would appear as follows.  Recall the definition of $J$ from \eqref{six1.5}.  Suppose we have a partition $\{A_{i}\}_{i=1}^{k}\subset\R^{n}$.  Change this partition into a ``more symmetric'' partition $\{\widetilde{A_{i}}\}_{i=1}^{k}$ such that $J$ or $(d/d\rho)J$ is larger for $\{\widetilde{A_{i}}\}_{i=1}^{k}$.  In the proof of the main theorem, it is tempting to use this symmetrization paradigm.  The works \cite{borell85},\cite{mossel10} and \cite{isaksson11} use Gaussian symmetrization in a crucial way.  However, we find this approach to be less natural for Conjecture \ref{SSC}, so we do not explicitly use symmetrization.  Nevertheless, symmetry does play a crucial role in our proof, especially in the estimates of Section \ref{secpert}.  It should also be noted that the works \cite{khot09,khot11} do not explicitly use symmetrization, and this lack of symmetrization is one of their novel aspects.

\subsection{Preliminaries}

We follow the exposition of \cite{ledoux96}.  Let $n\geq1$, $n\in\Z$.  Let $\N=\{0,1,2,3,\ldots\}$.  For $f\colon\R^{n}\to\R$ measurable, let $\vnorm{f}_{L_{2}(\gamma_{n})}\colonequals(\int_{\R^{n}}\abs{f}^{2}d\gamma_{n})^{1/2}$.  Let $L_{2}(\gamma_{n})\colonequals\{f\colon\R^{n}\to\R\colon\vnorm{f}_{L_{2}(\gamma_{n})}<\infty\}$.  Let $\ell_{2}^{n}$ denote the $\ell_{2}$ norm on $\R^{n}$.  For $x\in\R^{n}$ and $r>0$, define $B(x,r)\colonequals\{y\in\R^{n}\colon\vnorm{x-y}_{\ell_{2}^{n}}<r\}$.

For $f\in L_{2}(\gamma_{n})$, define $T_{\rho}$ as in \eqref{six0}.  The operator $T_{\rho}$ is a parametrization of the Ornstein-Uhlenbeck operator.  The operator $T_{\rho}$ is not a semigroup, but it satisfies $T_{\rho_{1}}T_{\rho_{2}}=T_{\rho_{1}\rho_{2}}$, $\rho_{1},\rho_{2}\in[-1,1]$, by \eqref{six1.8} below.  We use definition \eqref{six0} since the usual Ornstein-Uhlenbeck operator is only defined for $\rho\in[0,1]$.  Let $\lambda>0$, $x\in\R$.  Recall that the Hermite polynomials of one variable are defined by the generating function
\begin{equation}\label{six1.9}
e^{\lambda x-\lambda^{2}/2}\equalscolon\sum_{\ell\in\N}\lambda^{\ell}h_{\ell}(x).
\end{equation}

Alternatively, one defines the polynomials $H_{\ell}(x)$ such that $h_{\ell}(x)=2^{-\ell/2}(\ell!)^{-1}H_{\ell}(x/\sqrt{2})$.  This convention is used in \cite{andrews99}, where the orthogonality properties of the Hermite polynomials are derived.

Note that $\int_{\R}h_{\ell}^{2}d\gamma_{1}=1/\ell!$, and $\{\sqrt{\ell!}\,h_{\ell}\}_{\ell\in\N}$ is an orthonormal basis of $L_{2}(\gamma_{1})$.  Recall also that $h_{0}(x)=1$ and $h_{1}(x)=x$.  Set $f(x)\colonequals e^{\lambda x-\lambda^{2}/2}$.  A routine computation shows that $T_{\rho}(f)(x)=e^{(\lambda\rho)x-(\lambda\rho)^{2}/2}$.  Indeed
\begin{flalign*}  
T_{\rho}(f)(x)
&= \int_{\R^{n}} e^{\lambda(x\rho+y\sqrt{1-\rho^{2}})-\lambda^{2}/2}d\gamma_{1}(y)
=\int_{\R^{n}} e^{(\lambda\rho)x+(\lambda\sqrt{1-\rho^{2}})y-\lambda^{2}/2-y^{2}/2}\frac{dy}{\sqrt{2\pi}}\\
&=e^{(\lambda\rho)x-\lambda^{2}/2}\int_{\R^{n}} e^{-\frac{1}{2}(y-\lambda\sqrt{1-\rho^{2}})^{2}+\lambda^{2}(1-\rho^{2})/2}\frac{dy}{\sqrt{2\pi}}
=e^{(\lambda\rho)x-\lambda^{2}/2+\lambda^{2}(1-\rho^{2})/2}\\
&=e^{(\lambda\rho)x-(\lambda\rho)^{2}/2}.
\end{flalign*}
Therefore, by \eqref{six1.9},
\begin{equation}\label{six1}
T_{\rho}f(x)=\sum_{\ell\in\N}\lambda^{\ell}\rho^{\ell}h_{\ell}(x).
\end{equation}
So, by linearity, $T_{\rho}h_{\ell}(x)=\rho^{\ell}h_{\ell}(x)$.

We now extend the above observations to higher dimensions.  Let $f\in L_{2}(\gamma_{n})$, so that $f=\sum_{\ell\in\N^{n}}a_{\ell}h_{\ell}\sqrt{\ell!}$, $a_{\ell}\in\R$, where $\ell=(\ell_{1},\ldots,\ell_{n})\in\N^{n}$ and $h_{\ell}(x)=\prod_{i=1}^{n}h_{\ell_{i}}(x_{i})$.  Write $\abs{\ell}\colonequals \ell_{1}+\cdots+\ell_{n}$ and $\ell!\colonequals (\ell_{1}!)\cdots(\ell_{n}!)$.  Then $T_{\rho}$ satisfies $T_{\rho}h_{\ell}=\rho^{\abs{\ell}}h_{\ell}$ and for $x\in\R^{n}$,
\begin{equation}\label{six1.8}
T_{\rho}f(x)=\sum_{\ell\in\N^{n}}\rho^{\abs{\ell}}\sqrt{\ell!}\,h_{\ell}(x)\left(\int_{\R^{n}}\sqrt{\ell!}\,h_{\ell}fd\gamma_{n}\right)
\end{equation}

Let $\Delta\colonequals\sum_{i=1}^{n}\partial^{2}/\partial x_{i}^{2}$, and define
\begin{equation}\label{six1.88}
L\colonequals-\Delta+\sum_{i=1}^{n}x_{i}\cdot\frac{\partial}{\partial x_{i}}.
\end{equation}
A well-known calculation shows the following equality, which we prove in the Appendix, Section \ref{secapp}.
\begin{flalign}
\label{six1.1}\frac{d}{d\rho}T_{\rho}f(x)
&=\rho^{-1}LT_{\rho}f(x)=\frac{1}{\rho}\left(\langle x,\nabla T_{\rho}f(x)\rangle-\Delta T_{\rho}f(x)\right)\\
&=\frac{1}{\sqrt{1-\rho^{2}}}\bigg[\left\langle x,\int_{\R^{n}}yf(x\rho+y\sqrt{1-\rho^{2}})d\gamma_{n}(y)\right\rangle\nonumber\\
&\label{six1.2}\qquad\qquad+\frac{\rho}{\sqrt{1-\rho^{2}}}
\int_{\R^{n}}\left(\sum_{i=1}^{n}(1-y_{i}^{2})\right)f(x\rho+y\sqrt{1-\rho^{2}})d\gamma_{n}(y)\bigg].
\end{flalign}

We say that $A\subset\R^{n}$ is a \textbf{cone} if $A$ is measurable and $\forall$ $t>0$, $t A=A$.

\begin{definition}\label{regulardef2}
A \textbf{simplicial conical partition} $\{A_{i}\}_{i=1}^{k}$ is a partition of $\R^{n}$ together with a simplex $S\subset\R^{k-1}$ with $0\leq k-1\leq n$ and a rotation $\sigma$ of $\R^{n}$ such that $0\in S$ and such that each facet $F_{i}$ of $\sigma(S\times\R^{n-k+1})$ generates a partition element, i.e. $A_{i}=\{t\sigma(F_{i}\times\R^{n-j})\colon t\in[0,\infty)\}$, $i\in\{1,\ldots,j+1\}$.  Let $k-1\leq n$ and let $\{z_{i}\}_{i=1}^{k}\subset\R^{n}$ be nonzero vectors that do not all lie in a $(k-1)$-dimensional hyperplane.  Define a partition such that, for $i\in\{1,\ldots,k\}$, $A_{i}\colonequals\{x\in \R^{n}\colon \langle x,z_{i}\rangle=\max_{j=1,\ldots,k}\langle x,z_{j}\rangle\}$.  Such a partition is called the simplicial conical partition induced by $\{z_{i}\}_{i=1}^{k}$.

If $\{A_{i}\}_{i=1}^{k}$ is a simplicial conical partition induced by the vectors $\{\int_{A_{i}}xd\gamma_{n}(x)\}_{i=1}^{k}$, then we say the partition is a \textbf{balanced conical partition}.  If $\{z_{i}\}_{i=1}^{k}\subset\R^{n}$ are the vertices of a $(k-1)$-dimensional regular simplex in $\R^{n}$ centered at the origin, then the partition induced by $\{z_{i}\}_{i=1}^{k}$ is called a \textbf{regular simplicial conical partition}.
\end{definition}

Let $f\in L_{2}(\gamma_{n})$.  By Plancherel and \eqref{six1.8}
\begin{equation}\label{six1.6}
\int_{\R^{n}} f T_{\rho}fd\gamma_{n}=\sum_{\ell\in\N^{n}}\rho^{\abs{\ell}}\abs{\int_{\R^{n}} f\sqrt{\ell!}h_{\ell}d\gamma_{n}}^{2}.
\end{equation}
Substituting \eqref{six1.6} into \eqref{six1.5} gives
\begin{equation}\label{six2}
\sum_{i=1}^{k}\int_{\R^{n}}1_{A_{i}}T_{\rho}(1_{A_{i}})d\gamma_{n}
=\sum_{i=1}^{k}\bigg[\gamma_{n}(A_{i})^{2}
+\rho\vnorm{\int_{A_{i}}xd\gamma_{n}(x)}_{\ell_{2}^{n}}^{2}
+\sum_{\substack{\,\ell\in\N^{n}\\\abs{\ell}\geq2}}\rho^{\abs{\ell}}\abs{\int_{A_{i}}\sqrt{\ell!}h_{\ell}d\gamma_{n}}^{2}\bigg].
\end{equation}
Taking the derivative $d/d\rho$ of \eqref{six2} at $\rho=0$, we get a quantity studied in \cite{khot09,khot11}.
\begin{equation}\label{six3.0}
\frac{d}{d\rho}\sum_{i=1}^{k}\int_{\R^{n}}1_{A_{i}}T_{\rho}(1_{A_{i}})d\gamma_{n}
=\sum_{i=1}^{k}\bigg[
\vnorm{\int_{A_{i}}xd\gamma_{n}(x)}_{\ell_{2}^{n}}^{2}
+\sum_{\substack{\,\ell\in\N^{n}\\\abs{\ell}\geq2}}\abs{\ell}\rho^{\abs{\ell}-1}\abs{\int_{A_{i}}\sqrt{\ell!}h_{\ell}d\gamma_{n}}^{2}\bigg].
\end{equation}
\begin{equation}\label{six3}
\left.\frac{d}{d\rho}\right|_{\rho=0}\sum_{i=1}^{k}\int_{\R^{n}}1_{A_{i}}T_{\rho}(1_{A_{i}})d\gamma_{n}
=\sum_{i=1}^{k}\vnorm{\int_{A_{i}}xd\gamma_{n}(x)}_{\ell_{2}^{n}}^{2}.
\end{equation}

\section{Noise Stability for Zero Correlation}

This section concerns noise stability at the endpoint $\rho=0$.  Specifically, we will investigate the quantity \eqref{six3}, which has already been studied in \cite{khot09,khot11}.  Using our understanding of \eqref{six3}, we will then be able to analyze the left side of \eqref{six3.0} when $\rho$ is small, using the equality \eqref{six3.0}.  Before beginning our discussion, we first need to consider partitions of $\R^{n}$ within the convex set defined in \eqref{blop}.  Definition \ref{d2def} provides the metric allowing an assertion that two partitions are close to each other, and Definition \ref{measuredef} allows us to discuss the Gaussian measure restricted to hypersurfaces.

\begin{definition}\label{Deltadef}
Let $H\colonequals\oplus_{i=1}^{k}L_{2}(\gamma_{n})$ and define
\begin{equation}\label{blop}
\Delta_{k}(\gamma_{n})\colonequals\{(f_{1},\ldots,f_{k})\in H\colon \forall\,1\leq i\leq k,0\leq f_{i}\leq 1,\textstyle\sum_{i=1}^{k}f_{i}=1\}.
\end{equation}
\end{definition}

\begin{definition}\label{dkedef}
Let $\epsilon\geq0$.  Define
$$
\Delta_{k}^{\epsilon}(\gamma_{n})\colonequals\{(f_{1},\ldots,f_{k})\in \Delta_{k}(\gamma_{n})\colon\frac{1}{k}-\epsilon\leq\int_{\R^{n}} f_{i}d\gamma_{n}\leq\frac{1}{k}+\epsilon\}.
$$
\end{definition}

\begin{definition}\label{d2def}
Define a metric $d_{2}$ on partitions $\{A_{i}\}_{i=1}^{k},\{C_{i}\}_{i=1}^{k}$ of $\R^{n}$ by the formula
$$
d_{2}(\{A_{i}\}_{i=1}^{k},\{C_{i}\}_{i=1}^{k})\colonequals\inf_{\substack{\sigma\in SO(n)\\\pi\,\mathrm{a}\,\mathrm{permutation}}}
\left(\sum_{i=1}^{k}\vnorm{1_{A_{i}}-1_{(\sigma C_{\pi(i)})}}_{L_{2}(\gamma_{n})}^{2}\right)^{1/2}.
$$
\end{definition}

\begin{definition}\label{measuredef}
Let $A\subset\R^{n}$, let $\mathcal{L}$ denote Lebesgue measure on $\R^{n}$, and define the distance $d(x,y)\colonequals\vnorm{x-y}_{\ell_{2}^{n}}$, $x,y\in\R^{n}$.  Denote $\delta_{A}$ as the measure $\mathcal{L}$ on $\R^{n}$ restricted to $A$.  That is, $\delta_{A}(B)\colonequals\liminf_{\delta\to0}\frac{1}{2\delta}\mathcal{L}\{y\in\R^{n}\colon\exists\,x\in A\cap B\,\,\mbox{with}\,\,d(x,y)<\delta\}$, $B\subset\R^{n}$.  Also, we denote $\gamma_{n}(\delta_{A})\colonequals\liminf_{\delta\to0}\frac{1}{2\delta}\gamma_{n}\{y\in\R^{n}\colon \exists\, x\in A\,\,\mbox{with}\,\, d(x,y)<\delta\}$.
\end{definition}

The next two lemmas are derived from \cite{khot09}.  Lemma \ref{lemma0.5} is a quantitative variant of Lemma \ref{lemma0}, and it will be further improved in Lemma \ref{lemma5} below.  In particular, Lemma \ref{lemma0.5} says that, if the first variation condition for achieving the optimum value of \eqref{flop} is nearly satisfied, then the partition is close to being simplicial.
\begin{lemma}\label{lemma0}
\cite[Lemma 3.3, Corollary 3.4]{khot09}  Let $n\geq2$ and let $\{B_{i}\}_{i=1}^{3}$ be a regular simplicial conical partition of $\R^{n}$.  Then $(1_{B_{1}},1_{B_{2}},1_{B_{3}})$ uniquely achieves the following supremum, up to orthogonal transformation.
\begin{equation}\label{flop}
\sup_{(f_{1},f_{2},f_{3})\in\Delta_{3}(\gamma_{n})}\sum_{i=1}^{3}\vnorm{\int_{\R^{n}}xf_{i}(x)d\gamma_{n}(x)}_{\ell_{2}^{n}}^{2}.
\end{equation}
\end{lemma}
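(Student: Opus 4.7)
\textbf{Proof plan for Lemma \ref{lemma0}.}

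The plan is to proceed in three phases: reduce from $\Delta_3(\gamma_n)$ to partitions by convexity, then show any optimal partition is a simplicial conical partition ``self-induced'' by its own centroids, and finally solve the resulting finite-dimensional problem in $\R^2$ to identify the regular simplex as the unique self-consistent optimizer.

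\textbf{Phase 1 (from functions to partitions).} The functional
$F(f_1,f_2,f_3) \colonequals \sum_{i=1}^{3}\vnorm{\int x f_i\,d\gamma_n}_{\ell_2^n}^{2}$
is a sum of squares of continuous linear functionals on the compact, convex set $\Delta_3(\gamma_n)$, hence convex. Thus the supremum is attained at extreme points of $\Delta_3(\gamma_n)$, and a standard bang-bang argument shows these are exactly triples $(1_{A_1},1_{A_2},1_{A_3})$ arising from measurable partitions $\{A_i\}_{i=1}^{3}$ of $\R^n$. So it suffices to maximize $V(\{A_i\}) \colonequals \sum_i \vnorm{z_i}^{2}$, where $z_i \colonequals \int_{A_i} x\, d\gamma_n(x)$, over partitions.

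\textbf{Phase 2 (self-consistency at the optimum).} Since $z_1+z_2+z_3=\int x\,d\gamma_n=0$, the $z_i$ span a subspace $V$ of dimension at most $2$. Define the Voronoi refinement $\Phi(\{A_i\})=\{\widetilde A_i\}$ by $\widetilde A_i \colonequals \{x:\langle x,z_i\rangle = \max_j \langle x,z_j\rangle\}$, and let $\widetilde z_i \colonequals \int_{\widetilde A_i} x\,d\gamma_n$. Then
\begin{equation*}
\sum_i \langle \widetilde z_i, z_i\rangle = \int_{\R^n} \max_j \langle x, z_j\rangle\, d\gamma_n \geq \sum_i \int_{A_i}\langle x,z_i\rangle\,d\gamma_n = \sum_i \vnorm{z_i}^2,
\end{equation*}
with equality iff $\{A_i\}=\{\widetilde A_i\}$ up to measure zero. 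Cauchy--Schwarz applied to the left side yields $\sum\vnorm{\widetilde z_i}^2 \geq \sum\vnorm{z_i}^2$, with equality iff $\widetilde z_i = z_i$ for all $i$. Hence any maximizer is (up to a null set) a simplicial conical partition induced by its own centroids.

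\textbf{Phase 3 (reduction to the plane and uniqueness).} At a maximizer, since $A_i$ is a cone whose axis of rotational symmetry must lie in $V=\mathrm{span}\{z_1,z_2,z_3\}$, the product structure $\gamma_n=\gamma_V\otimes \gamma_{V^\perp}$ lets us write $A_i = C_i \times V^\perp$ for $2$-dimensional cones $C_i\subset V$. Thus after a rotation (using $n\geq 2$) we may assume $V=\R^2\subset\R^n$, and we must classify self-consistent partitions of $\R^2$ into three cones. Parameterize $C_i$ by its axis direction $\theta_i$ and opening angle $\alpha_i>0$ with $\sum_i\alpha_i=2\pi$. A direct polar computation gives $z_i=\vnorm{z_i}(\cos\theta_i,\sin\theta_i)$ with $\vnorm{z_i}=\sin(\alpha_i/2)/\sqrt{2\pi}$. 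The Voronoi constraint forces the ray separating $C_i$ and $C_j$ to be perpendicular to $z_i-z_j$; combined with the equalities for $\vnorm{z_i}$, this couples the $\alpha_i$'s and $\theta_i$'s into a closed system. After expressing everything in terms of the $\alpha_i$, the objective becomes $\sum_i\vnorm{z_i}^2=(2\pi)^{-1}\sum_i \sin^2(\alpha_i/2)$ maximized subject to $\sum\alpha_i=2\pi$ and the Voronoi compatibility; the strict concavity of $\alpha\mapsto\sin^2(\alpha/2)$ on $(0,\pi)$ together with the compatibility constraint identifies $\alpha_1=\alpha_2=\alpha_3=2\pi/3$ as the unique (up to relabeling and rotation) solution, i.e.\ the regular simplicial conical partition.

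\textbf{Main obstacle.} Phases 1--2 are standard convexity and Voronoi arguments. The real work is Phase 3: rigorously implementing the ``self-consistency forces geometric regularity'' step. One must keep the axes $\theta_i$ and the opening angles $\alpha_i$ coupled through the Voronoi condition (which is not automatic from $\sum\alpha_i=2\pi$ alone) and then establish strict improvement under any symmetrization toward equal angles. Handling the nongeneric boundary cases (e.g.\ some $z_i=0$, or degenerate configurations where two $z_i$'s are parallel) must be done carefully to obtain the uniqueness statement ``up to orthogonal transformation''.
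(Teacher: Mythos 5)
The paper does not give its own proof of this lemma; it cites \cite[Lemma 3.3, Corollary 3.4]{khot09}. Your Phases 1 and 2 (convexity pushing the supremum to the extreme points of $\Delta_3(\gamma_n)$, and the Voronoi self-consistency argument via $\sum_i\langle\widetilde z_i,z_i\rangle=\int\max_j\langle x,z_j\rangle\,d\gamma_n$ followed by Cauchy--Schwarz) are correct and are essentially the standard route.

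Phase 3 has a genuine gap. First, the claim that $\alpha\mapsto\sin^2(\alpha/2)$ is strictly concave on $(0,\pi)$ is false: $\sin^2(\alpha/2)=\tfrac{1}{2}(1-\cos\alpha)$ has second derivative $\tfrac{1}{2}\cos\alpha$, which is positive on $(0,\pi/2)$. So the ``concavity $+$ compatibility'' shortcut does not identify $\alpha_i=2\pi/3$. Second, and more substantively, you never actually execute the Voronoi-compatibility analysis you invoke. To close the argument you need to write the perpendicularity conditions at each boundary ray explicitly. With $a_i=\alpha_i/2$, $\sum a_i=\pi$, and $z_i=\tfrac{\sin a_i}{\sqrt{2\pi}}(\cos\theta_i,\sin\theta_i)$, the condition that the ray separating $C_i$ from $C_j$ is orthogonal to $z_i-z_j$ reduces (after the trigonometric simplification $\cos 2a_1\cos(2a_1+a_2)+\sin 2a_1\sin(2a_1+a_2)=\cos a_2$, etc.) to $\sin 2a_i=\sin 2a_j$, i.e.\ $a_i=a_j$ or $a_i+a_j=\pi/2$. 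A short case analysis of the resulting system (three such alternatives together with $\sum a_i=\pi$) shows that the only non-degenerate solution is $a_1=a_2=a_3=\pi/3$, i.e.\ the regular simplicial conical partition; all other solutions force some $a_i\in\{0,\pi/2\}$ and collapse to two-piece partitions, which give the strictly smaller value $1/\pi<9/(8\pi)$. This case analysis, and the verification that the degenerate self-consistent configurations are suboptimal, is exactly what is missing; you flag it as ``the real work'' but do not do it. (A smaller gap you also do not address: the Voronoi refinement $\Phi(\{A_i\})$ in Phase 2 is ill-defined when two of the $z_i$ coincide or vanish; this needs to be ruled out at the optimum, e.g.\ by the two-piece comparison just mentioned.)
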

\begin{lemma}\label{lemma0.5}
Let $n\geq2$ and let $\{B_{i}\}_{i=1}^{3},\{C_{i}\}_{i=1}^{2}\subset\R^{n}$ be regular simplicial conical partition.  Let $\{A_{i}\}_{i=1}^{3}\subset\R^{n}$ be a simplicial conical partition.  Let $z_{i}\colonequals\int_{A_{i}}xd\gamma_{n}(x)$, and let $v_{ij}\in S^{n-1}\cap A_{i}\cap A_{j}\cap\mathrm{span}\{z_{i},z_{j}\}$.  If $\abs{\langle z_{i}-z_{j},v_{ij}\rangle}\leq\epsilon<10^{-16}$ $\forall$ $i,j\in\{1,2,3\}$, $i\neq j$, and if $d_{2}(\{A_{i}\}_{i=1}^{3},\{C_{1},C_{2},\emptyset\})>1/100$, then $d_{2}(\{A_{i}\}_{i=1}^{3},\{B_{i}\}_{i=1}^{3})\leq\sqrt{6\epsilon}$.
\end{lemma}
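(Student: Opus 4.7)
My plan is to reduce to a two-dimensional analysis using the conical structure, compute the constraint on the opening angles explicitly, and then use a local stability argument near the regular simplex. By Definition \ref{regulardef2}, the simplicial conical partition $\{A_i\}_{i=1}^{3}$ is generated by a planar triangle in some $2$-plane $V\subset\R^{n}$; since $d_{2}$ takes an infimum over $SO(n)$, I may assume each $A_{i}=S_{i}\times V^{\perp}$ with $S_{i}\subset V$ a planar sector. Gaussian rotational symmetry in $V^{\perp}$ forces $z_{i}=\int_{A_{i}}x\,d\gamma_{n}(x)\in V$, and since $v_{ij}\in\mathrm{span}\{z_{i},z_{j}\}\subset V$, the entire hypothesis lives in $V\cong\R^{2}$. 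Parametrize $S_{i}$ by its angular interval $[\alpha_{i},\alpha_{i+1}]$ and opening $\theta_{i}\colonequals\alpha_{i+1}-\alpha_{i}>0$, with $\alpha_{4}\colonequals\alpha_{1}+2\pi$, so that $\sum_{i}\theta_{i}=2\pi$; the regular partition $\{B_{i}\}$ corresponds to $\theta_{i}=2\pi/3$.

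A direct computation in polar coordinates gives $z_{i}=(2\pi)^{-1/2}\sin(\theta_{i}/2)\,u_{i}$, where $u_{i}$ is the unit bisector of $S_{i}$. Since $v_{i,i+1}$ is the unit vector along the shared boundary ray $\partial A_{i}\cap\partial A_{i+1}$, the angle from $u_{i}$ to $v_{i,i+1}$ is $\theta_{i}/2$ and from $u_{i+1}$ to $v_{i,i+1}$ is $-\theta_{i+1}/2$; hence
\begin{equation*}
\langle z_{i}-z_{i+1},v_{i,i+1}\rangle=\frac{1}{2\sqrt{2\pi}}\bigl(\sin\theta_{i}-\sin\theta_{i+1}\bigr).
\end{equation*}
The hypothesis therefore becomes $|\sin\theta_{i}-\sin\theta_{j}|\le 2\sqrt{2\pi}\,\epsilon$ for every distinct $i,j\in\{1,2,3\}$.

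The system $\sin\theta_{1}=\sin\theta_{2}=\sin\theta_{3}$ with $\sum\theta_{i}=2\pi$ has exactly two solution families: the balanced one $\theta_{i}=2\pi/3$ and degenerate ones in which some $\theta_{i}\to 0$ (collapse to a half-space pair). The hypothesis $d_{2}(\{A_{i}\},\{C_{1},C_{2},\emptyset\})>1/100$ excludes the latter quantitatively: I will argue that if $\min_{i}\theta_{i}$ is small, then $\{A_{i}\}$ is automatically $d_{2}$-close to $\{C_{1},C_{2},\emptyset\}$ by choosing an appropriate rotation in $SO(n)$, so the hypothesis gives a uniform lower bound on $\min_{i}\theta_{i}$. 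With this nondegeneracy, write $\theta_{i}=2\pi/3+\xi_{i}$ with $\sum\xi_{i}=0$; using $\cos(2\pi/3)=-1/2$ and the expansion $\sin\theta_{i}-\sin\theta_{j}=-\tfrac{1}{2}(\xi_{i}-\xi_{j})+O(\xi^{2})$, the smallness $\epsilon<10^{-16}$ validates the linearization and yields $\max_{i}|\xi_{i}|\le c\,\epsilon$. Finally, since a planar sector of opening $\theta$ has $\gamma_{n}$-measure $\theta/(2\pi)$, optimally rotating $\{B_{i}\}$ to align its boundary angles with those of $\{A_{i}\}$ makes each $\gamma_{n}(A_{i}\triangle B_{i})$ a linear combination of cumulative sums of $\xi_{j}$'s; summing $\|1_{A_{i}}-1_{B_{i}}\|_{L_{2}(\gamma_{n})}^{2}=\gamma_{n}(A_{i}\triangle B_{i})$ and tracking constants carefully yields $d_{2}^{2}\le 6\epsilon$.

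The main obstacle I expect is the quantitative nondegeneracy step: converting the hypothesis $d_{2}(\{A_{i}\},\{C_{1},C_{2},\emptyset\})>1/100$ into a uniform positive lower bound on $\min_{i}\theta_{i}$ that does not depend on the ambient dimension $n$, despite the infima over $SO(n)$ and permutations built into the $d_{2}$ metric. Once that lower bound is in hand, the passage from the angular bound $|\sin\theta_{i}-\sin\theta_{j}|\le 2\sqrt{2\pi}\,\epsilon$ to $|\xi_{i}|\le c\epsilon$ and then to $d_{2}\le\sqrt{6\epsilon}$ is a routine but careful calculation with the explicit constants produced above.
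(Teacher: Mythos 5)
Your route is essentially the paper's: both reduce to the planar sector angles $\theta_{i}$, and your identity $\langle z_{i}-z_{j},v_{ij}\rangle=\tfrac{1}{2\sqrt{2\pi}}(\sin\theta_{i}-\sin\theta_{j})$ is the same trigonometric fact the paper exploits (there written as $\vnormf{z_{i}-z_{j}}\cos\theta$ with a case split on which factor is small), followed by the same dichotomy between a near-regular configuration, handled by linearizing at $\theta_{i}=2\pi/3$, and a degenerate configuration excluded by the hypothesis $d_{2}(\{A_{i}\},\{C_{1},C_{2},\emptyset\})>1/100$. Your bookkeeping via $\sin\theta_{i}-\sin\theta_{j}$ is the cleaner of the two.

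There is, however, one genuine gap in the nondegeneracy step as you have formulated it. The standalone claim that small $\min_{i}\theta_{i}$ forces $\{A_{i}\}$ to be $d_{2}$-close to $\{C_{1},C_{2},\emptyset\}$ is false, and your classification of the solutions of $\sin\theta_{1}=\sin\theta_{2}=\sin\theta_{3}$, $\sum_{i}\theta_{i}=2\pi$ is incomplete if the $\theta_{i}$ are only assumed to lie in $(0,2\pi)$: the family $(\theta_{1},\theta_{2},\theta_{3})=(\delta,\delta,2\pi-2\delta)$ with $\delta\approx\epsilon$ satisfies $\absf{\sin\theta_{i}-\sin\theta_{j}}\leq 3\delta$ for every pair, is far from any pair of half-spaces (one cell has Gaussian measure $1-\delta/\pi$, so $d_{2}(\{A_{i}\},\{C_{1},C_{2},\emptyset\})^{2}\geq 1/2-O(\delta)$), and is far from the regular partition ($d_{2}^{2}\geq 2/3-O(\delta)$), so it would defeat both your exclusion mechanism and the conclusion. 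What rescues the argument is Definition \ref{regulardef2}: each cell of a simplicial conical partition is a Voronoi cell of three vectors (equivalently, the cone over a facet of a simplex containing the origin), hence an intersection of two half-spaces, hence a planar sector of opening angle at most $\pi$. Once you impose $\theta_{i}\leq\pi$ together with $\sum_{i}\theta_{i}=2\pi$, the configuration above is inadmissible, and $\theta_{1}\leq\delta$ forces $\theta_{2},\theta_{3}\in[\pi-\delta,\pi]$, so the only degeneration really is the half-space pair and your use of the $d_{2}$ hypothesis goes through. You need to invoke $\theta_{i}\leq\pi$ explicitly; without it the chain of implications breaks at exactly the step you flagged as the main obstacle.
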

\begin{proof}
For $i,j\in\{1,2,3\}$, let $0\leq\alpha_{i}\leq\pi$ such that $A_{i}$ is a cone with angle $\alpha_{i}$.  Let $\sigma\colon\R^{n}\to\R^{n}$ be a reflection that fixes $A_{i}\cap A_{j}$.  Without loss of generality, $\sigma(A_{j})\subset A_{i}$.  Then $z_{i}-z_{j}=\int_{A_{i}\setminus\sigma(A_{j})}xd\gamma_{n}(x)$ and $\vnorm{z_{i}-z_{j}}_{2}=\sin((\alpha_{i}-\alpha_{j})/2)/\sqrt{2\pi}$.  Let $0\leq\theta\leq\pi$ such that $\vnorm{z_{i}-z_{j}}_{2}\cos(\theta)=\langle z_{i}-z_{j},v_{ij}\rangle$.  Then either $\vnorm{z_{i}-z_{j}}_{2}\leq\sqrt{\epsilon/18\pi}$, or $\abs{\cos\theta}\leq\sqrt{18\pi\epsilon}$.  In the first case, $\alpha_{i}-\alpha_{j}\leq\sqrt{\epsilon}$.  So, to complete the proof, it suffices to show that the second case does not occur.  We find a contradiction by assuming that the second case occurs.

If $\abs{\cos\theta}\leq\sqrt{18\pi\epsilon}$, then since $\theta=(\alpha_{i}-\alpha_{j})/2$, we must have $\abs{\alpha_{i}-\alpha_{j}-\pi}<18\sqrt{\epsilon}$, so $\pi-18\sqrt{\epsilon}<\alpha_{i}-\alpha_{j}<\pi+18\sqrt{\epsilon}$, i.e. $\pi-18\sqrt{\epsilon}<\alpha_{i}\leq\pi$ and $\alpha_{j}\leq18\sqrt{\epsilon}$.  Then for $r\neq i,j$, $r\in\{1,2,3\}$, we have $\alpha_{r}=2\pi-\alpha_{i}-\alpha_{j}>2\pi-\pi-18\sqrt{\epsilon}>\pi-18\sqrt{\epsilon}$.  Since $\alpha_{i},\alpha_{j}>\pi-18\sqrt{\epsilon}$, we conclude that $d_{2}(\{A_{i}\}_{i=1}^{3},\{C_{1},C_{2},\emptyset\})<18\epsilon^{1/4}<1/100$, a contradiction.
\end{proof}

We require the ensuing explicit calculation from \cite{khot09} in Lemma \ref{lemma5} below.  This calculation is reduced to a computation of Lagrange Multipliers in \cite[Corollary 3.4]{khot09}.  For any $(f_{1},\ldots,f_{k})\in\Delta_{k}(\gamma_{n})$, define $\psi_{0}(f_{1},\ldots,f_{k})\colonequals\sum_{i=1}^{k}\vnorm{\int_{\R^{n}}xf_{i}(x)d\gamma_{n}(x)}_{\ell_{2}^{n}}^{2}$.
\begin{lemma}\label{lemma5.3}\cite[Corollary 3.4]{khot09}
$$
\sup_{(f_{1},f_{2})\in\Delta_{2}(\gamma_{n})}\psi_{0}(f_{1},f_{2})=\frac{1}{\pi},\quad
\sup_{(f_{1},f_{2},f_{3})\in\Delta_{3}(\gamma_{n})}\psi_{0}(f_{1},f_{2},f_{3})=\frac{9}{8\pi}.
$$
\end{lemma}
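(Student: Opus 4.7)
Each equality reduces to (i) an upper bound and (ii) an explicit evaluation of $\psi_{0}$ at a specific extremizer. For $k=3$ the heavy lifting is already done by Lemma~\ref{lemma0}, which pins down the extremizer as a regular simplicial conical partition; only a planar Gaussian sector integral remains. For $k=2$ the upper bound is a short duality argument.

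\textbf{Case $k=2$.} The constraint $f_{2}=1-f_{1}$ combined with $\int_{\R^{n}}x\,d\gamma_{n}(x)=0$ yields
$$\psi_{0}(f_{1},f_{2})=2\vnorm{\textstyle\int_{\R^{n}}xf_{1}(x)\,d\gamma_{n}(x)}_{\ell_{2}^{n}}^{2}.$$
I would dualize: $\vnorm{\int xf_{1}d\gamma_{n}}_{\ell_{2}^{n}}=\sup_{e\in S^{n-1}}\int_{\R^{n}}\langle e,x\rangle f_{1}(x)\,d\gamma_{n}(x)$. For each unit vector $e$, since $0\leq f_{1}\leq 1$,
$$\int_{\R^{n}}\langle e,x\rangle f_{1}\,d\gamma_{n}\leq\int_{\R^{n}}\max(\langle e,x\rangle,0)\,d\gamma_{n}=\int_{0}^{\infty}\frac{te^{-t^{2}/2}}{\sqrt{2\pi}}\,dt=\frac{1}{\sqrt{2\pi}},$$
with pointwise equality when $f_{1}=1_{\{\langle e,x\rangle>0\}}$. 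Hence $\psi_{0}\leq 2\cdot(2\pi)^{-1}=1/\pi$, attained by any half-space partition.

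\textbf{Case $k=3$.} By Lemma~\ref{lemma0} the supremum is attained (uniquely up to orthogonal transformation) at $(1_{B_{1}},1_{B_{2}},1_{B_{3}})$ for a regular simplicial conical partition $\{B_{i}\}_{i=1}^{3}$, so I only need to evaluate $\psi_{0}$ at one such partition. Fix a two-dimensional plane $V\subset\R^{n}$ and let $B_{1},B_{2},B_{3}$ be the products of three planar $2\pi/3$-sectors of $V$ (rotated by $2\pi/3$ from one another about the origin) with $V^{\perp}$. Fubini kills the $V^{\perp}$-components of $\int_{B_{i}}x\,d\gamma_{n}$, and rotational symmetry in $V$ equates the three norms. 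Taking $B_{1}$ to be the sector centered on the positive $x_{1}$-axis and passing to polar coordinates on $V$,
$$\int_{B_{1}}x\,d\gamma_{n}=\Big(\frac{1}{2\pi}\int_{0}^{\infty}r^{2}e^{-r^{2}/2}\,dr\cdot\int_{-\pi/3}^{\pi/3}\cos\theta\,d\theta,\,0,\ldots,0\Big)=\Big(\frac{\sqrt{3}}{2\sqrt{2\pi}},0,\ldots,0\Big),$$
using $\int_{0}^{\infty}r^{2}e^{-r^{2}/2}\,dr=\sqrt{\pi/2}$ and $\int_{-\pi/3}^{\pi/3}\cos\theta\,d\theta=\sqrt{3}$. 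Thus $\vnorm{\int_{B_{1}}x\,d\gamma_{n}}_{\ell_{2}^{n}}^{2}=3/(8\pi)$ and $\psi_{0}=3\cdot 3/(8\pi)=9/(8\pi)$.

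\textbf{Main obstacle.} There is nothing substantive to overcome beyond what Lemma~\ref{lemma0} already provides; the remaining work is routine Gaussian sector bookkeeping. The only point that warrants minor care is matching the $(2\pi)^{-n/2}$ normalization of $d\gamma_{n}$ with the polar volume element when reducing to dimension two.
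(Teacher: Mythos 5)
Your proposal is correct, but it is worth noting that the paper does not prove this lemma at all: it simply cites \cite[Corollary 3.4]{khot09}, so you have supplied an argument where the paper supplies a reference. Your $k=2$ case is a clean, self-contained duality argument ($\psi_{0}(f_{1},f_{2})=2\vnorm{\int xf_{1}d\gamma_{n}}_{\ell_{2}^{n}}^{2}$, then bound $\int\langle e,x\rangle f_{1}d\gamma_{n}$ by the positive part of a standard Gaussian), and it matches the inequality the paper itself uses inside the proof of Lemma \ref{lemma5} to show $\vnorm{z_{p}}_{\ell_{2}^{n}}^{2}\leq1/(2\pi)$. Your $k=3$ case is only as self-contained as Lemma \ref{lemma0}, which identifies the extremizer; but since Lemma \ref{lemma0} and Lemma \ref{lemma5.3} are both imported from the same source (\cite{khot09}, Lemma 3.3 and Corollary 3.4), deriving the numerical value from the structural statement plus an explicit sector integral is a legitimate and arguably more informative route than the bare citation — and your arithmetic checks out: $\vnorm{\int_{B_{1}}x\,d\gamma_{n}}_{\ell_{2}^{n}}^{2}=3/(8\pi)$ is consistent with the value $3\sqrt{2}/(4\sqrt{\pi})$ for $\vnorm{z_{i}-z_{j}}$ that the paper uses later in the proof of Theorem \ref{thm1}. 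The only thing you gain nothing on is uniqueness of the extremizer, which your argument (like the lemma statement) does not need.
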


The following Lemma is a quantitative improvement of Lemmas \ref{lemma0} and \ref{lemma0.5}.  Combining Lemma \ref{lemma5} with \eqref{six3} will show that an optimizer of $(d/d\rho)\sum_{i=1}^{k}\int_{\R^{n}}1_{A_{i}}T_{\rho}1_{A_{i}}d\gamma_{n}$ is almost simplicial conical for small $\rho>0$.

\begin{lemma}\label{lemma5}
Let $\epsilon>0$, $n\geq2$.  Let $\{A_{i}\}_{i=1}^{3}$ be a partition of $\R^{n}$, and let $\{B_{i}\}_{i=1}^{3}$ be a regular simplicial conical partition of $\R^{n}$.  Assume that $\epsilon<1/100$ and
\begin{equation}\label{three20}
\psi_{0}(1_{A_{1}},1_{A_{2}},1_{A_{3}})>\sup_{(f_{1},f_{2},f_{3})\in\Delta_{3}(\gamma_{n})}\psi_{0}(f_{1},f_{2},f_{3})-\epsilon.
\end{equation}
Then
\begin{equation}\label{three24}
d_{2}(\{A_{i}\}_{i=1}^{3},\{B_{i}\}_{i=1}^{3})\leq6\epsilon^{1/8}.
\end{equation}
\end{lemma}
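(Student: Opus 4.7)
The plan is to reduce the lemma to the qualitative Lemma \ref{lemma0.5} via a Voronoi reconstruction.  Set $z_{i}\colonequals\int_{A_{i}}x\,d\gamma_{n}(x)$ and let $\{\tilde A_{i}\}_{i=1}^{3}$ denote the simplicial conical Voronoi partition induced by $\{z_{i}\}_{i=1}^{3}$, with moment vectors $\tilde z_{i}\colonequals\int_{\tilde A_{i}}x\,d\gamma_{n}(x)$.  A rearrangement argument provides the initial reduction: writing $\psi_{0}(1_{A_{1}},1_{A_{2}},1_{A_{3}})=\int\langle z_{I(x)},x\rangle\,d\gamma_{n}(x)$ where $I(x)$ labels the cell of $A$ containing $x$, this is dominated pointwise by $\int\max_{i}\langle z_{i},x\rangle\,d\gamma_{n}(x)=\sum_{i}\langle z_{i},\tilde z_{i}\rangle$.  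Combining with Cauchy-Schwarz and the universal bound $\psi_{0}(\tilde A)\leq 9/(8\pi)$ from Lemma \ref{lemma5.3} yields $\psi_{0}(A)\leq\psi_{0}(\tilde A)$, and expanding $\sum\|z_{i}-\tilde z_{i}\|^{2}$ produces
$$\sum_{i=1}^{3}\|z_{i}-\tilde z_{i}\|_{\ell_{2}^{n}}^{2}\leq\psi_{0}(\tilde A)-\psi_{0}(A)\leq\epsilon.$$

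I next verify the hypotheses of Lemma \ref{lemma0.5} for $\tilde A$.  Since $\langle z_{i}-z_{j},v_{ij}\rangle=0$ on the shared boundary of $\tilde A_{i}$ and $\tilde A_{j}$ by the Voronoi construction, I obtain $|\langle\tilde z_{i}-\tilde z_{j},v_{ij}\rangle|\leq\|\tilde z_{i}-z_{i}\|+\|\tilde z_{j}-z_{j}\|\leq 2\sqrt{\epsilon}$, so the first-variation input is controlled.  For the non-degeneracy condition I use the duality bound $\|\int xf\,d\gamma_{n}\|_{\ell_{2}^{n}}\leq\|f\|_{L^{2}(\gamma_{n})}$ (obtained by pairing against unit vectors): if $d_{2}(\tilde A,\{C_{1},C_{2},\emptyset\})<1/100$, then $\gamma_{n}(\tilde A_{3})<10^{-4}$ and $\|\tilde z_{3}\|<10^{-2}$, so applying Lemma \ref{lemma5.3} to the regrouped partition $(1_{\tilde A_{1}},1_{\tilde A_{2}\cup\tilde A_{3}})\in\Delta_{2}(\gamma_{n})$ yields $\psi_{0}(\tilde A)\leq 1/\pi+O(\|\tilde z_{3}\|)$, contradicting $\psi_{0}(\tilde A)\geq 9/(8\pi)-\epsilon=1/\pi+1/(8\pi)-\epsilon$ whenever $\epsilon<1/100$.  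Lemma \ref{lemma0.5} then supplies $d_{2}(\tilde A,\{B_{i}\})\leq\sqrt{12}\,\epsilon^{1/4}$ provided $2\sqrt{\epsilon}<10^{-16}$.

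The remaining task is to bound $d_{2}(A,\tilde A)$.  The identity
$$\sum_{i\neq j}\int_{\tilde A_{i}\cap A_{j}}\langle z_{i}-z_{j},x\rangle\,d\gamma_{n}(x)=\sum_{i}\langle z_{i},\tilde z_{i}-z_{i}\rangle\leq\epsilon/2$$
has nonnegative summands.  Markov's inequality applied to the integrand on $\tilde A_{i}\cap A_{j}$, combined with the one-dimensional estimate $\gamma_{n}(\{0\leq\langle z_{i}-z_{j},x\rangle<\delta\})\leq\delta/(\sqrt{2\pi}\,\|z_{i}-z_{j}\|)$ and optimization over $\delta$, yields $\gamma_{n}(\tilde A_{i}\cap A_{j})=O(\sqrt{\epsilon/\|z_{i}-z_{j}\|})$; the non-degeneracy established above makes $\|z_{i}-z_{j}\|$ bounded below by an absolute constant, so $d_{2}(A,\tilde A)=O(\epsilon^{1/4})$.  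By the triangle inequality, $d_{2}(A,\{B_{i}\})=O(\epsilon^{1/4})\leq 6\epsilon^{1/8}$ for $\epsilon$ below an explicit threshold; above that threshold but still below $1/100$, the target $6\epsilon^{1/8}$ already exceeds the trivial estimate $\sqrt{2}\geq d_{2}$, so the conclusion holds automatically.  I expect the Markov-plus-strip step to be the principal obstacle, since it demands a uniform lower bound on $\|z_{i}-z_{j}\|$ and therefore forces the qualitative rigidity of Lemma \ref{lemma0} to be fed back into the quantitative argument.
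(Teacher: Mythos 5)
Your proposal is correct in outline and follows the same architecture as the paper's proof: reconstruct the Voronoi partition $\{\widetilde{A_{i}}\}$ induced by the moment vectors $z_{i}=\int_{A_{i}}x\,d\gamma_{n}$, show $\{A_{i}\}$ is $O(\epsilon^{1/4})$-close to it in $d_{2}$, feed the perturbed first-variation condition into Lemma \ref{lemma0.5}, and finish by the triangle inequality. The interesting difference is in the quantitative core. The paper bounds the misplaced mass $\delta=\gamma_{n}(\{x\colon\langle z_{i}-z_{j},x\rangle\leq0\}\cap A_{i})$ directly, via the rearrangement computation \eqref{three51} showing that reassigning that mass gains at least $\delta^{2}/3$ in $\psi_{0}$, whence $\delta<\sqrt{3\epsilon}$. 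You instead exploit the identity $\sum_{i}\vnorm{z_{i}-\widetilde{z_{i}}}^{2}=\psi_{0}(A)+\psi_{0}(\widetilde{A})-2\sum_{i}\langle z_{i},\widetilde{z_{i}}\rangle\leq\psi_{0}(\widetilde{A})-\psi_{0}(A)\leq\epsilon$, which is clean and actually gives a sharper $O(\sqrt{\epsilon})$ control on the moment vectors than the paper's $O(\epsilon^{1/4})$; you then recover the measure bound by Markov applied to the nonnegative defect $\sum_{i\neq j}\int_{\widetilde{A_{i}}\cap A_{j}}\langle z_{i}-z_{j},x\rangle\,d\gamma_{n}\leq\epsilon/2$ together with the Gaussian strip estimate. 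Both routes land at the same exponents. The one step you should spell out is the uniform lower bound $\vnorm{z_{i}-z_{j}}_{\ell_{2}^{n}}\gtrsim1$ that your Markov step requires: it does not follow from the $d_{2}$-non-degeneracy of $\widetilde{A}$ alone. The paper gets it from two preliminary reductions — first $\langle z_{i},z_{j}\rangle<0$ (else merge two cells and contradict \eqref{three20} via Lemma \ref{lemma5.3}), then $\max_{p\in\{i,j\}}\vnorm{z_{p}}^{2}\geq1/16$ (else the half-space bound $\vnorm{z_{p}}^{2}\leq1/(2\pi)$ forces $\psi_{0}(A)\leq1/\pi$) — which together give $\vnorm{z_{i}-z_{j}}^{2}\geq\max_{p}\vnorm{z_{p}}^{2}\geq1/16$. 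Your regrouping-into-a-two-partition argument is exactly the right tool for both reductions, so this is a fillable omission rather than a flaw; with it supplied, your proof goes through.
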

\begin{proof}
Assume that \eqref{three20} holds.  For $i\in\{1,2,3\}$, let $z_{i}\colonequals\int_{A_{i}}xd\gamma_{n}(x)$, $w_{i}\colonequals\int_{B_{i}}xd\gamma_{n}(x)$.  We may assume that, for all $i,j\in\{1,2,3\}$ with $i\neq j$, $\langle z_{i},z_{j}\rangle<0$.  To see this, we argue by contradiction.  Suppose there exist $i,j\in\{1,2,3\}$, $i\neq j$ with $\langle z_{i},z_{j}\rangle\geq0$.  For $p\in\{1,2,3\}$, $p\neq i,j$, let $A_{p}''\colonequals A_{p}$, let $A_{i}''\colonequals A_{i}\cup A_{j}$, and let $A_{j}''\colonequals\emptyset$.  For $p\in\{1,2,3\}$, let $z_{p}''\colonequals\int_{A_{p}''}xd\gamma_{n}(x)$.  Then
$$
\sum_{p=1}^{3}\vnorm{z_{p}''}_{\ell_{2}^{n}}^{2}-\sum_{p=1}^{3}\vnorm{z_{p}}_{\ell_{2}^{n}}^{2}
=\vnorm{z_{i}+z_{j}}_{\ell_{2}^{n}}^{2}-\vnorm{z_{i}}_{\ell_{2}^{n}}^{2}-\vnorm{z_{j}}_{\ell_{2}^{n}}^{2}
\geq0.
$$
Rewriting this inequality using the definition of $\psi_{0}$,
\begin{equation}\label{three25}
\psi_{0}(1_{A_{1}},1_{A_{2}},1_{A_{3}})\leq\psi_{0}(1_{A_{1}''},1_{A_{2}''},1_{A_{3}''}).
\end{equation}
Since $\{A_{p}''\}_{p=1}^{3}$ is a partition of $\R^{n}$ with at most two nonempty elements, Lemma \ref{lemma5.3} says
\begin{equation}\label{three26}
\bigg(\sup_{(f_{1},f_{2},f_{3})\in\Delta_{3}(\gamma_{n})}\psi_{0}(f_{1},f_{2},f_{3})\bigg)-\psi_{0}(1_{A_{1}''},1_{A_{2}''},1_{A_{3}''})
\geq\frac{1}{8\pi}>10^{-2}.
\end{equation}
Combining \eqref{three25} and \eqref{three26} contradicts \eqref{three20}.  Therefore, $\langle z_{i},z_{j}\rangle<0$ for all $i,j\in\{1,2,3\}$.

We now claim that, for each pair $i,j\in\{1,2,3\}$ with $i\neq j$, we have
\begin{equation}\label{three27}
\max_{p\in\{i,j\}}\vnorm{z_{p}}_{\ell_{2}^{n}}^{2}\geq1/16.
\end{equation}
We again argue by contradiction.  Suppose there exist $i,j\in\{1,2,3\}$ with $i\neq j$ and $\max_{p\in\{i,j\}}\vnorm{z_{p}}_{\ell_{2}^{n}}^{2}<1/16$.  Let $p\in\{1,2,3\}$, $p\neq i,j$.  Then $\vnorm{z_{p}}_{\ell_{2}^{n}}^{2}\leq1/(2\pi)$ with equality if and only if $1_{A_{p}}$ is a half-space whose boundary contains the origin of $\R^{n}$.  This follows immediately from rearrangement.  Observe, if $z_{p}\neq0$,
\begin{flalign*}
\vnorm{z_{p}}_{\ell_{2}^{n}}^{2}
&=\bigg\langle z_{p},\int_{A_{p}}xd\gamma_{n}(x)\bigg\rangle
\leq\bigg\langle z_{p},\int_{A_{p}\cap\{x\colon\langle x,z_{p}\rangle\geq0\}}xd\gamma_{n}(x)\bigg\rangle
\leq\bigg\langle z_{p},\int_{\{x\colon\langle x,z_{p}\rangle\geq0\}}xd\gamma_{n}(x)\bigg\rangle\\
&=\bigg\langle\int_{A_{p}}xd\gamma_{n}(x),\int_{\{x\colon\langle x,z_{p}\rangle\geq0\}}xd\gamma_{n}(x)\bigg\rangle
\leq\bigg|\bigg|\int_{\{x\colon\langle x,z_{p}\rangle\geq0\}}xd\gamma_{n}(x)\bigg|\bigg|_{\ell_{2}^{n}}^{2}=\frac{1}{2\pi}.
\end{flalign*}

Therefore,
$$\psi_{0}(1_{A_{1}},1_{A_{2}},1_{A_{3}})\leq1/8+1/(2\pi)\leq1/\pi.$$
This inequality contradicts \eqref{three20} as in \eqref{three26}, since $\sup_{(f_{1},f_{2},f_{3})\in\Delta_{k}(\gamma_{n})}\psi_{0}(f_{1},\ldots,f_{k})=9/(8\pi)$, using Lemma \ref{lemma5.3}.  We conclude that \eqref{three27} holds.

Define $\delta$ such that
\begin{equation}\label{three50}
\delta\colonequals\max_{i,j\in\{1,2,3\},i\neq j}\gamma_{n}(\{x\in\R^{n}\colon\langle z_{i}-z_{j},x\rangle\leq0\}\cap A_{i}).
\end{equation}
Fix $i,j\in\{1,2,3\}$ such that $\delta=\gamma_{n}(\{x\in\R^{n}\colon\langle z_{i}-z_{j},x\rangle\leq0\}\cap A_{i})$.  We want to find a bound on $\delta$.  Let $0<h$ such that $\int_{0}^{h}d\gamma_{1}=\delta$.  Now, define $\{A_{r}'\}_{r=1}^{3}$ such that $A_{p}'=A_{p}$ for $p\neq i,j$, $A_{i}'=A_{i}\setminus\left(A_{i}\cap\{x\in\R^{n}\colon\langle z_{i}-z_{j},x\rangle\leq0\}\right)$ and $A_{j}'=A_{j}\cup\left(A_{i}\cap\{x\in\R^{n}\colon\langle z_{i}-z_{j},x\rangle\leq0\}\right)$.  Let $z_{p}'\colonequals\int_{A_{p}'}xd\gamma_{n}(x)$, $p=1,2,3$.  Then
\begin{equation}\label{three51}
\begin{aligned}
&\sum_{p=1}^{3}\vnorm{z_{p}'}_{\ell_{2}^{n}}^{2}-\sum_{p=1}^{3}\vnorm{z_{p}}_{\ell_{2}^{n}}^{2}\\
&\qquad=2\bigg\langle\int_{\{y\colon\langle z_{i}-z_{j},y\rangle\leq0\}\cap A_{i}}yd\gamma_{n}(y),z_{j}-z_{i}\bigg\rangle
+2\bigg\|\int_{\{y\colon\langle z_{i}-z_{j},y\rangle\leq0\}\cap A_{i}}yd\gamma_{n}(y)\bigg\|_{\ell_{2}^{n}}^{2}\\
&\qquad\geq2\bigg\langle\int_{\{y\colon-h\leq\langle z_{i}-z_{j},y\rangle\leq0\}}yd\gamma_{n}(y),z_{j}-z_{i}\bigg\rangle
=2\vnorm{z_{i}-z_{j}}_{\ell_{2}^{n}}\int_{0}^{h}yd\gamma_{1}(y)
\stackrel{\eqref{three27}}{>}\delta^{2}/3.
\end{aligned}
\end{equation}
Here we used rearrangement and also the inequality $\vnorm{z_{i}-z_{j}}_{\ell_{2}^{n}}>(\max_{p\in\{i,j\}}\vnorm{z_{p}}_{\ell_{2}^{n}}^{2})^{1/2}$, which itself uses $\langle z_{i},z_{j}\rangle<0$.

By \eqref{three20} and \eqref{three51}, $\delta^{2}<3\epsilon$, i.e.
\begin{equation}\label{three21}
\delta<\sqrt{3\epsilon}.
\end{equation}

Now, for $p\in\{1,2,3\}$, let $\widetilde{A_{p}}\colonequals\{x\in\R^{n}\colon\langle x,z_{p}\rangle=\max_{j=1,2,3}\langle x,z_{j}\rangle\}$ and let $\widetilde{z_{p}}\colonequals\int_{\widetilde{A_{p}}}xd\gamma_{n}(x)$.  By \eqref{three21} and \eqref{three50},
\begin{equation}\label{three22}
d_{2}(\{A_{i}\}_{i=1}^{3},\{\widetilde{A_{i}}\}_{i=1}^{3})\leq3\sqrt{2}\,\epsilon^{1/4}.
\end{equation}
For $p\in\{1,2,3\}$, let $y_{p}\colonequals\widetilde{z_{p}}-z_{p}\in\R^{n}$, so that $\vnorm{y_{p}}_{2}\leq3\sqrt{2}\,\epsilon^{1/4}$ by \eqref{three22} and Hilbert space duality.  Let $x\in\R^{n}$.  Then for $i,j\in\{1,2,3\}$, $i\neq j$,
\begin{equation}\label{three23}
\langle \widetilde{z_{i}}-\widetilde{z_{j}},x\rangle=
\langle z_{i}-z_{j},x\rangle+\langle y_{i}-y_{j},x\rangle.
\end{equation}

For $i,j\in\{1,2,3\}$, $i\neq j$, let $v_{ij}=S^{n-1}\cap \widetilde{A_{i}}\cap\widetilde{A_{j}}\cap\mathrm{span}\{\widetilde{z_{i}}\}_{i=1}^{3}$.  By definition of $v_{ij}$ and $\{\widetilde{A_{i}}\}_{i=1}^{3}$, $\langle z_{i}-z_{j},v_{ij}\rangle=0$.  So, by \eqref{three23}, $\abs{\langle \widetilde{z_{i}}-\widetilde{z_{j}},v_{ij}\rangle}\leq3\sqrt{2}\,\epsilon^{1/4}$, implying that $d_{2}(\{\widetilde{A_{i}}\}_{i=1}^{3},\{B_{i}\}_{i=1}^{3})\leq3\cdot2^{3/4}\epsilon^{1/8}$, by Lemma \ref{lemma0.5}.  This inequality together with \eqref{three22} and the triangle inequality for $d_{2}$ prove \eqref{three24}.
\end{proof}

\section{The First Variation}\label{secvar}

Recall \eqref{six1.88}.  The following existence argument which uses convexity is a variant of \cite[Lemma 3.1]{khot09} and \cite[Lemma 2.1]{khot11}.
\begin{lemma}[\textbf{First Variation}]\label{lemma1}
Let $\rho\in(0,1)$.  Then $\exists$ a partition $\{A_{i}\}_{i=1}^{k}$ of $\R^{n}$ such that
\begin{equation}\label{three0}
\sum_{i=1}^{k}\int_{\R^{n}}1_{A_{i}}\frac{d}{d\rho}T_{\rho}1_{A_{i}}d\gamma_{n}
=\sup_{(f_{1},\ldots,f_{k})\in\Delta_{k}(\gamma_{n})}\sum_{i=1}^{k}\int_{\R^{n}} f_{i}\frac{d}{d\rho}T_{\rho}f_{i}d\gamma_{n}
\end{equation}
Also, for each $i\in\{1,\ldots,k\}$, the following containment holds, less sets of $\gamma_{n}$ measure zero:
\begin{equation}\label{three1}
A_{i}\supseteq\{x\in\R^{n}\colon LT_{\rho}1_{A_{i}}(x)>LT_{\rho}1_{A_{j}}(x),\forall\,j\neq i,j\in\{1,\ldots,k\}\}.
\end{equation}
\end{lemma}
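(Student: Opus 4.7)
The plan is the standard convexity argument of \cite{khot09, khot11}. Using \eqref{six1.1}, the functional to be maximized takes the form
\begin{equation*}
\Phi(f_1,\ldots,f_k) \colonequals \sum_{i=1}^{k}\int_{\R^n} f_i\tfrac{d}{d\rho}T_\rho f_i\,d\gamma_n = \tfrac{1}{\rho}\sum_{i=1}^{k}\langle f_i, L T_\rho f_i\rangle_{L_2(\gamma_n)}.
\end{equation*}
By \eqref{six1.8}, $LT_\rho$ is self-adjoint with Hermite eigenvalues $\{\abs{\ell}\rho^{\abs{\ell}}\}_{\ell\in\N^n}$, all nonnegative for $\rho>0$, so each $\langle f_i, LT_\rho f_i\rangle$ is convex in $f_i$; hence $\Phi$ is convex on $\Delta_k(\gamma_n)$.

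For existence, $\Delta_k(\gamma_n) \subset H\colonequals\oplus_{i=1}^{k}L_2(\gamma_n)$ is bounded, convex, and norm-closed, and is therefore weakly compact. For $\rho\in(0,1)$ the eigenvalues $\abs{\ell}\rho^{\abs{\ell}}$ tend to $0$ as $\abs{\ell}\to\infty$, so $LT_\rho$ is a compact operator on $L_2(\gamma_n)$. If $f^{(m)}\rightharpoonup f$ weakly in $H$, then $LT_\rho f_i^{(m)}\to LT_\rho f_i$ in norm, and splitting
\begin{equation*}
\langle f_i^{(m)}, LT_\rho f_i^{(m)}\rangle - \langle f_i, LT_\rho f_i\rangle = \langle f_i^{(m)}-f_i, LT_\rho f_i\rangle + \langle f_i, LT_\rho(f_i^{(m)}-f_i)\rangle + \langle f_i^{(m)}-f_i, LT_\rho(f_i^{(m)}-f_i)\rangle
\end{equation*}
shows $\Phi(f^{(m)})\to\Phi(f)$, so $\Phi$ is weakly continuous on $\Delta_k(\gamma_n)$ and attains its supremum there. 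By Bauer's maximum principle (a convex upper semi-continuous function on a compact convex set attains its maximum at an extreme point), the supremum is attained at an extreme point of $\Delta_k(\gamma_n)$. The extreme points of $\Delta_k(\gamma_n)$ are precisely the indicator tuples $(1_{A_1},\ldots,1_{A_k})$ of measurable partitions: if some $f_i$ takes a value in $(0,1)$ on a set $E$ of positive $\gamma_n$-measure, then some $f_j$ is also in $(0,1)$ on a subset $E'\subset E$, and shifting small mass between $f_i$ and $f_j$ on $E'$ realizes $f$ as a nontrivial convex combination in $\Delta_k(\gamma_n)$. This yields \eqref{three0}.

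For \eqref{three1}, fix $i\neq j$ and a measurable $E\subset A_j$, and consider swapping $E$ from $A_j$ into $A_i$. Using self-adjointness of $LT_\rho$, direct expansion of the quadratic form gives
\begin{equation*}
\Phi(\mathrm{new})-\Phi(\mathrm{old}) = \frac{2}{\rho}\int_{E}\bigl(LT_\rho 1_{A_i}-LT_\rho 1_{A_j}\bigr)\,d\gamma_n + \frac{2}{\rho}\int_{E} LT_\rho 1_E\,d\gamma_n.
\end{equation*}
The second term is $\geq 0$ by nonnegativity of the spectrum of $LT_\rho$, so optimality of $\{A_i\}_{i=1}^k$ forces $\int_{E}(LT_\rho 1_{A_j}-LT_\rho 1_{A_i})\,d\gamma_n\geq 0$ for every measurable $E\subset A_j$. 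Taking $E=A_j\cap\{LT_\rho 1_{A_i}>LT_\rho 1_{A_j}\}$ shows this set has $\gamma_n$-measure zero; running this over all $j\neq i$ is \eqref{three1}. The only delicate point is the compactness of $LT_\rho$ underlying the weak continuity of $\Phi$; this is precisely where the restriction $\rho<1$ enters, and everything else is a routine application of Bauer's principle plus the exchange argument above.
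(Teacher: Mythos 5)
Your proof is correct and follows the same skeleton as the paper's (weak compactness of $\Delta_{k}(\gamma_{n})$, weak continuity and convexity of the functional, attainment at an extreme point, first variation for \eqref{three1}), but you take two slightly different technical routes that are worth noting. For weak continuity you observe that $LT_{\rho}$ is a compact self-adjoint operator because its Hermite eigenvalues $\abs{\ell}\rho^{\abs{\ell}}\to 0$, whereas the paper argues from the expansion \eqref{six2} that $\psi_{\rho}$ is a uniformly convergent sum of weakly continuous terms; these are two phrasings of the same fact. The more substantive divergence is in the proof of \eqref{three1}: the paper normalizes $A_{i}$ via the Lebesgue density theorem, uses continuity of $x\mapsto LT_{\rho}1_{A_{j}}(x)$ to find a ball on which the strict inequality persists, and perturbs with $\phi=1_{B(y,r)\cap A_{j}}$, while you swap an arbitrary measurable $E\subset A_{j}$ into $A_{i}$, expand the quadratic form exactly to get $\frac{2}{\rho}\int_{E}LT_{\rho}(1_{A_{i}}-1_{A_{j}})\,d\gamma_{n}+\frac{2}{\rho}\int_{E}LT_{\rho}1_{E}\,d\gamma_{n}$, and use $\int_{E}LT_{\rho}1_{E}\,d\gamma_{n}\geq 0$ (positive semidefiniteness for $\rho>0$) to absorb the self-interaction term. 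Your version is cleaner: it dispenses with the density-point reduction and the pointwise continuity of $LT_{\rho}1_{A_{j}}$, and the positivity of the spectrum carries the whole argument. Both proofs are valid; what yours buys is a shorter derivation of the first-order condition that foregrounds exactly the structural property ($LT_{\rho}\succeq 0$ for $\rho>0$) that also gives convexity, and what the paper's buys is a pointwise formulation that is closer to the form in which \eqref{three1} is subsequently used.
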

\begin{proof}
We show that \eqref{six1.5} is maximized over $\Delta_{k}(\gamma_{n})$, which contains the set of partitions of $\R^{n}$.  Note that $\Delta_{k}(\gamma_{n})\subset H$ is norm closed, convex, and norm bounded.  Therefore, $\Delta_{k}(\gamma_{n})$ is weakly closed.  Also, $\Delta_{k}(\gamma_{n})$ is weakly compact by the Banach-Alaoglu Theorem.  Using \eqref{six1.1}, define $\psi_{\rho}\colon\Delta_{k}(\gamma_{n})\to\R$ by
\begin{equation}\label{three1.5}
\psi_{\rho}(g_{1},\ldots,g_{k})\colonequals\frac{d}{d\rho}\sum_{i=1}^{k}\int_{\R^{n}} g_{i}T_{\rho}g_{i}d\gamma_{n}
\colonequals\rho^{-1}\sum_{i=1}^{k}\int_{\R^{n}} g_{i}LT_{\rho}g_{i}d\gamma_{n}.
\end{equation}

By \eqref{six2}, $\psi_{\rho}$ is an exponentially decaying sum of uniformly bounded weakly continuous functions.  Therefore, $\psi_{\rho}$ is weakly continuous on the weakly compact set $\Delta_{k}(\gamma_{n})$.  So there exists $(f_{1},\ldots,f_{k})\in \Delta_{k}(\gamma_{n})$ that maximizes $\psi_{\rho}$.

Since $\rho\in(0,1]$, \eqref{six3.0} implies: $\forall$ $f\in L_{2}(\gamma_{n})$, $\int f LT_{\rho}fd\gamma_{n}\geq0$.  We now apply this fact to see that $\psi_{\rho}$ is convex.  Let $\lambda\in[0,1]$, $(g_{1},\ldots,g_{k}),(h_{1},\ldots,h_{k})\in\Delta_{k}(\gamma_{n})$.  Then
\begin{flalign*}
&\lambda\psi_{\rho}(g_{1},\ldots,g_{k})+(1-\lambda)\psi_{\rho}(h_{1},\ldots,h_{k})
-\psi_{\rho}(\lambda g_{1}+(1-\lambda)h_{1},\ldots,\lambda g_{k}+(1-\lambda)h_{k})\\
&=\frac{1}{\rho}\sum_{i=1}^{k}\left[\lambda\int_{\R^{n}} g_{i}LT_{\rho}g_{i}+(1-\lambda)\int_{\R^{n}} h_{i}LT_{\rho}h_{i}
-\int_{\R^{n}}(\lambda g_{i}-(1-\lambda)h_{i})LT_{\rho}(\lambda g_{i}-(1-\lambda)h_{i})\right]\\
&=\lambda(1-\lambda)\int_{\R^{n}}(g_{i}-h_{i})LT_{\rho}(g_{i}-h_{i})\geq0.
\end{flalign*}

Since $\psi_{\rho}$ is convex on $\Delta_{k}(\gamma_{n})$, $\psi_{\rho}$ achieves its maximum at an extreme point of $\Delta_{k}(\gamma_{n})$.  Therefore, there exists a partition $\{A_{i}\}_{i=1}^{k}$ of $\R^{n}$ such that $(1_{A_{1}},\ldots,1_{A_{k}})\in\Delta_{k}(\gamma_{n})$ maximizes $\psi_{\rho}$ on $\Delta_{k}(\gamma_{n})$ \cite[Lemma 2.1]{khot11}. Specifically, it is noted in \cite[Lemma 2.1]{khot11} that the extreme points of $\Delta_{k}(\gamma_{n})$ are exactly the partitions of $\R^{n}$ into $k$ pieces.

We now prove \eqref{three1} by contradiction.  By the Lebesgue density theorem \cite{stein70}[1.2.1, Proposition 1], we may assume that, for all $i\in\{1,\ldots,k\}$, if $y\in A_{i}$, then we have $\lim_{r\to0}\gamma_{n}(A_{i}\cap B(y,r))/\gamma_{n}(B(y,r))=1$.  Suppose there exist $j,m\in\{1,\ldots,k\}$ and there exists $y\in\R^{n}$, $r>0$ such that $y\in A_{j}$, $\gamma_{n}(B(y,r)\cap A_{j})>0$ and $LT_{\rho}1_{A_{j}}(y)<LT_{\rho}1_{A_{m}}(y)$.  By \eqref{six0},
$$
T_{\rho}1_{A_{j}}(x)=\int_{\R^{n}}1_{A_{j}}(y)e^{-\vnorm{y-x\rho}_{2}^{2}/[2(1-\rho^{2})]}\frac{dy}{(2\pi(1-\rho^{2}))^{n/2}}.
$$
So, $LT_{\rho}1_{A_{j}}=\rho(d/d\rho)T_{\rho}1_{A_{j}}(x)$ is a continuous function of $x$.

Therefore, there exists a ball $B(y,r)$, $r>0$ such that $\gamma_{n}(B(y,r)\cap A_{j})>0$ and such that
$$\sup_{x\in B(y,r)}LT_{\rho}1_{A_{j}}(x)<\inf_{x\in B(y,r)}LT_{\rho}1_{A_{m}}(x).$$
Let $\phi(x)\colonequals1_{B(y,r)\cap A_{j}}(x)$.  For $\lambda\in[0,1]$, note that
\begin{equation}\label{three3}
\left(1_{A_{1}},\ldots,1_{A_{j}}-\lambda\phi,\ldots,1_{A_{}}+\lambda\phi,\ldots,1_{A_{k}}\right)\in\Delta_{k}(\gamma_{n}).
\end{equation}

However,
\begin{equation}\label{three2.5}
\left.\frac{d}{d\lambda}\right|_{\lambda=0}\psi_{\rho}(1_{A_{1}},\ldots,1_{A_{j}}-\lambda\phi,\ldots,1_{A_{m}}+\lambda\phi,\ldots,1_{A_{k}})
=\frac{2}{\rho}\int\phi(x)LT_{\rho}(1_{A_{m}}-1_{A_{j}})(x)d\gamma_{n}(x)>0.
\end{equation}
But \eqref{three2.5} contradicts the maximality of $(1_{A_{1}},\ldots,1_{A_{k}})$ on $\Delta_{k}(\gamma_{n})$, so \eqref{three1} holds.

\end{proof}


\section{Perturbative Estimates}\label{secpert}

Recalling \eqref{three1.5}, the following estimates allow us to relate $\psi_{\rho}$ to $\psi_{0}$ for small $\rho>0$, for simplicial conical partitions.  In particular, we make a close examination of the two quantities of \eqref{six1.2}.  Since lemma \ref{lemma4} gives precise estimates of the two quantities of \eqref{six1.2}, combining Lemma \ref{lemma4} with \eqref{three1} gives precise geometric information about a partition $\{A_{i}\}_{i=1}^{k}\subset\R^{n}$ optimizing noise stability.  In particular, to see one way that we will apply Lemma \ref{lemma4}, see \eqref{two3} below.  However, note that \eqref{two3} below does not give sufficiently precise information to identify the sets optimizing noise stability.  So, the real need for Lemma \ref{lemma4} will occur in the proof of the Main Lemma \ref{lemma8}, where the precise estimate \eqref{five6} is used.

\begin{lemma}\label{lemma3}
Let $A\subset\R^{n}$ be a cone.  Then
$$
\int_{\R^{n}}\bigg(\sum_{i=1}^{n}(1-y_{i}^{2})\bigg)1_{A}(y)d\gamma_{n}(y)=0.
$$
\end{lemma}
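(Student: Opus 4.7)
The plan is to reduce the integral to polar coordinates and exploit the fact that $A$ being a cone decouples the radial and angular parts. Writing $y = r\theta$ with $r = \vnorm{y}_{\ell_2^n} \in (0,\infty)$ and $\theta \in S^{n-1}$, the cone property $tA = A$ for all $t>0$ gives $1_A(r\theta) = 1_{A \cap S^{n-1}}(\theta)$. The integrand factor simplifies as
$$\sum_{i=1}^n (1 - y_i^2) = n - \vnorm{y}_{\ell_2^n}^2 = n - r^2,$$
so in polar coordinates (with $d\sigma$ denoting surface measure on $S^{n-1}$),
$$\int_{\R^n}\Bigl(\sum_{i=1}^n (1-y_i^2)\Bigr)1_A(y)\,d\gamma_n(y) = (2\pi)^{-n/2}\Bigl(\int_{A\cap S^{n-1}}d\sigma(\theta)\Bigr)\int_0^\infty (n-r^2)r^{n-1}e^{-r^2/2}\,dr.$$

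Next I would observe that the radial integral vanishes by an exact antiderivative calculation: since
$$\frac{d}{dr}\bigl(r^n e^{-r^2/2}\bigr) = (n - r^2)r^{n-1}e^{-r^2/2},$$
the integral $\int_0^\infty (n-r^2)r^{n-1}e^{-r^2/2}\,dr$ equals $[r^n e^{-r^2/2}]_0^\infty$, which is $0$ at $r=0$ for $n \geq 1$ and tends to $0$ as $r\to\infty$. This completes the proof.

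There is essentially no obstacle: the only subtlety is a routine measurability check that $A \cap S^{n-1}$ is measurable (inherited from the measurability of $A$, using the homeomorphism $(0,\infty) \times S^{n-1} \to \R^n \setminus \{0\}$), which is needed to justify writing the Gaussian integral as an iterated integral via Fubini. If one prefers to avoid polar coordinates entirely, the same identity can be proved by differentiating the scaling identity $\int_{\R^n} 1_A(y/t) d\gamma_n(y) = t^n \int_{\R^n} 1_A(y) e^{-t^2 \vnorm{y}_{\ell_2^n}^2/2}(2\pi)^{-n/2}\,dy$ in $t$ at $t = 1$, but the polar decomposition is the cleanest route.
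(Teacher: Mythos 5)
Your proof is correct, but it takes a different route from the paper's. You pass to polar coordinates, use the cone property to factor the indicator as $1_A(r\theta)=1_{A\cap S^{n-1}}(\theta)$, and kill the radial integral with the exact antiderivative $\frac{d}{dr}(r^{n}e^{-r^{2}/2})=(n-r^{2})r^{n-1}e^{-r^{2}/2}$ — a clean, self-contained computation. The paper instead defines $f(\alpha)=\int_{\R^{n}}1_{A}(y)e^{-\alpha\vnorm{y}_{2}^{2}/2}\,dy/(2\pi)^{n/2}$, uses the change of variables $y\mapsto y/\sqrt{\alpha}$ together with the cone property to get $f(\alpha)=\alpha^{-n/2}\gamma_{n}(A)$, and differentiates at $\alpha=1$ to obtain $\int_{\R^{n}}\vnorm{y}_{2}^{2}1_{A}\,d\gamma_{n}=n\,\gamma_{n}(A)$, which is the claim. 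This is precisely the ``differentiate the scaling identity'' alternative you mention in your closing remark (with $\alpha$ playing the role of $t^{2}$), so you have in effect sketched both proofs. The polar route makes the mechanism most transparent (the cone contributes only a constant angular factor, and the radial weight $n-r^{2}$ integrates to zero against $r^{n-1}e^{-r^{2}/2}$ on any cone), at the cost of the Fubini/measurability check you correctly flag; the paper's route avoids polar coordinates entirely and needs only a one-line change of variables plus differentiation under the integral sign.
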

\begin{proof}
The assertion follows by standard equalities for the moments of a Gaussian random variable.  Let $\alpha>0$.  Define $f(\alpha)$ by the formula
$$
f(\alpha)\colonequals\int_{\R^{n}}1_{A}(y)e^{-\alpha(y_{1}^{2}+\cdots+y_{n}^{2})/2}\frac{dy}{(2\pi)^{n/2}}.
$$
By changing variables, $f(\alpha)=\alpha^{-n/2}\int_{\R^{n}}1_{A}(y)d\gamma_{n}(y)$.  So,
$$-\frac{1}{2}\int_{\R^{n}}\bigg(\sum_{i=1}^{n}y_{i}^{2}\bigg)1_{A}(y)d\gamma_{n}(y)
=\left.\frac{df(\alpha)}{d\alpha}\right|_{\alpha=1}
=-\frac{n}{2}\int_{\R^{n}}1_{A}(y)d\gamma_{n}(y).$$
\end{proof}

\begin{lemma}\label{lemma4}
Fix $k=3$, $n\geq2$, $\rho\in(0,1)$.  Let $\{C_{i}\}_{i=1}^{k}\subset\R^{2}$ be a simplicial conical partition.  Let $\{B_{i}\}_{i=1}^{k}\colonequals\{C_{i}\times\R^{n-2}\}_{i=1}^{k}$.  Fix $i,j\in\{1,\ldots,k\}$.  Let $\sigma\colon\R^{n}\to\R^{n}$ denote reflection across $B_{i}\cap B_{j}$.  Assume that $B_{i}=\sigma B_{j}$ and that $B_{i}\subset\{x\in\R^{n}\colon x_{1}\geq0\}$. Let $e_{1}=(1,0,\ldots,0)$, $e_{2}=(0,1,0,\ldots,0)$, $e_{1},e_{2}\in\R^{n}$.  For $p\in\{1,\ldots,k\}$, let $z_{p}\colonequals\int_{B_{p}}xd\gamma_{n}(x)$.  Note that $\mathrm{span}\{z_{i},z_{j}\}=\mathrm{span}\{e_{1},e_{2}\}$.  Let $n_{j}\in\R^{n}$ be the interior unit normal of $B_{j}$ so that $n_{j}$ is normal to the face $(\partial B_{j})\setminus(\partial B_{i})$, and let $n_{i}\in\R^{n}$ be the interior unit normal of $B_{i}$ so that $n_{i}$ is normal to the face $(\partial B_{i})\setminus(\partial B_{j})$.

(i) If $x\in B_{i}\cap\{x\in\R^{n}\colon\langle x,n_{j}\rangle\leq0\}$, then
\begin{equation}\label{four5}
\frac{1}{\rho}\left\langle x,\nabla T_{\rho}(1_{B_{i}}-1_{B_{j}})(x)\right\rangle
\geq 2x_{1}\gamma_{n}\left(\delta_{\frac{(B_{i}\cap B_{j})-x\rho}{\sqrt{1-\rho^{2}}}}\right)
+\langle x,n_{i}\rangle\gamma_{n}\left(\delta_{\frac{((\partial B_{i})\setminus B_{j})-x\rho}{\sqrt{1-\rho^{2}}}}\right).
\end{equation}

(ii) If $x\in B_{i}\cap\{x\in\R^{n}\colon\langle x,n_{j}\rangle\geq0\}$, then
\begin{equation}\label{four5.5}
\frac{1}{\rho}\left\langle x,\nabla T_{\rho}(1_{B_{i}}-1_{B_{j}})(x)\right\rangle
\geq2x_{1}\gamma_{n}\left(\delta_{\frac{(B_{i}\cap B_{j})-x\rho}{\sqrt{1-\rho^{2}}}}\right).
\end{equation}

(iii) For $x\in B_{i}$,
\begin{equation}\label{four0.5}
\begin{aligned}
\bigg|\int_{\R^{n}}\bigg(\sum_{\ell=1}^{n}(1-y_{\ell}^{2})\bigg)(1_{B_{i}}-1_{B_{j}})(x\rho+y\sqrt{1-\rho^{2}})d\gamma_{n}(y)\bigg|
\leq\frac{\rho}{\sqrt{1-\rho^{2}}}(\sqrt{6}+(n-1)\sqrt{2})x_{1}.
\end{aligned}
\end{equation}

(iv) For $x\in B_{i}$ with $x_{1}>\sqrt{n}\sqrt{1-\rho^{2}}/\rho$,
\begin{equation}\label{four0.6}
\begin{aligned}
\int_{\R^{n}}\bigg(\sum_{\ell=1}^{n}(1-y_{\ell}^{2})\bigg)(1_{B_{i}}-1_{B_{j}})(x\rho+y\sqrt{1-\rho^{2}})d\gamma_{n}(y)\geq0.
\end{aligned}
\end{equation}
\end{lemma}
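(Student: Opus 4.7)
The plan is to derive all four parts from a single master surface-integral identity. Differentiating $T_{\rho}1_{B}(x) = \int_{\R^{n}}1_{B}(z)K(z,x)\,dz$ with $K(z,x)\colonequals (2\pi(1-\rho^{2}))^{-n/2}e^{-|z-x\rho|^{2}/[2(1-\rho^{2})]}$, using $\nabla_{x}K = (1-\rho^{2})^{-1}\rho(z-x\rho)K$, and integrating by parts in $z$ via the divergence theorem, one obtains
\[
\frac{1}{\rho}\langle x,\nabla T_{\rho}1_{B}(x)\rangle = \int_{\partial B}\langle x,n^{\mathrm{in}}_{B}(z)\rangle K(z,x)\,d\mathcal{H}^{n-1}(z),
\]
where $n^{\mathrm{in}}_{B}$ is the inward unit normal of $B$. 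Applied to $B=B_{i}$ and $B=B_{j}$, with splittings $\partial B_{i} = (B_{i}\cap B_{j})\cup F_{i}$ and $\partial B_{j} = (B_{i}\cap B_{j})\cup F_{j}$ (writing $F_{p}\colonequals(\partial B_{p})\setminus B_{q}$ for $\{p,q\}=\{i,j\}$), and inward normals $e_{1}, n_{i}, -e_{1}, n_{j}$ on these four pieces, subtraction yields the master identity
\[
\frac{1}{\rho}\langle x,\nabla T_{\rho}(1_{B_{i}}-1_{B_{j}})(x)\rangle = 2x_{1}I_{\mathrm{sh}} + \langle x,n_{i}\rangle I_{F_{i}} - \langle x,n_{j}\rangle I_{F_{j}},
\]
with $I_{S}\colonequals\int_{S}K\,d\mathcal{H}^{n-1}$; the substitution $y = (z-x\rho)/\sqrt{1-\rho^{2}}$ shows $I_{S} = (1-\rho^{2})^{-1/2}\gamma_{n}(\delta_{(S-x\rho)/\sqrt{1-\rho^{2}}})$.

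Parts (i) and (ii) are then sign arguments on the three terms. In (i), the hypothesis $\langle x,n_{j}\rangle\leq 0$ makes the last term nonnegative, so dropping it and using $(1-\rho^{2})^{-1/2}\geq 1$ gives exactly the stated bound. In (ii) the last term can be negative, and I would resolve this through the reflection symmetry $\sigma$ across $B_{i}\cap B_{j}$: since $B_{j}=\sigma B_{i}$, one has $n_{j}=\sigma n_{i}$ and $I_{F_{j}}(x) = I_{F_{i}}(\sigma x)$, so $\langle x,n_{j}\rangle I_{F_{j}}(x) = \langle \sigma x,n_{i}\rangle I_{F_{i}}(\sigma x)$. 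Because $\sigma$ only flips the sign of $x_{1}$, a direct computation gives $\langle x,n_{i}\rangle \geq \langle\sigma x,n_{i}\rangle$ when $x_{1}\geq 0$; and because $F_{i}\subset\{z_{1}\geq 0\}$, the inequality $|z-x\rho|^{2}\leq|z-\sigma x\rho|^{2}$ on $F_{i}$ forces $I_{F_{i}}(x)\geq I_{F_{i}}(\sigma x)\geq 0$. Combining (with a trivial split when $\langle\sigma x,n_{i}\rangle<0$) yields $\langle x,n_{i}\rangle I_{F_{i}} - \langle x,n_{j}\rangle I_{F_{j}}\geq 0$, so only $2x_{1}I_{\mathrm{sh}}$ survives in the lower bound.

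For (iii) and (iv), I would apply integration by parts directly to the integral $J$ in the statement. After changing variables $z = x\rho + y\sqrt{1-\rho^{2}}$, the integrand becomes $(1_{B_{i}}-1_{B_{j}})(z)\,g(z)\,G(z;x\rho,(1-\rho^{2})I)$ with $g(z)\colonequals n - |z-x\rho|^{2}/(1-\rho^{2})$ and $G$ the Gaussian density, and a short computation gives $gG = \nabla_{z}\cdot[(z-x\rho)G]$. The divergence theorem together with $z_{1}=0$ on $B_{i}\cap B_{j}$ and $z\cdot n_{i}=0$ on $F_{i}$ (both because every cone face contains the origin) reproduces the same three-term decomposition up to a factor of $\rho$, giving $J = \rho[2x_{1}I_{\mathrm{sh}} + \langle x,n_{i}\rangle I_{F_{i}} - \langle x,n_{j}\rangle I_{F_{j}}] = \langle x,\nabla T_{\rho}(1_{B_{i}}-1_{B_{j}})(x)\rangle$. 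Part (iv) is then an immediate consequence of (i) and (ii), since the right-hand side of the master identity is at least $2x_{1}I_{\mathrm{sh}}\geq 0$ whenever $x\in B_{i}$; the hypothesis $x_{1}>\sqrt{n(1-\rho^{2})}/\rho$ is thus certainly sufficient. For (iii) I would reduce each $I_{S}$ to a 2D line integral via the product structure $B_{i} = C_{i}\times\R^{n-2}$, bound each by $1/\sqrt{2\pi(1-\rho^{2})}$, and for the two $F$-terms exploit the symmetry identity $\langle x,n_{i}\rangle I_{F_{i}}(x) - \langle x,n_{j}\rangle I_{F_{j}}(x) = \psi(x) - \psi(\sigma x)$ with $\psi(x)\colonequals\langle x,n_{i}\rangle I_{F_{i}}(x)$.

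The principal obstacle lies in this last step of (iii). A naive estimate on each $F$-term alone is of order $|x_{(1,2)}|/\sqrt{1-\rho^{2}}$, which is much larger than $x_{1}$ when $x$ is close to the shared face, so the cancellation via $\psi(x)-\psi(\sigma x)$ is unavoidable. Since $\sigma$ only flips $x_{1}$, this difference vanishes at $x_{1}=0$, and a mean-value estimate in $x_{1}$ yields $|\psi(x)-\psi(\sigma x)|\leq 2x_{1}\sup_{|t|\leq x_{1}}|\partial_{x_{1}}\psi(t,x_{2},\ldots,x_{n})|$. Controlling this supremum requires estimating both $I_{F_{i}}$ and $\partial_{x_{1}}I_{F_{i}}(x) = \int_{F_{i}}\rho(z_{1}-x_{1}\rho)(1-\rho^{2})^{-1}K\,d\mathcal{H}^{n-1}$, and straightforward but careful Gaussian moment bounds along the 2D wedge in the $(z_{1},z_{2})$ plane then produce the constants $\sqrt{6} + (n-1)\sqrt{2}$ appearing in the stated bound.
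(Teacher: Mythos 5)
Parts (i) and (ii) of your proposal match the paper's argument: both compute $\nabla T_{\rho}(1_{B_{i}}-1_{B_{j}})$ as a Gaussian surface integral over the three faces (the paper's \eqref{four2}) and then argue by sign and by reflection across $B_{i}\cap B_{j}$; your ``trivial split'' on the sign of $\langle\sigma x,n_{i}\rangle$ is a mild rephrasing of the paper's observation that $n_{i}\gamma_{n}(\delta_{F_{i}})-n_{j}\gamma_{n}(\delta_{F_{j}})$ lies in the convex cone spanned by $e_{1}$ and $n_{i}$. Part (iv), however, is a genuine improvement over the paper. Your divergence identity $g(z)G(z)=\nabla_{z}\cdot[(z-x\rho)G(z)]$, combined with $\langle z,n\rangle=0$ on each face (cones pass through the origin), gives $\int_{\R^{n}}\big(\sum_{\ell}(1-y_{\ell}^{2})\big)(1_{B_{i}}-1_{B_{j}})(x\rho+y\sqrt{1-\rho^{2}})\,d\gamma_{n}(y)=\langle x,\nabla T_{\rho}(1_{B_{i}}-1_{B_{j}})(x)\rangle$ — a pretty generalization of Lemma \ref{lemma3}, which is the case $x=0$ — after which (iv) is an immediate corollary of (i) and (ii) and actually holds for \emph{every} $x\in B_{i}$. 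The paper's proof of (iv) instead builds a comparison cone $A$ around $x\rho$, invokes Lemma \ref{lemma3}, and runs a two-case argument on $d(x,\partial B_{i})$; your identity bypasses all of that and shows the hypothesis $x_{1}>\sqrt{n}\sqrt{1-\rho^{2}}/\rho$ is superfluous.

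For (iii) your route is more laborious than the paper's and is not carried to completion. The paper simply notes that the integral is odd under $x\mapsto\sigma x$, hence vanishes on $\{x_{1}=0\}$, differentiates once in $x_{1}$, integrates by parts in $y$, and reads off the constant $\sqrt{6}+(n-1)\sqrt{2}$ from Cauchy--Schwarz and the $L_{2}(\gamma_{n})$ norms of the resulting degree-three polynomials. You instead decompose $J$ into the three surface terms first and must then run a separate mean-value estimate on $\psi(x)-\psi(\sigma x)$, which requires controlling both $I_{F_{i}}$ and the $z_{1}$-moment $\partial_{x_{1}}I_{F_{i}}$ over the face. That is workable but introduces surface-measure bookkeeping the paper's one-step derivative bound avoids, and you have not verified that the exact constant $\sqrt{6}+(n-1)\sqrt{2}$ falls out; before relying on that constant you should either carry out the moment bounds or switch to the paper's direct approach, which is cleaner here.
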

\begin{proof}[Proof of (i)]
Below, we use differentiation in the distributional sense.  Let $x\neq0$.  For $x\in(\partial B_{i})\cap B_{j}$, $\nabla1_{B_{i}}(x)=e_{1}$ since $B_{i}\subset\{x\in\R^{n}\colon x_{1}\geq0\}$, and for $x\in(\partial B_{i})\setminus B_{j}$, $\nabla 1_{B_{i}}(x)=n_{i}$.  Similarly, for $x\in(\partial B_{j})\cap B_{i}$, $-\nabla1_{B_{j}}(x)=e_{1}$, and for $x\in(\partial B_{j})\setminus B_{i}$, $-\nabla 1_{B_{j}}(x)=-n_{j}$.  Then
\begin{equation}\label{four2}
\begin{aligned}
&\frac{1}{\rho}\nabla T_{\rho}(1_{B_{i}}-1_{B_{j}})(x)
=T_{\rho}(\nabla (1_{B_{i}}-1_{B_{j}}))(x)\\
&\,=T_{\rho}[2(e_{1})\delta_{B_{i}\cap B_{j}}
+n_{i}\delta_{(\partial B_{i})\setminus B_{j}}
+(-n_{j})\delta_{(\partial B_{j})\setminus B_{i}}](x)\\
&\,\stackrel{\eqref{six0}}{=}2e_{1}\gamma_{n}\left(\delta_{\frac{(B_{i}\cap B_{j})-x\rho}{\sqrt{1-\rho^{2}}}}\right)
+n_{i}\gamma_{n}\left(\delta_{\frac{((\partial B_{i})\setminus B_{j})-x\rho}{\sqrt{1-\rho^{2}}}}\right)
+(-n_{j})\gamma_{n}\left(\delta_{\frac{((\partial B_{j})\setminus B_{i})-x\rho}{\sqrt{1-\rho^{2}}}}\right).
\end{aligned}
\end{equation}
Here we used
$$\int_{\R^{n}}1_{A}(x\rho+y\sqrt{1-\rho^{2}})d\gamma_{n}(y)=\int_{\R^{n}}1_{A-x\rho}(y\sqrt{1-\rho^{2}})d\gamma_{n}(y)
=\int_{\R^{n}} 1_{(A-x\rho)/\sqrt{1-\rho^{2}}}(y)d\gamma_{n}(y).$$

Let $x$ with $x\in B_{i}$ and $\langle x,(-n_{j})\rangle\geq0$.  Then \eqref{four2} immediately proves \eqref{four5}.
\end{proof}
\begin{proof}[Proof of (ii)]
Let $x\in B_{i}\cap\{x\in\R^{n}\colon\langle x,n_{j}\rangle\geq0\}$.  By reflecting across $B_{i}\cap B_{j}$, \begin{equation}\label{four5.4}
\gamma_{n}\left(\delta_{\frac{((\partial B_{i})\setminus B_{j})-x\rho}{\sqrt{1-\rho^{2}}}}\right)
\geq\gamma_{n}\left(\delta_{\frac{((\partial B_{j})\setminus B_{i})-x\rho}{\sqrt{1-\rho^{2}}}}\right).
\end{equation}
Define
$$
w\colonequals n_{i}\gamma_{n}\left(\delta_{\frac{((\partial B_{i})\setminus B_{j})-x\rho}{\sqrt{1-\rho^{2}}}}\right)
+(-n_{j})\gamma_{n}\left(\delta_{\frac{((\partial B_{j})\setminus B_{i})-x\rho}{\sqrt{1-\rho^{2}}}}\right).
$$
By \eqref{four5.4}, $w$ is in the convex hull of $e_{1}$ and $n_{i}$.  In particular, $\langle x,w\rangle\geq0$, since $x\in B_{i}$.  Combining $\langle x,w\rangle\geq0$ with \eqref{four2} proves \eqref{four5.5}.
\end{proof}
\begin{proof}[Proof of (iii)]
By reflecting across $B_{i}\cap B_{j}$,
$$x\in B_{i}\cap B_{j}\quad\Longrightarrow\quad \int_{\R^{n}}\bigg(\sum_{\ell=1}^{n}(1-y_{\ell}^{2})\bigg)(1_{B_{i}}-1_{B_{j}})(x\rho+y\sqrt{1-\rho^{2}})d\gamma_{n}(y)=0.$$
So, a derivative bound gives \eqref{four0.5}.  Specifically, we apply the Fundamental Theorem of Calculus to the following identity, along with $\vnorm{(1-y_{2}^{2})y_{1}}_{L_{2}(\gamma_{n})}=\sqrt{2}$ and $\vnorm{y_{1}^{3}+3y_{1}}_{L_{2}(\gamma_{n})}=\sqrt{6}$.
\begin{flalign*}
&\frac{\partial}{\partial x_{1}}\int_{\R^{n}}\bigg(\sum_{\ell=1}^{n}(1-y_{\ell}^{2})\bigg)(1_{B_{i}}-1_{B_{j}})(x\rho+y\sqrt{1-\rho^{2}})d\gamma_{n}(y)\\
&\qquad=\frac{\rho}{\sqrt{1-\rho^{2}}}\int_{\R^{n}}
\bigg(-3y_{1}-y_{1}^{3}+\sum_{\ell\neq1}(1-y_{\ell}^{2})y_{1}\bigg)(1_{B_{i}}-1_{B_{j}})(x\rho+y\sqrt{1-\rho^{2}})d\gamma_{n}(y).
\end{flalign*}
\end{proof}
\begin{proof}[Proof of (iv)]
Let $x$ with $x_{1}>\sqrt{n}\sqrt{1-\rho^{2}}/\rho$ and consider the following cone 
$$A\colonequals\{0\}\cup\left\{y\in\R^{n}\colon y\neq0\,\wedge\,\sqrt{n}\frac{y}{\vnorm{y}_{2}}\in\frac{B_{i}-x\rho}{\sqrt{1-\rho^{2}}}\right\}.$$
By Lemma \ref{lemma3}, $\int_{\R^{n}}1_{A}(y)\sum_{\ell=1}^{n}(1-y_{\ell}^{2})d\gamma_{n}(y)=0$.  If $d(x,\partial B_{i})\geq\sqrt{n}\sqrt{1-\rho^{2}}/\rho$, then $A=\R^{n}$ and $1_{A}(y)\sum_{\ell=1}^{n}(1-y_{\ell}^{2})1_{B_{i}^{c}}(x\rho+y\sqrt{1-\rho^{2}})\leq0$, so
\begin{flalign*}
&\int_{\R^{n}}\bigg(\sum_{\ell=1}^{n}(1-y_{\ell}^{2})\bigg)(1_{B_{i}}-1_{B_{j}})(x\rho+y\sqrt{1-\rho^{2}})d\gamma_{n}(y)\\
&\qquad\geq\int_{\R^{n}}\bigg(\sum_{\ell=1}^{n}(1-y_{\ell}^{2})\bigg)1_{B_{i}}(x\rho+y\sqrt{1-\rho^{2}})d\gamma_{n}(y)
\geq\int_{\R^{n}}\bigg(\sum_{\ell=1}^{n}(1-y_{\ell}^{2})\bigg)d\gamma_{n}(y)=0.
\end{flalign*}
So, it remains to consider the case $d(x,\partial B_{i})<\sqrt{n}\sqrt{1-\rho^{2}}/\rho$.  In this case $A\neq\R^{n}$.

Since $x_{1}>\sqrt{n}\sqrt{1-\rho^{2}}/\rho$, we have $\sum_{\ell=1}^{n}(1-y_{\ell}^{2})1_{B_{j}}(x\rho+y\sqrt{1-\rho^{2}})\leq0$.  Also, we have $\sum_{\ell=1}^{n}(1-y_{\ell}^{2})1_{A^{c}}(y)1_{B_{i}}(x\rho+y\sqrt{1-\rho^{2}})\geq0$,  $\sum_{\ell=1}^{n}(1-y_{\ell}^{2})1_{A}(y)1_{B_{i}^{c}}(x\rho+y\sqrt{1-\rho^{2}})\leq0$, so
\begin{flalign*}
&\int_{\R^{n}}\bigg(\sum_{\ell=1}^{n}(1-y_{\ell}^{2})\bigg)(1_{B_{i}}-1_{B_{j}})(x\rho+y\sqrt{1-\rho^{2}})d\gamma_{n}(y)\\
&\qquad\geq\int_{\R^{n}}\bigg(\sum_{\ell=1}^{n}(1-y_{\ell}^{2})\bigg)1_{B_{i}}(x\rho+y\sqrt{1-\rho^{2}})d\gamma_{n}(y)\\
&\qquad\geq\int_{\R^{n}}\bigg(\sum_{\ell=1}^{n}(1-y_{\ell}^{2})\bigg)1_{A}(y)1_{B_{i}}(x\rho+y\sqrt{1-\rho^{2}})d\gamma_{n}(y)\\
&\qquad\geq\int_{\R^{n}}\bigg(\sum_{\ell=1}^{n}(1-y_{\ell}^{2})\bigg)1_{A}(y)d\gamma_{n}(y)=0.
\end{flalign*}
\end{proof}

\section{Iterative Estimates}\label{seciter}

The following estimates control the errors that appear in the proof of Theorem \ref{thm0}.  Being rather technical in nature, this section could be skipped on a first reading.

\begin{lemma}\label{lemma6}
For $\ell=(\ell_{1},\ldots,\ell_{n})\in\N^{n}$ and $x=(x_{1},\ldots,x_{n})\in\R^{n}$,
$$
h_{\ell}(x)\sqrt{\ell!}
\leq\abs{\ell}^{n}3^{\abs{\ell}}\prod_{i=1}^{n}\max\{1,\abs{x_{i}}^{\ell_{i}}\}.
$$
\end{lemma}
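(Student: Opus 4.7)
The estimate factors over the $n$ coordinates, so I would reduce to a one-variable statement. Since $h_{\ell}(x)=\prod_{i=1}^{n}h_{\ell_{i}}(x_{i})$ and $\sqrt{\ell!}=\prod_{i=1}^{n}\sqrt{\ell_{i}!}$, it suffices to prove the one-dimensional bound
\[
\sqrt{m!}\,|h_{m}(t)|\leq 3^{m}\max\{1,|t|^{m}\},\qquad m\in\N,\ t\in\R.
\]
Taking the product of this inequality over $i=1,\ldots,n$ gives $\sqrt{\ell!}\,|h_{\ell}(x)|\leq 3^{|\ell|}\prod_{i=1}^{n}\max\{1,|x_{i}|^{\ell_{i}}\}$, which is stronger than the claim by the factor $|\ell|^{n}\geq 1$ (the degenerate case $|\ell|=0$ reduces to $1\leq 1$).

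For the one-variable estimate, I would extract from the generating function \eqref{six1.9} the closed form
\[
h_{m}(t)=\sum_{k=0}^{\lfloor m/2\rfloor}\frac{(-1)^{k}\,t^{m-2k}}{2^{k}\,k!\,(m-2k)!},
\]
(which follows by multiplying $e^{\lambda t}$ and $e^{-\lambda^{2}/2}$ as power series in $\lambda$ and equating coefficients of $\lambda^{m}$). Since $|t|^{m-2k}\leq\max\{1,|t|^{m}\}$ for every $0\leq k\leq m/2$, the triangle inequality gives
\[
\sqrt{m!}\,|h_{m}(t)|\leq \max\{1,|t|^{m}\}\sum_{k=0}^{\lfloor m/2\rfloor}\frac{\sqrt{m!}}{2^{k}\,k!\,(m-2k)!}.
\]

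It then remains to bound the numerical sum by $3^{m}$. I would use the crude identity $m!/(m-2k)!=m(m-1)\cdots(m-2k+1)\leq m^{2k}$, which yields $\sqrt{m!}/(m-2k)!\leq m^{k}/\sqrt{(m-2k)!}\leq m^{k}$, and therefore
\[
\sum_{k=0}^{\lfloor m/2\rfloor}\frac{\sqrt{m!}}{2^{k}\,k!\,(m-2k)!}\leq\sum_{k=0}^{\infty}\frac{(m/2)^{k}}{k!}=e^{m/2}\leq 3^{m}.
\]
I do not anticipate any real obstacle: the argument is purely combinatorial, and the only care required is to verify the boundary cases $m=0$ and $k=\lfloor m/2\rfloor$ (where either $m-2k=0$ or $m-2k=1$), both of which are handled by the same bookkeeping.
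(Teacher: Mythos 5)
Your proof is correct, and it takes a genuinely different and cleaner route for the key numerical estimate. Both arguments start from the same closed form for $h_m(t)$ and use $|t|^{m-2k}\le\max\{1,|t|^{m}\}$, but the paper bounds each coefficient $\sqrt{m!}/(2^{k}k!(m-2k)!)$ via Stirling's formula (with a case split in $j=k/m$, and a separate check of the endpoints $k=0$ and $k=m/2$), obtaining a $k$-independent bound $m^{-1/4}(9/4)^{m}$ per term that must then be multiplied by the number of terms $\lfloor m/2\rfloor+1\le m$ — this is precisely where the extra factor $|\ell|^{n}$ in the lemma's statement comes from. You instead use the elementary inequality $m!/(m-2k)!\le m^{2k}$, hence $\sqrt{m!}/(m-2k)!\le m^{k}$, and sum the resulting exponential series to get $e^{m/2}\le 3^{m}$. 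This avoids Stirling entirely, avoids counting terms, and in fact produces the strictly sharper bound $\sqrt{\ell!}\,|h_{\ell}(x)|\le 3^{|\ell|}\prod_{i}\max\{1,|x_{i}|^{\ell_{i}}\}$ without the $|\ell|^{n}$ factor. (Incidentally, the lemma as printed fails for $\ell=0$, since then $|\ell|^{n}=0$ while the left side is $1$; your stronger bound handles $\ell=0$ correctly, and the paper's uses of the lemma only involve $|\ell|\ge 2$, so nothing is harmed.)
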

\begin{proof}
Let $\ell\in\N$.
$$
\sum_{\ell=0}^{\infty}\lambda^{\ell}h_{\ell}(x)
\stackrel{\eqref{six1.9}}{=}e^{\lambda x-\lambda^{2}/2}
=\sum_{p=0}^{\infty}\frac{x^{p}}{p!}\lambda^{p}\sum_{q=0}^{\infty}\frac{(-1)^{q}(\lambda)^{2q}(1/2)^{q}}{q!}
=\sum_{\ell=0}^{\infty}\lambda^{\ell}\sum_{m=0}^{\lfloor \ell/2\rfloor}\frac{x^{\ell-2m}(-1)^{m}2^{-m}}{m!(\ell-2m)!}.
$$
Here we let $p+2q=\ell$, $m=q$.  In particular,
\begin{equation}\label{four0.8}
h_{\ell}(x)=\sum_{m=0}^{\lfloor \ell/2\rfloor}\frac{x^{\ell-2m}(-1)^{m}2^{-m}}{m!(\ell-2m)!}.
\end{equation}

Using Stirling's formula, $\sqrt{2\pi}\ell^{\ell+1/2}e^{-\ell}\leq \ell!\leq e\,\ell^{\ell+1/2}e^{-\ell}$.  Let $\ell,m$ with $m\in\{0,\ldots,\ell/2\}$, $\ell\geq1$.  Note that $\lim_{x\to0^{+}}x^{x}=1$ and $\min_{x\in[0,1]}x^{x}>2/3$.  Also, $m+\ell-2m=\ell-m\geq \ell/2$.  For $m\neq0$, write $m=\ell j$, $j\in[1/\ell,1/2]$.  Note that $\max\{m,\ell-2m\}\geq \ell/3$.  Then
\begin{flalign*}
&\frac{\sqrt{\ell!}}{m!(\ell-2m)!}\\
&\leq\frac{\sqrt{e}\,\ell^{(1/2)\ell+1/4}e^{-\ell/2}}{2\pi m^{m+1/2}e^{-m}(\ell-2m)^{\ell-2m+1/2}e^{-(\ell-2m)}}
=\frac{\sqrt{e}}{2\pi}\frac{\ell^{(1/2)\ell+1/4}}{m^{m+1/2}(\ell-2m)^{\ell-2m+1/2}}e^{\ell/2-m}\\
&=\frac{\sqrt{e}}{2\pi}\frac{\ell^{1/4}}{\sqrt{m}\sqrt{\ell-2m}}\frac{\ell^{\ell/2}}{m^{m}(\ell-2m)^{\ell-2m}}e^{\ell/2-m}
=\frac{\sqrt{e}}{2\pi}\frac{\ell^{1/4}}{\sqrt{m}\sqrt{\ell-2m}}\frac{\ell^{\ell/2}}{(\ell j)^{\ell j}(\ell(1-2j))^{\ell(1-2j)}}e^{\frac{\ell}{2}-m}\\
&=\frac{\sqrt{e}}{2\pi}\frac{\ell^{1/4}}{\sqrt{m}\sqrt{\ell-2m}}\frac{\ell^{\ell/2}e^{\ell/2-m}}{\ell^{\ell(1-j)}j^{\ell j}(1-2j)^{\ell(1-2j)}}
=\frac{\sqrt{e}}{2\pi}\frac{\ell^{1/4}}{\sqrt{m}\sqrt{\ell-2m}}\frac{\ell^{\ell/2}e^{\ell/2-m}}{\ell^{\ell/2}\ell^{\frac{\ell}{2}(1-2j)}j^{\ell j}(1-2j)^{\ell(1-2j)}}\\
&=\frac{\sqrt{e}}{2\pi}\frac{\ell^{1/4}}{\sqrt{m}\sqrt{\ell-2m}}\frac{1}{j^{\ell j}(1-2j)^{\ell(1-2j)}}\frac{e^{\frac{\ell}{2}(1-2j)}}{\ell^{\frac{\ell}{2}(1-2j)}}
=\frac{\sqrt{e}}{2\pi}\frac{\ell^{1/4}}{\sqrt{m}\sqrt{\ell-2m}}\frac{(e/\ell)^{\frac{\ell}{2}(1-2j)}}{j^{\ell j}(1-2j)^{\ell(1-2j)}}\\
&\leq\frac{e}{2\pi}\frac{\ell^{1/4}}{\sqrt{m}\sqrt{\ell-2m}}\frac{1}{j^{\ell j}(1-2j)^{\ell(1-2j)}}
\leq\frac{e}{2\pi}\frac{\ell^{1/4}}{\sqrt{m}\sqrt{\ell-2m}}\frac{1}{(2/3)^{2\ell}}\\
&\leq\frac{e\sqrt{3}}{\ell^{1/4}2\pi}\frac{1}{(2/3)^{2\ell}}
=\frac{e\sqrt{3}}{\ell^{1/4}2\pi}(9/4)^{\ell}\leq \ell^{-1/4}(9/4)^{\ell}.
\end{flalign*}
Here we used $(e/\ell)^{(\ell/2)(1-2j)}\leq\sqrt{e}$ for $\ell=1,2$.

Also, for $m=0$ we have $\frac{\sqrt{\ell!}}{m!(\ell-2m)!}=1$, and for $m=\ell/2$ we have
\begin{flalign*}
\frac{\sqrt{\ell!}}{m!(\ell-2m)!}
&=\frac{\sqrt{\ell!}}{(\ell/2)!}
\leq\frac{\sqrt{e}\ell^{\ell/2+1/4}e^{-\ell/2}}{\sqrt{2\pi}(\ell/2)^{\ell/2+1/2}e^{-\ell/2}}
=\sqrt{e}\frac{\ell^{1/4}}{\sqrt{2\pi}\ell^{1/2}2^{-\ell/2}2^{-1/2}}\\
&=\sqrt{\frac{e}{\pi}}\ell^{-1/4}2^{\ell/2}
\leq \ell^{-1/4}2^{\ell/2}.
\end{flalign*}

So, combining the above estimates with \eqref{four0.8},
\begin{flalign*}
\absf{h_{\ell}(x)\sqrt{\ell!}}
&\leq\sum_{m=0}^{\lfloor \ell/2\rfloor}\ell^{-1/4}(9/4)^{\ell}\abs{x}^{\ell-2m}
\leq\sum_{m=0}^{\lfloor \ell/2\rfloor}\ell^{-1/4}(9/4)^{\ell}\max\{1,\abs{x}^{\ell-2m}\}\\
&\leq\ell\,\ell^{-1/4}(9/4)^{\ell}\max\{1,\abs{x}^{\ell}\}
\leq\ell\,3^{\ell}\max\{1,\abs{x}^{\ell}\}.
\end{flalign*}

Therefore, for $\ell=(\ell_{1},\ldots,\ell_{n})\in\N^{n}$,
$$
h_{\ell}(x)\sqrt{\ell!}
\leq\ell_{1}\cdots\ell_{n}3^{\ell_{1}+\cdots+\ell_{n}}\prod_{i=1}^{k}\max\{1,\abs{x_{i}}^{\ell_{i}}\}
\leq\abs{\ell}^{n}3^{\abs{\ell}}\prod_{i=1}^{n}\max\{1,\abs{x_{i}}^{\ell_{i}}\}.
$$

\end{proof}

The following Lemma uses standard tail bounds for a Gaussian random variable.  We therefore omit the proof.

%
%
\begin{lemma}\label{lemma6.1}  Let $\eta>0,t>0$, and let $n\geq2$.  Then
$$
\bigg|\int_{[-\eta,\eta]\times[t,\infty]\times\R^{n-2}}
\sum_{\substack{\ell\in\N^{n}\colon\\0\leq\abs{\ell}\leq3}}
\prod_{i=1}^{n}\abs{y_{i}}^{\ell_{i}}d\gamma_{n}(y)\bigg|
\leq 3000n^{3}\eta(t^{2}+2)e^{-t^{2}/2},
$$
\begin{flalign*}
\bigg|\int_{B(0,t)^{c}}\sum_{\substack{\ell\in\N^{n}\colon\\0\leq\abs{\ell}\leq3}}
\prod_{i=1}^{n}\abs{y_{i}}^{\ell_{i}}d\gamma_{n}(y)\bigg|
&\leq4n^{3}2^{-n/2}(\Gamma(n/2))^{-1}(n+2)!(t^{n+1}+1)e^{-t^{2}/2}\\
&\leq 100(n+2)!(t^{n+1}+1)e^{-t^{2}/2}.
\end{flalign*}
\end{lemma}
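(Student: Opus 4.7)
Both bounds reduce to standard one-dimensional Gaussian moment and tail estimates, combined with a count of the multi-indices $\ell\in\N^{n}$ with $\abs{\ell}\leq 3$: there are $\binom{n+3}{3}\leq 8n^{3}$ of these, which accounts for the factor $n^{3}$ in both estimates. The first estimate exploits the product structure of the region of integration; the second uses polar coordinates after a trivial monotone bound.

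For the first bound, fix $\ell$. Since $d\gamma_{n}$ factors as a product of one-dimensional Gaussian measures,
\begin{equation*}
\int_{[-\eta,\eta]\times[t,\infty)\times\R^{n-2}}\prod_{i=1}^{n}\abs{y_{i}}^{\ell_{i}}d\gamma_{n}(y)=I_{1}(\ell_{1})\,I_{2}(\ell_{2})\,\prod_{i=3}^{n}I_{3}(\ell_{i}),
\end{equation*}
where $I_{1}(\ell_{1})\colonequals\int_{-\eta}^{\eta}\abs{y}^{\ell_{1}}d\gamma_{1}(y)$, $I_{2}(\ell_{2})\colonequals\int_{t}^{\infty}\abs{y}^{\ell_{2}}d\gamma_{1}(y)$, and $I_{3}(\ell_{i})\colonequals\int_{\R}\abs{y}^{\ell_{i}}d\gamma_{1}(y)$. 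For $\ell_{1}\leq 3$, the crude bound $I_{1}(\ell_{1})\leq C\eta$ holds (pull out a factor of $\eta$ from the range of integration when $\ell_{1}\geq 1$; use $\int_{-\eta}^{\eta}d\gamma_{1}\leq 2\eta/\sqrt{2\pi}$ when $\ell_{1}=0$). For $\ell_{2}\leq 3$, the worst case is $\ell_{2}=3$: direct integration by parts gives the exact identity $\int_{t}^{\infty}y^{3}e^{-y^{2}/2}dy=(t^{2}+2)e^{-t^{2}/2}$, and the other three values of $\ell_{2}$ are majorized by the same quantity up to absolute constants (standard Mills-type bounds). Each $I_{3}(\ell_{i})$ is an absolute constant (an absolute moment of a standard Gaussian of order at most $3$). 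Summing over the $\binom{n+3}{3}\leq 8n^{3}$ multi-indices and tracking the constants yields the claimed bound $3000n^{3}\eta(t^{2}+2)e^{-t^{2}/2}$.

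For the second bound, the region is not a product, so instead use the pointwise bound $\prod_{i=1}^{n}\abs{y_{i}}^{\ell_{i}}\leq\vnorm{y}_{\ell_{2}^{n}}^{\abs{\ell}}$ and pass to polar coordinates; writing $\omega_{n-1}=2\pi^{n/2}/\Gamma(n/2)$ for the surface area of the unit sphere,
\begin{equation*}
\int_{B(0,t)^{c}}\vnorm{y}_{\ell_{2}^{n}}^{\abs{\ell}}d\gamma_{n}(y)=\frac{2}{2^{n/2}\Gamma(n/2)}\int_{t}^{\infty}r^{n-1+\abs{\ell}}e^{-r^{2}/2}dr.
\end{equation*}
Since $\abs{\ell}\leq 3$, the exponent $k\colonequals n-1+\abs{\ell}\leq n+2$. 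Iterating the identity $\int_{t}^{\infty}r^{k}e^{-r^{2}/2}dr=t^{k-1}e^{-t^{2}/2}+(k-1)\int_{t}^{\infty}r^{k-2}e^{-r^{2}/2}dr$ (with the usual bound $\int_{t}^{\infty}e^{-r^{2}/2}dr\leq\sqrt{\pi/2}\cdot e^{-t^{2}/2}$ at the base) and using the uniform estimate $t^{k-1}\leq t^{n+1}+1$ for $0\leq k-1\leq n+1$ gives $\int_{t}^{\infty}r^{k}e^{-r^{2}/2}dr\leq(n+2)!\,(t^{n+1}+1)e^{-t^{2}/2}$ for every admissible $k$. Summing over the $\binom{n+3}{3}$ multi-indices and pulling out the prefactor $2^{1-n/2}/\Gamma(n/2)$ produces the first inequality. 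The simplification to $100(n+2)!(t^{n+1}+1)e^{-t^{2}/2}$ uses the uniform bound $n^{3}2^{-n/2}/\Gamma(n/2)\leq 25$ for all $n\geq 2$, which one verifies case-by-case for small $n$ and by Stirling's formula for large $n$ (the quantity is maximized near $n=5$).

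The main obstacle is purely bookkeeping; this is why the paper omits the proof. No new idea enters beyond standard one-variable Gaussian tail estimates, but one must choose the intermediate constants carefully to land on the specific values $3000$ and $100$, and one must collect the polynomial prefactor from the iterated integration by parts into the clean form $(n+2)!(t^{n+1}+1)$ by absorbing all the lower-order boundary terms $t^{k-3},t^{k-5},\ldots$ uniformly into $t^{n+1}+1$.
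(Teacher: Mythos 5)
Your proof is correct and is precisely the elementary argument the paper gestures at by calling this ``standard tail bounds for a Gaussian random variable'' before omitting the proof. The only spot worth tightening on a second pass is the claim $I_{1}(\ell_{1})\leq C\eta$ for $\ell_{1}\geq1$: the cleanest justification is to bound $|y|^{\ell_{1}}\leq\eta\,|y|^{\ell_{1}-1}$ on $[-\eta,\eta]$ and then extend the remaining integral to all of $\R$ (giving $I_{1}\leq\eta\,\E|Z|^{\ell_{1}-1}\leq2\eta$), and the numerics you cite (the count $\binom{n+3}{3}\leq\tfrac{5}{4}n^{3}$, the exact identity for $\int_{t}^{\infty}y^{3}e^{-y^{2}/2}dy$, and the uniform bound $n^{3}2^{-n/2}/\Gamma(n/2)\leq25$ peaking near $n=5$ at roughly $16.6$) all check out, landing comfortably inside the generous constants $3000$ and $100$.
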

%

The following Lemma says, if $\int_{\R^{n}} xf(x)d\gamma_{n}(x)$ is parallel to the $x_{1}$-axis, then $(d/d\rho)T_{\rho}f(x)$ should be bounded by a constant multiplied by $\abs{x_{1}}+O(\rho)$.  The precise error term \eqref{five0} will be needed in Lemma \ref{lemma8} to determine the size of $(d/d\rho)T_{\rho}(1_{A_{i}}-1_{A_{j}})$.  The error term \eqref{five0} will be estimated by Lemma \ref{lemma6.1}, and the resulting estimate will be introduced into \eqref{three1}.

\begin{lemma}\label{lemma7}
Let $\rho\in(-1,1)$, $n\geq2$.  Suppose $f\in L_{2}(\gamma_{n})$ with $\int_{\R^{n}} y_{2}f(y)d\gamma_{n}(y)=0$.  Let $x_{1}\geq0$ and $x_{2}\geq0$.  Then
\begin{equation}\label{five0}
\begin{aligned}
&\abs{\frac{d}{d\rho}T_{\rho}f(x_{1},x_{2},0,\ldots,0)}
\leq\bigg(\abs{x_{1}}+2\rho(\abs{x_{2}}^{2}+(n+1)\rho\abs{x_{2}}+\abs{x_{1}x_{2}}+2n)\bigg)\\
&\qquad\qquad\qquad\cdot\sup_{\substack{t_{1}\in[0,x_{1}],\,t_{2}\in[0,x_{2}]\\ \eta\in[0,\rho]}}
\bigg|\int_{\R^{n}}\sum_{\substack{\ell\in\N^{n}\colon\\0\leq\abs{\ell}\leq3}}
\prod_{i=1}^{n}\abs{y_{i}}^{\ell_{i}}f((t_{1},t_{2},0,\ldots,0)\eta+y\sqrt{1-\eta^{2}})d\gamma_{n}(y)\bigg|.
\end{aligned}
\end{equation}
\end{lemma}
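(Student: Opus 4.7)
My plan is to start from formula \eqref{six1.2} applied at $x = (x_1, x_2, 0, \ldots, 0)$, which yields
\begin{equation*}
\frac{d}{d\rho}T_\rho f(x) \;=\; \frac{x_1 A(\rho) + x_2 B(\rho)}{\sqrt{1-\rho^2}} \;+\; \frac{\rho\, C(\rho)}{1-\rho^2},
\end{equation*}
where $A = \int_{\R^n} y_1 f(z)\,d\gamma_n$, $B = \int_{\R^n} y_2 f(z)\,d\gamma_n$, $C = \int_{\R^n}\sum_i(1-y_i^2)f(z)\,d\gamma_n$, and $z = (x_1, x_2, 0, \ldots, 0)\rho + y\sqrt{1-\rho^2}$. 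I match the three pieces of this decomposition to the contributions in the stated bound: the $|x_1|$ piece from the first, the $|x_2|^2$, $(n+1)\rho|x_2|$, $|x_1 x_2|$ terms from the second, and the $2n$ piece from the third.

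The first piece is immediate: $|x_1 A|/\sqrt{1-\rho^2} \leq |x_1|\cdot S$, since $|y_1|$ is among the monomials in the integrand of $S$. The $x_2 B$ piece is where the hypothesis $\int y_2 f\,d\gamma_n = 0$ is used: it says $B(0)=0$, so the fundamental theorem of calculus gives
\begin{equation*}
B(\rho) = \int_0^\rho \partial_\eta B(\eta)\,d\eta,
\end{equation*}
producing the factor of $\rho$ that appears in front of the $|x_2|^2$, $(n+1)\rho|x_2|$, and $|x_1 x_2|$ contributions.

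I then compute $\partial_\eta B$ by differentiating the argument $z(\eta) = (x_1,x_2,0,\ldots)\eta + y\sqrt{1-\eta^2}$ and integrating by parts in $y$ via $\int g\,\partial_{y_i}f\,d\gamma_n = \int (gy_i - \partial_{y_i}g) f\,d\gamma_n$. The identity $\sum_i \partial_{y_i}(y_2 y_i) = (n+1) y_2$ is what makes the polynomial $y_2(|y|^2 - n - 1)$ appear; the net result is
\begin{equation*}
\partial_\eta B(\eta) = \frac{x_1 \int y_1 y_2 f(z)\,d\gamma_n + x_2 \int(y_2^2-1) f(z)\,d\gamma_n}{\sqrt{1-\eta^2}} - \frac{\eta \int y_2(|y|^2 - n - 1) f(z)\,d\gamma_n}{1-\eta^2}.
\end{equation*}
Each integrand is a polynomial of degree at most $3$ in $y$, so each integral is controlled by a term in the sum defining $S$. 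Using $\int_0^\rho d\eta/\sqrt{1-\eta^2} \leq 2\rho$ and $\int_0^\rho \eta\,d\eta/(1-\eta^2) \leq \rho^2/(1-\rho^2)$, then multiplying through by $|x_2|/\sqrt{1-\rho^2}$, I obtain the three contributions $2\rho|x_1 x_2| S$, $2\rho|x_2|^2 S$, and $2\rho(n+1)\rho|x_2| S$. For the remaining $\rho C/(1-\rho^2)$ term I bound $|C|\leq 2nS$ by writing $\sum_i(1-y_i^2) = n - |y|^2$ (both the constant and each $y_i^2$ appear in the sum defining $S$) and use $\rho/(1-\rho^2) \leq 2\rho$, yielding $4n\rho\,S = 2\rho\cdot 2n\cdot S$. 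Summing the four contributions gives the stated inequality.

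The main obstacle is bookkeeping: after all the IBP calculations and $\eta$-integrations, I need each of the six coefficients in the prefactor ($|x_1|$, $|x_2|^2$, $(n+1)\rho|x_2|$, $|x_1 x_2|$, $2n$, and the overall $2\rho$) to come out with exactly the stated constant. The subtlest step is the computation $\sum_i[y_2 y_i^2 - \partial_{y_i}(y_2 y_i)] = y_2(|y|^2 - n - 1)$ which produces the coefficient $n+1$; a sign or count slip here would corrupt this factor. The final delicate point is verifying that the inverse-Gaussian weights $1/\sqrt{1-\eta^2}$ and $\eta/(1-\eta^2)$ introduced by the IBP can be cleanly absorbed into the overall $2\rho$ prefactor via the elementary integral bounds above.
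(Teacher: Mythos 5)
Your proposal is correct, and it takes a genuinely different decomposition from the paper's proof, which is worth noting.

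The paper's proof of Lemma \ref{lemma7} uses a \emph{triple nested} FTC to control the term $B(\rho)\colonequals\int_{\R^n}y_2\,f\big((x_1,x_2,0,\dots,0)\rho+y\sqrt{1-\rho^2}\big)\,d\gamma_n(y)$: first an FTC in $\rho$ for $B$ at $x=0$ (equation \eqref{five0.1}, exploiting $B(0)=0$), then an FTC in $x_2$ passing from $(0,0)$ to $(0,x_2)$ (equation \eqref{five0.2}), then an FTC in $x_1$ passing from $(0,x_2)$ to $(x_1,x_2)$ (equation \eqref{five3}); these are combined in \eqref{five4}. Your proposal instead observes that $B(0)=\int y_2\,f\,d\gamma_n=0$ already at \emph{general} $(x_1,x_2)$ — the $x$-dependence is killed by $\eta=0$ — so a \emph{single} FTC in $\rho$ applies, and then you compute $\partial_\eta B(\eta)$ directly, producing the $x_1y_1y_2$, $x_2(y_2^2-1)$, and $\eta\, y_2\big(|y|^2-n-1\big)$ contributions all in one integration by parts. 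I checked that your computation of $\partial_\eta B(\eta)$ is correct, including the count $(n-1)+2=n+1$ that produces the factor $(n+1)$; when restricted to $x_1=x_2=0$, it reduces to the paper's \eqref{five0.1}, confirming consistency. What your route buys: a single IBP and a single FTC rather than three each, and in fact you only need the supremum over the diagonal $(t_1,t_2)=(x_1,x_2)$, so you could state a marginally tighter lemma. What the paper's nesting buys: nothing essential, beyond arranging the error terms so that each of the supremum variables $t_1,t_2,\eta$ appears once and only once.

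One honest caveat applies to both your argument and the paper's: both treat the quantity $S$ (the supremum in \eqref{five0}) as if it bounds individual monomial integrals $\big|\int\prod_i|y_i|^{\ell_i}\,f\,d\gamma_n\big|$, even though as literally written the absolute value in the definition of $S$ sits \emph{outside} the sum over $\ell$. The intended reading, justified by the way Lemma \ref{lemma6.1} feeds into Lemma \ref{lemma8}, is that $S$ in practice controls $\int\big(\sum_\ell\prod_i|y_i|^{\ell_i}\big)\,|f|\,d\gamma_n$, from which the monomial-by-monomial bounds you use do follow. Similarly, your explicit constants ($2\rho\,|x_2|^2$ etc.) require the bounds $1/(1-\rho^2)\le 2$ and $\arcsin\rho\le\rho/\sqrt{1-\rho^2}$, which cost small additional factors; the paper's own passage from \eqref{five4} to \eqref{five0} glosses over the $1/\sqrt{1-\rho^2}$ factor present in \eqref{six1.2} in exactly the same way. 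In the regime where the lemma is applied ($0<\rho$ exponentially small in $n$) these discrepancies are harmless, and your bound is of the same order as the paper's.
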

\begin{proof}

By integrating by parts, note that
\begin{flalign*}
&\frac{d}{d\rho}\int_{\R^{n}} y_{2}f(y\sqrt{1-\rho^{2}})d\gamma_{n}(y)\\
&\,=\frac{\rho}{1-\rho^{2}}\int_{\R^{n}} y_{2}((n+1)-y_{2}^{2})f(y\sqrt{1-\rho^{2}})d\gamma_{n}(y)
-\sum_{i\neq2}\frac{\rho}{1-\rho^{2}}\int_{\R^{n}} y_{i}^{2}y_{2}f(y\sqrt{1-\rho^{2}})d\gamma_{n}(y).
\end{flalign*}
So, using $\int_{\R^{n}} y_{2}f(y)d\gamma_{n}(y)=0$ and the Fundamental Theorem of Calculus,
\begin{equation}\label{five0.1}
\begin{aligned}
&\int_{\R^{n}} y_{2}f(y\sqrt{1-\rho^{2}})d\gamma_{n}(y)\\
&\leq\frac{\rho^{2}}{1-\rho^{2}}\sup_{\eta\in[0,\rho]}\left(\int_{\R^{n}} y_{2}((n+1)-y_{2}^{2})f(y\sqrt{1-\eta^{2}})d\gamma_{n}(y)\right.\\
&\qquad\qquad\qquad\qquad\left.-\sum_{i\neq2}\int_{\R^{n}} y_{i}^{2}y_{2}f(y\sqrt{1-\eta^{2}})d\gamma_{n}(y)\right).
\end{aligned}
\end{equation}
%

By integrating by parts again, note that
\begin{equation}\label{five0.2}
\begin{aligned}
&\frac{\partial}{\partial x_{2}}\int_{\R^{n}} y_{2}f((0,x_{2},0,\ldots,0)\rho+y\sqrt{1-\rho^{2}})d\gamma_{n}(y)\\
&\qquad=\frac{\rho}{\sqrt{1-\rho^{2}}}\int_{\R^{n}} f((0,x_{2},0,\ldots,0)\rho+y\sqrt{1-\rho^{2}})(y_{2}^{2}-1)d\gamma_{n}(y).
\end{aligned}
\end{equation}
Applying the Fundamental Theorem of Calculus to \eqref{five0.2} and then using \eqref{five0.1},
\begin{equation}\label{five1}
\begin{aligned}
&\abs{\int_{\R^{n}} y_{2}f((0,x_{2},0,\ldots,0)\rho+y\sqrt{1-\rho^{2}})d\gamma_{n}(y)}\\
&\,\leq\abs{x_{2}}\sup_{t\in[0,x_{2}]}\abs{\frac{\rho}{\sqrt{1-\rho^{2}}}\int_{\R^{n}} f((0,t,0,\ldots,0)\rho+y\sqrt{1-\rho^{2}})(y_{2}^{2}-1)d\gamma_{n}(y)}\\
&\,+\frac{\rho^{2}}{1-\rho^{2}}\sup_{\eta\in[0,\rho]}\left(\int_{\R^{n}} y_{2}((n+1)-y_{2}^{2})f(y\sqrt{1-\eta^{2}})d\gamma_{n}(y)\right.\\
&\qquad\qquad\qquad\qquad\left.-\sum_{i\neq2}\int_{\R^{n}} y_{i}^{2}y_{2}f(y\sqrt{1-\eta^{2}})d\gamma_{n}(y)\right).
\end{aligned}
\end{equation}

By integrating by parts as before,
\begin{equation}\label{five3}
\begin{aligned}
&\frac{\partial}{\partial x_{1}}[x_{2}\int_{\R^{n}} y_{2}f((x_{1},x_{2},0,\ldots,0)\rho+y\sqrt{1-\rho^{2}})d\gamma_{n}(y)]\\
&\qquad=x_{2}\frac{\rho}{\sqrt{1-\rho^{2}}}\int_{\R^{n}} y_{2}y_{1}f((x_{1},x_{2},0,\ldots,0)\rho+y\sqrt{1-\rho^{2}})d\gamma_{n}(y).
\end{aligned}
\end{equation}
Combining \eqref{six1.2}, \eqref{five1} and \eqref{five3},
\begin{equation}\label{five4} 
\begin{aligned}
&\abs{(d/d\rho)T_{\rho}f(x)}
\leq\abs{x_{1}}\abs{\int_{\R^{n}} y_{1}f((x_{1},x_{2},0,\ldots,0)\rho+y\sqrt{1-\rho^{2}})d\gamma_{n}(y)}\\
&\,\,+\abs{x_{2}}^{2}\sup_{t\in[0,x_{2}]}\abs{\frac{\rho}{\sqrt{1-\rho^{2}}}\int_{\R^{n}} f((0,t,0,\ldots,0)\rho+y\sqrt{1-\rho^{2}})(y_{2}^{2}-1)d\gamma_{n}(y)}\\
&\,\,+\frac{\rho^{2}\abs{x_{2}}}{1-\rho^{2}}\sup_{\eta\in[0,\rho]}\left(\int_{\R^{n}} y_{2}((n+1)-y_{2}^{2})f(y\sqrt{1-\eta^{2}})d\gamma_{n}(y)\right.\\
&\qquad\qquad\qquad\qquad\qquad\left.-\sum_{i\neq2}\int_{\R^{n}} y_{i}^{2}y_{2}f(y\sqrt{1-\eta^{2}})d\gamma_{n}(y)\right)\\
&\,\,+\abs{x_{1}x_{2}}\sup_{t\in[0,x_{1}]}\frac{\rho}{\sqrt{1-\rho^{2}}}\abs{\int_{\R^{n}} y_{2}y_{1}f((t,x_{2},0,\ldots,0)\rho+y\sqrt{1-\rho^{2}})d\gamma_{n}(y)}\\
&\,\,+\frac{\rho}{\sqrt{1-\rho^{2}}}\abs{\int_{\R^{n}}(\sum_{i=1}^{n}(y_{i}^{2}-1))f((x_{1},x_{2},0,\ldots,0)\rho+y\sqrt{1-\rho^{2}})d\gamma_{n}(y)}.
\end{aligned}
\end{equation}
We then deduce \eqref{five0} from \eqref{five4}.
\end{proof}

\section{The Main Lemma}

Lemma \ref{lemma8} below represents the main tool in the proof of the main theorem.  As depicted in Figure \ref{fig0}, Lemma \ref{lemma8} says that, if an optimal partition is close to being simplicial conical, then it is actually much closer to being simplicial conical.  So, this Lemma can be understood as a feedback loop, or as a contractive mapping type of argument.  We first give an intuitive sketch of the proof of the Lemma.  Let $\rho>0$.  We begin with a partition $\{A_{p}\}_{p=1}^{3}\subset\R^{n}$ maximizing noise stability \eqref{six1.5}.  We assume that there are disjoint sets $\{D_{p}\}_{p=1}^{3}$ that resemble a simplicial conical partition, as in the left side of Figure \ref{fig0}.  We also assume that $A_{p}\supset D_{p}$ for all $p=1,2,3$.  We then find a sequence of sets $\{D_{p,1}\}_{p=1}^{3}$, $\{D_{p,2}\}_{p=1}^{3}$, $\ldots$ $\{D_{p,R}\}_{p=1}^{3}$ such that $D_{p,r}\subset D_{p,r+1}$ for all $1\leq p\leq 3$, for all $r\geq1$.  This sequence of sets is chosen so that the following implication can be proven:
\begin{equation}\label{ten}
A_{p}\supset D_{p,r}\quad\Longrightarrow\quad A_{p}\supset D_{p,r+1}.
\end{equation}
In order to prove \eqref{ten}, we need to show: if $A_{p}\supset D_{p,r}$, then we can get sufficiently strong estimates on $LT_{\rho}1_{D_{p,r+1}}$ such that \eqref{three1} can be verified on $A_{p}$ for each $p=1,2,3$.  For example, in Step 1 of the proof of Lemma \ref{lemma8}, the estimate \eqref{five8} eventually implies \eqref{five8.3}.  And \eqref{five8.3} says that $A_{p}$ must contain more points than the initial information that we assumed in \eqref{five0.3}.

Finally, we need to choose our sets $\{D_{p,r}\}_{p=1}^{3}$ appropriately so that, after finitely many implications of the form \eqref{ten}, we eventually get the conclusion \eqref{five0.4}.  That is, the three sets $\{D_{p,R}\}_{p=1}^{3}$ resemble the right side of Figure \ref{fig0}, and $A_{p}\supset D_{p,R}$ for each $p=1,2,3$.  Thus concludes our description of the main strategy of the proof.  Within the proof itself, the sets $\{D_{p,1}\}_{p=1}^{3}$, $\{D_{p,2}\}_{p=1}^{3}$, $\ldots$ will not be explicitly defined.  However, portions of these sets will be defined at the end of every Step of the proof.  In particular, examine the sets defined by the following sequence of assertions: \eqref{five0.3}, \eqref{five8.3}, \eqref{five7.3}, \eqref{five11}, \eqref{eight10}, \eqref{eight21}, \eqref{eight32}, \eqref{eight40}, and finally \eqref{five0.4}.

Unfortunately, there are many technical obstacles that stand in the way of bringing this strategy to fruition.  The first minor issue is that we cannot control small rotations of our sets.  At every step of the proof, we therefore need to redefine our simplicial sets $B_{i},B_{j}$ to account for these small rotations.  However, the main technical issue is that it is not at all obvious how to choose the sets $\{D_{p,r}\}_{p=1}^{3}$ for $r=1,2,3,\ldots$ such that \eqref{ten} can be proven for each $r=1,2,3,\ldots$.  Moreover, the simplest choice of these sets, namely dilations of the sets depicted in Figure \ref{fig0}, do not produce satisfactory estimates.

Ultimately, the sequence of sets defined by \eqref{five8.3}, \eqref{five7.3}, \eqref{five11}, \eqref{eight10}, \eqref{eight21}, \eqref{eight32}, \eqref{eight40} succeeds in proving the sequence of implications \eqref{ten} for $r=1,2,3,\ldots$.  Lemma \ref{lemma7} allows us to control the errors from our estimates, and we then make around seven modifications of the same error estimate within Lemma \ref{lemma8}.  This error estimate allows us to apply Lemma \ref{lemma4}, so that we can improve our knowledge of the optimal partition $\{A_{i}\}_{i=1}^{k}$ via \eqref{three1}.

It would be preferable to write Lemma \ref{lemma8} as seven applications of a single Lemma, however the statement of such a Lemma would perhaps be so long and convoluted that its application would become opaque.  We therefore use the longer presentation below in the hope of providing greater clarity.  Finally, in the statement of Lemma \ref{lemma8} below, note that the plane $\Pi$ exists independently of $i,j\in\{1,\ldots,k\}$.

\begin{lemma}\label{lemma8}
Fix $n\geq2$, $k=3$.  Let $0<\eta<\rho<e^{-20(n+1)^{10^{12}n^{3}(n+2)!}}$.  Let $\{A_{i}\}_{i=1}^{k}$ be a partition of $\R^{n}$ such that \eqref{three1} holds.  Let $\Pi\subset\R^{n}$ be a fixed $2$-dimensional plane such that $0\in\Pi$.  Assume that, for each pair $i,j\in\{1,2,\ldots,k\}$ with $i\neq j$, there exists $\lambda'>0$ and there exists a regular simplicial conical partition $\{B_{p}'\}_{p=1}^{k}\subset\R^{n}$ such that
\begin{equation}\label{five0.5}
\int_{\R^{n}}y(1_{A_{i}}(y)-1_{A_{j}}(y))d\gamma_{n}(y)=\lambda'\int_{\R^{n}}y(1_{B_{i}'}(y)-1_{B_{j}'}(y))d\gamma_{n}(y),
\end{equation}
such that
\begin{equation}\label{five0.9}
\int_{B_{p}'}xd\gamma_{n}(x)\in\Pi,\,\forall\,p\in\{i,j\},
\end{equation}
and such that
\begin{equation}\label{five0.3}
\begin{aligned}
&\{x\in B_{i}'\cup B_{j}'\colon 1_{A_{i}}(x)-1_{A_{j}}(x)\neq 1_{B_{i}'}(x)-1_{B_{j}'}(x)\}\\
&\quad\subset\{x\in B_{i}'\cup B_{j}'\colon\absf{d(x,(\partial B_{i}')\cup(\partial B_{j}'))}<\eta\vee\vnorm{x}_{2}\geq\sqrt{-2\log\eta}+(\rho+\eta)\sqrt{-2\log\rho}\}.
\end{aligned}
\end{equation}
Then, for each pair $i,j\in\{1,2,\ldots,k\}$ with $i\neq j$, there exists $\lambda''>0$ and there exists a regular simplicial conical partition $\{B_{p}''\}_{p=1}^{k}\subset\R^{n}$ such that $\int_{\R^{n}}y(1_{A_{i}}(y)-1_{A_{j}}(y))d\gamma_{n}(y)=\lambda''\int_{\R^{n}}y(1_{B_{i}''}(y)-1_{B_{j}''}(y))d\gamma_{n}(y)$, such that
$\int_{B_{p}''}xd\gamma_{n}(x)\in\Pi,\,\forall\,p\in\{i,j\}$, and such that
\begin{equation}\label{five0.4}
\begin{aligned}
&\{x\in B_{i}''\cup B_{j}''\colon 1_{A_{i}}(x)-1_{A_{j}}(x)\neq 1_{B_{i}''}(x)-1_{B_{j}''}(x)\}\\
&\quad\subset\{x\in B_{i}''\cap B_{j}''\colon\absf{d(x,(\partial B_{i}'')\cup(\partial B_{j}''))}<\rho\eta\vee\vnorm{x}_{2}\geq\sqrt{-2\log(\rho\eta)}+1\}.
\end{aligned}
\end{equation}
\end{lemma}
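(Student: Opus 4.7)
The plan is to prove the lemma by iterating a single ``improvement step'' seven times, following the roadmap of intermediate set inclusions indicated by \eqref{five8.3}, \eqref{five7.3}, \eqref{five11}, \eqref{eight10}, \eqref{eight21}, \eqref{eight32}, and \eqref{eight40}. Fix the pair $i\neq j$ and work in coordinates so that $e_1,e_2$ span $\Pi$ with $e_1$ parallel to $\int y(1_{A_i}-1_{A_j})d\gamma_n$; by \eqref{five0.5} this is also the direction of $\int y(1_{B_i'}-1_{B_j'})d\gamma_n$. Keeping the reference partition $\{B_p'\}$ fixed throughout, I exhibit an increasing sequence of ``good'' subsets $D_{p,0}\subset D_{p,1}\subset\cdots\subset D_{p,7}$ (with $D_{p,0}$ corresponding to \eqref{five0.3} and $D_{p,7}$ to \eqref{five0.4}), each contained in $A_p$ for $p\in\{i,j\}$. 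The containment $D_{p,r+1}\subset A_p$ is read off from the first variation \eqref{three1}: it suffices to verify the strict inequality $LT_\rho(1_{A_i}-1_{A_j})(x)>0$ pointwise on the annular increment $D_{i,r+1}\setminus D_{i,r}$ (and symmetrically for $j$), which then forces the new inclusion up to a null set.

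The verification of the strict inequality is done by comparing the true noise-derivative to the reference one. Writing $f=1_{A_i}-1_{A_j}$ and $g=1_{B_i'}-1_{B_j'}$, I split
\[
\frac{d}{d\rho}T_\rho f(x)=\frac{d}{d\rho}T_\rho g(x)+\frac{d}{d\rho}T_\rho (f-g)(x).
\]
The reference term decomposes via \eqref{six1.2} into a gradient piece, lower bounded by Lemma \ref{lemma4}(i)--(ii) by a nonnegative quantity of order $x_1\gamma_n(\delta_{(B_i'\cap B_j'-x\rho)/\sqrt{1-\rho^2}})$, and a Hermite-order-two piece, which by Lemma \ref{lemma3} vanishes on cones and by Lemma \ref{lemma4}(iii)--(iv) is either controllably small (via the crude bound $\vnorm{(\sqrt{6}+(n-1)\sqrt{2})\rho x_1/\sqrt{1-\rho^2}}$) or carries the correct sign once $x_1>\sqrt{n}\sqrt{1-\rho^2}/\rho$. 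The error $f-g$ has vanishing $y_2$-moment by the coordinate choice and \eqref{five0.5}, so Lemma \ref{lemma7} applies and bounds $\absf{(d/d\rho)T_\rho(f-g)(x)}$ by $(\absf{x_1}+O(\rho))$ times an integral of $\absf{f-g}$ against Hermite polynomials of order $\leq 3$. The inductive hypothesis places the support of $f-g$ inside the current boundary shell and far-field tail, so this integral is bounded by Lemma \ref{lemma6.1} by expressions of the form $\eta\, p(t)\, e^{-t^2/2}$. Under the double-exponential smallness of $\rho$ assumed in the lemma, the reference main term dominates the error on $D_{p,r+1}\setminus D_{p,r}$, giving $LT_\rho f(x)>0$ there.

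The seven intermediate shapes are dictated by the geometry of where the Lemma \ref{lemma4} main-term lower bound actually beats the Lemma \ref{lemma6.1} error: the near-boundary regime (where Lemma \ref{lemma4}(iii) controls the Hermite-order-two piece) and the far-field regime (where Lemma \ref{lemma4}(iv) controls it) have different natural scales, and the transition must be crossed in several stages so that in each one the previously certified shell width (respectively, far-field cutoff) is just small enough (respectively, just large enough) to absorb the next geometric enlargement. After all seven increments, the shell width has been compressed from $\eta$ to $\rho\eta$ and the far-field cutoff pushed from $\sqrt{-2\log\eta}+(\rho+\eta)\sqrt{-2\log\rho}$ out to $\sqrt{-2\log(\rho\eta)}+1$, which is exactly \eqref{five0.4}; one may take $\{B_p''\}=\{B_p'\}$ and $\lambda''=\lambda'$ since the barycenter $\int y(1_{A_i}-1_{A_j})d\gamma_n$ is fixed by hypothesis. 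The chief obstacle is precisely this geometric matching: a single uniform annular dilation of the starting good region fails because the Lemma \ref{lemma6.1} error integrated over the full dilated annulus outstrips the Lemma \ref{lemma4} main term, and only by carving the enlargement into seven carefully sized pieces --- each small enough, in terms of both width and location, to sustain the bootstrap --- can the feedback loop be closed.
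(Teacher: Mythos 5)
Your high-level strategy---bootstrap from the initial containment \eqref{five0.3} to \eqref{five0.4} by certifying ever-larger subsets of $A_i$ using the first variation \eqref{three1}, with Lemma \ref{lemma4} supplying the main positive term and Lemmas \ref{lemma7} and \ref{lemma6.1} controlling the error---is the paper's approach, and you have correctly identified the role of each ingredient. But two gaps remain, each of which would stop the argument as you have sketched it.

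First, \eqref{three1} is a three-way comparison: to conclude $x\in A_i$ you must verify $LT_\rho 1_{A_i}(x)>LT_\rho 1_{A_{i'}}(x)$ for \emph{both} $i'\neq i$, not merely for the fixed $j$. Near the third boundary $(\partial B_i')\setminus B_j'$, the pairwise estimate you run for $(i,j)$ in the coordinates adapted to $B_i'\cap B_j'$ says nothing about the comparison with $A_k$, and the pairwise estimate for $(i,k)$ is weak there because in its own coordinates $|x_1|$ is tiny. The paper resolves this by not actually comparing against $1_{B_i'}-1_{B_j'}$ but against thickened cones $B_i=B_{i,\epsilon}$, $B_j=B_{j,\epsilon}$ defined in \eqref{five7.5} (and redefined with shrinking $\epsilon$ at \eqref{five60}, \eqref{five61}, \eqref{five67}, \eqref{five68}, \eqref{five70}), whose angular padding past the third boundary guarantees that when the pairwise conclusions for all three pairs are intersected via inclusion-exclusion, no gap remains inside $B_i'\cup B_j'$. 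You keep $\{B_p'\}$ as the comparison object throughout and invoke only the pair $(i,j)$ and its mirror, so your argument cannot cover the third-boundary region.

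Second, the improvement is not seven applications of one step. The seven labeled milestones you list are waypoints, but between \eqref{five11} and \eqref{eight10}, and between \eqref{eight10} and \eqref{eight21}, the paper runs two nested inductions indexed by $(K,M)$ (with $0\leq M\leq\sqrt K$) and then $(K,R)$, whose length is dictated by the ratio $\log\eta/\log\rho$ and which therefore involve an \emph{a priori} unbounded number of iterations; each substep tightens $|x_1|$ by a factor $\rho^{.9}$ while growing the radial cutoff from $\sqrt{-2^{m+1}\log\rho}$ to $\sqrt{-2^{m+2}\log\rho}$, so the certified region is a staircase in the $(|x_1|,\vnorm{x}_2)$ plane (Figure \ref{fig1}), not an annulus. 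The paper remarks explicitly that packaging this as a fixed number of applications of a single lemma is infeasible precisely because the substeps are not uniform. Your description of ``annular increments'' and ``seven carefully sized pieces'' does not supply the variable-length induction that is needed, nor the joint width/radius geometry on which the Lemma \ref{lemma6.1} versus Lemma \ref{lemma4} balance actually hinges.

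Your closing observation that one may take $B_p''=B_p'$ and $\lambda''=\lambda'$ is correct and matches the paper. The coordinate normalization via \eqref{five0.5} and the use of the vanishing $y_2$-moment to invoke Lemma \ref{lemma7} are also right.
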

\begin{figure}[h!]
\centering
\def\svgwidth{\figwidtha} 
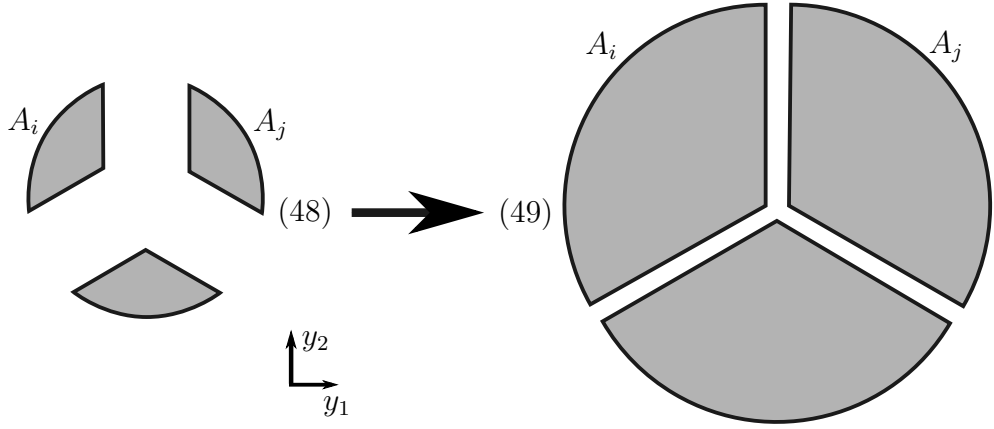
\caption{Depiction of Lemma \ref{lemma8}}
\label{fig0}
\end{figure}

\begin{proof}
Fix $i,j\in\{1,2,\ldots,k\}$ with $i\neq j$.  By applying a rotation to $\R^{n}$, we assume that $B_{i}'\cap B_{j}'\subset\{x\in\R^{n}\colon x_{1}=0\}$ and $B_{i}'\subset\{x\in\R^{n}\colon x_{1}\geq0\}$.  Assume that \eqref{five0.3} and \eqref{five0.5} hold. Let $n_{i}'\in\R^{n}$ denote the interior unit normal of $B_{i}'$ such that $n_{i}'$ is normal to $(\partial B_{i}')\setminus B_{j}'$, and let $n_{j}'\in\R^{n}$ denote the interior unit normal of $B_{j}'$ such that $n_{j}'$ is normal to $(\partial B_{j}')\setminus B_{i}'$.  Define $B_{i},B_{j}$ such that
\begin{equation}\label{five7.5}
\begin{aligned}
&B_{i}=B_{i,\frac{2\eta}{\sqrt{-2\log\eta}}}\colonequals B_{i}'\cup\{x\in\R^{n}\colon x_{1}\geq0\wedge\langle n_{i}',x/\vnorm{x}_{2}\rangle\geq-2\eta/\sqrt{-2\log\eta}\},\\
&B_{j}=B_{j,\frac{2\eta}{\sqrt{-2\log\eta}}}\colonequals B_{j}'\cup\{x\in\R^{n}\colon x_{1}\leq0\wedge\langle n_{j}',x/\vnorm{x}_{2}\rangle\geq-2\eta/\sqrt{-2\log\eta}\}.
\end{aligned}
\end{equation}
Let $x=(x_{1},\ldots,x_{n})\in B_{i}\cup B_{j}$.  If $x_{1}<\sqrt{n}\sqrt{1-\rho^{2}}/\rho$, then
$$\gamma_{n}\left(\delta_{\frac{(B_{i}\cap B_{j})-x\rho}{\sqrt{1-\rho^{2}}}}\right)
\geq\frac{e^{-n/2}}{\sqrt{2\pi}}\int_{\sqrt{n}}^{\infty}e^{-t^{2}/2}dt/\sqrt{2\pi}
\geq\frac{e^{-n/2}}{2\pi}\frac{1}{2\sqrt{n}}e^{-n/2}
\geq \frac{1}{100\sqrt{n}}e^{-n}.$$
So, using Lemma \ref{lemma4}, \eqref{six1.1}, and $\rho<10^{-5}n^{-3/2}e^{-n}$, if $x\in B_{i}\cup B_{j}$ then
\begin{equation}\label{five6}
\rho^{-1}\sign(x_{1})\cdot LT_{\rho}(1_{B_{i}}-1_{B_{j}})(x)\geq
\begin{cases}
\frac{1}{9}\abs{x_{1}}e^{-x_{1}^{2}\rho^{2}/(2(1-\rho^{2}))}&,\abs{x_{1}}\leq1\,\vee\, x_{2}\geq0\\
\frac{1}{9}\frac{\abs{x_{1}}}{\rho\abs{x_{2}}}e^{-\vnorm{x}_{2}^{2}\rho^{2}/(2(1-\rho^{2}))}
&,\rho x_{2}\leq-1/\sqrt{3}.
\end{cases}
\end{equation}

Let $\sigma\colon\R^{n}\to\R^{n}$ be a rotation such that the $x_{1}$-axis is fixed.  For any such rotation, let
\begin{equation}\label{five6.0}
g(x)=g_{\sigma}(x)\colonequals1_{A_{i}}(\sigma x)-1_{A_{j}}(\sigma x)-(1_{B_{i}}(\sigma x)-1_{B_{j}}(\sigma x)).
\end{equation}
By \eqref{five0.5}, and since $B_{i}\cup B_{j}$ is symmetric with respect to reflection across $B_{i}\cap B_{j}\subset\{x\in\R^{n}\colon x_{1}=0\}$, $\exists$ $\lambda>0$ such that $\int_{\R^{n}} y(1_{A_{i}}(y)-1_{A_{j}}(y))d\gamma_{n}(y)=\lambda\int_{\R^{n}} y(1_{B_{i}}(y)-1_{B_{j}}(y))d\gamma_{n}(y)$.  So $\int_{\R^{n}}y_{2}g(y)d\gamma_{n}(y)=0$, for all such rotations $\sigma$.  For all $x\in\R^{n}$, and for all rotations $\sigma\colon\R^{n}\to\R^{n}$ fixing the $x_{1}$-axis,
\begin{equation}\label{five7.0}
\abs{LT_{\rho}(1_{A_{i}}-1_{A_{j}})(\sigma x)-LT_{\rho}(1_{B_{i}}-1_{B_{j}})(\sigma x)}\leq\abs{LT_{\rho}g(x)}.
\end{equation}

\noindent\embolden{Step 1.  An estimate for large $x$.}

Let $\sigma\colon\R^{n}\to\R^{n}$ be any rotation fixing the $x_{1}$-axis.  By \eqref{five6.0}, $\abs{g}\leq2$.  Applying \eqref{five0.3} and \eqref{five7.5} and the inclusion-exclusion principle, $g=0$ on the set
\begin{flalign*}
&\{y\in\R^{n}\colon d(\sigma y-\rho x,(\partial B_{i}')\cup(\partial B_{j}'))>\eta+3\eta\\
&\qquad\wedge\vnorm{\sigma y-\rho x}_{2}\leq\sqrt{-2\log\eta}+(\rho+\eta)\sqrt{-2\log\rho}\}.
\end{flalign*}

Let $x\in\R^{n}$ with $\vnorm{x}_{2}^{2}\leq-4\log(\eta\rho)$.  Since $0<\eta<\rho$, we have $\rho\vnorm{x}_{2}\leq-4\rho\log(\eta\rho)\leq-8\rho\log\eta$.
By \eqref{five0.3}, \eqref{five0.9} and the inclusion-exclusion principle, $g\neq0$ only on the following sets: $\{y\in\R^{n}\colon\absf{d(\sigma y-\rho x,(\partial B_{i}')\cup(\partial B_{j}'))}\leq4\eta/\sqrt{1-\rho^{2}}\}$ and $\{y\in\R^{n}\colon\vnorm{\sigma y-\rho x}_{2}\geq\sqrt{-2\log\eta}/\sqrt{1-\rho^{2}}\}$.  Then Lemma \ref{lemma6.1} says
\begin{flalign*}
&\sup_{\substack{t_{1}\in[\min(x_{1},0),\max(x_{1},0)]\\\,t_{2}\in[\min(x_{2},0),\max(x_{2},0)]\\ \alpha\in[0,\rho]}}
\bigg|\int_{\R^{n}}\sum_{\substack{\ell\in\N^{n}\colon\\0\leq\abs{\ell}\leq3}}
\prod_{i=1}^{n}\abs{y_{i}}^{\ell_{i}}g((t_{1},t_{2},0,\ldots,0)\alpha+y\sqrt{1-\alpha^{2}})d\gamma_{n}(y)\bigg|\\
&\qquad\leq500000n^{3}4\eta+200(n+2)!((-2(1-2\rho)\log\eta)^{(n+1)/2}+1)\eta^{1-2\rho}.
\end{flalign*}
Using Lemma \ref{lemma7} and \eqref{five7.0},
\begin{equation}\label{five8}
\begin{aligned}
&\vnorm{x}_{2}^{2}\leq-4\log(\rho\eta)\,\wedge\,\abs{x_{1}}\geq(\rho\eta)^{1/3}\\
&\Longrightarrow\rho^{-1}\abs{LT_{\rho}(1_{A_{i}}-1_{A_{j}})(x_{1},x_{2},0,\ldots,0)-LT_{\rho}(1_{B_{i}}-1_{B_{j}})(x_{1},x_{2},0,\ldots,0)}\\
&\qquad\leq[\abs{x_{1}}+2\rho(\abs{x_{2}}^{2}+(n+1)\rho\abs{x_{2}}+\abs{x_{1}x_{2}}+2n)]\\
&\qquad\qquad\cdot[500000n^{3}4\eta+200(n+2)!((-2(1-2\rho)\log\eta)^{(n+1)/2}+1)\eta^{1-2\rho}]\\
&\qquad<10^{7}(n+2)!(-2\log\eta)^{(n+5)/2}\eta^{1-2\rho}.
\end{aligned}
\end{equation}

Also, by \eqref{five6}, and using that $0<\eta<\rho<10^{-5}n^{-3/2}e^{-n}$,
\begin{equation}\label{five8.1}
\begin{aligned}
&\vnorm{x}_{2}^{2}\leq-4\log(\rho\eta)\,\wedge\, x\in B_{i}\cup B_{j}\\
&\qquad\Longrightarrow\rho^{-1}\sign(x_{1})\cdot(LT_{\rho}(1_{B_{i}}-1_{B_{j}})(x))>
\begin{cases}
\frac{1}{9}\abs{x_{1}}(\rho\eta)^{2\rho^{2}/(1-\rho^{2})}&,\abs{x_{1}}\leq1\,\vee\,x_{2}\geq0\\
\frac{1}{9}\frac{\abs{x_{1}}}{\rho\abs{x_{2}}}(\rho\eta)^{2\rho^{2}/(1-\rho^{2})}&, \rho x_{2}\leq-1/\sqrt{3}
\end{cases}.
\end{aligned}
\end{equation}

Combining \eqref{five8} and \eqref{five8.1}, using \eqref{five6.0} and $0<\eta<\rho<e^{-20(n+1)^{10^{12}(n+2)!}}$,
\begin{equation}\label{five8.11}
\begin{aligned}
&\abs{x_{1}}\geq(\rho\eta)^{1/3}
\wedge\vnorm{x}_{2}^{2}\leq-4\log(\rho\eta)
\wedge x\in B_{i}\cup B_{j}\\
&\qquad\Longrightarrow\rho^{-1}\abs{LT_{\rho}(1_{A_{i}}-1_{A_{j}})(x)-LT_{\rho}(1_{B_{i}}-1_{B_{j}})(x)}
\leq\eta^{4/5}\\
&\qquad\quad\wedge\,\rho^{-1}\sign(x_{1})\cdot LT_{\rho}(1_{B_{i}}-1_{B_{j}})(x)
\geq(\rho\eta)^{2\rho^{2}/(1-\rho^{2})}(\rho\eta)^{1/3}\min\left(1,\frac{1}{\rho\sqrt{-4\log(\eta\rho)}}\right).
\end{aligned}
\end{equation}

By \eqref{five8.11},
\begin{equation}\label{five8.2}
\abs{x_{1}}\geq(\rho\eta)^{1/3}
\,\wedge\,\vnorm{x}_{2}^{2}\leq-4\log(\rho\eta)
\,\wedge\,x\in B_{i}\cup B_{j}
\Longrightarrow\rho^{-1}\sign(x_{1})\cdot LT_{\rho}(1_{A_{i}}-1_{A_{j}})(x)>0.
\end{equation}
Finally, applying \eqref{five8.2} to Lemma \ref{lemma1} for all $i',j'\in\{1,\ldots,k\}$, $i'\neq j'$, and using \eqref{five7.5} together with the inclusion-exclusion principle,
\begin{equation}\label{five8.3}
\abs{x_{1}}\geq(\rho\eta)^{1/3}
\,\wedge\,\vnorm{x}_{2}^{2}\leq-4\log(\rho\eta)
\,\wedge\,x\in B_{i}'\cup B_{j}'
\Longrightarrow\sign(x_{1})\cdot(1_{A_{i}}(x)-1_{A_{j}}(x))>0.
\end{equation}

\noindent\embolden{Step 2.  An estimate for small $x$.}

Let $\sigma\colon\R^{n}\to\R^{n}$ be any rotation fixing the $x_{1}$-axis.  For $x=(x_{1},\ldots,x_{n})\in\R^{n}$ with $\vnorm{x}_{2}^{2}\leq-2\log\rho$, we have $\rho\vnorm{x}_{2}\leq\rho\sqrt{-2\log\rho}$.  Suppose also that $\abs{x_{1}}\leq\eta$ and $x\in B_{i}\cup B_{j}$.  By \eqref{five7.5}, \eqref{five0.3}, \eqref{five8.3}, \eqref{five0.9} and the inclusion-exclusion principle, $g\neq0$ only on the following sets:
\begin{flalign*}
&\{y\in\R^{n}\colon\absf{d(\sigma y-\rho x,(B_{i}\cap B_{j})\cup[(B_{i}\cup B_{j})\setminus(B_{i}'\cup B_{j}')])}\leq\eta/\sqrt{1-\rho^{2}}\},\\
&\{y\in\R^{n}\colon\absf{d(\sigma y-\rho x,(B_{i}\cap B_{j})\cup[(B_{i}\cup B_{j})\setminus(B_{i}'\cup B_{j}')])}\leq(\rho\eta)^{1/4}\\
&\qquad\qquad\qquad\qquad\qquad\qquad\wedge\vnorm{\sigma y-\rho x}_{2}\geq\sqrt{-2\log\eta}\},\\
&\{y\in\R^{n}\colon\vnorm{\sigma y-\rho x}_{2}\geq(1+1/10)\sqrt{-3\log(\eta\rho)}/\sqrt{1-\rho^{2}}\}.
\end{flalign*}
We then apply Lemma \ref{lemma6.1} to get
\begin{flalign*}
&\sup_{\substack{t_{1}\in[\min(x_{1},0),\max(x_{1},0)]\\\,t_{2}\in[\min(x_{2},0),\max(x_{2},0)]\\ \alpha\in[0,\rho]}}
\bigg|\int_{\R^{n}}\sum_{\substack{\ell\in\N^{n}\colon\\0\leq\abs{\ell}\leq3}}
\prod_{i=1}^{n}\abs{y_{i}}^{\ell_{i}}g((t_{1},t_{2},0,\ldots,0)\alpha+y\sqrt{1-\alpha^{2}})d\gamma_{n}(y)\bigg|\\
&\qquad\leq500000n^{3}\eta
+500000n^{3}(\rho\eta)^{1/4}(-2(1-\rho)^{2}\log\eta+1)\eta^{(1-\rho)^{2}}\\
&\qquad\quad+200(n+2)!((-3\log(\rho\eta))^{(n+1)/2}+1)(\rho\eta)^{3/2}+1600(n+2)!2\eta/\sqrt{-2\log\eta}.
\end{flalign*}
%

So, using Lemma \ref{lemma7}, \eqref{five7.0} and $0<\eta<\rho<e^{-20(n+1)^{10^{12}n^{3}(n+2)!}}$,
\begin{equation}\label{five7}
\begin{aligned}
&\rho^{3/4}\eta\leq\abs{x_{1}}\leq\eta\,\wedge\,\vnorm{x}_{2}^{2}\leq-2\log\rho\,\wedge\,x\in B_{i}\cup B_{j}\\
&\Longrightarrow\rho^{-1}\abs{LT_{\rho}(1_{A_{i}}-1_{A_{j}})(x_{1},x_{2},0,\ldots,0)-LT_{\rho}(1_{B_{i}}-1_{B_{j}})(x_{1},x_{2},0,\ldots,0)}\\
&\qquad\leq[\abs{x_{1}}+2\rho(\abs{x_{2}}^{2}+(n+1)\rho\abs{x_{2}}+\abs{x_{1}x_{2}}+2n)]\\
&\qquad\qquad\cdot\left[500000n^{3}3\eta+500000n^{3}(\rho\eta)^{1/4}(-2(1-\rho)^{2}\log\eta+1)\eta^{(1-\rho)^{2}}\right.\\
&\qquad\qquad\,\,\,\,\,\left.+200(n+2)!((-3\log(\rho\eta))^{(n+1)/2}+1)(\rho\eta)^{3/2}+1600(n+2)!2\eta/\sqrt{-2\log\eta}\right]\\
&\qquad<\frac{1}{10}\eta\rho^{3/4}.
\end{aligned}
\end{equation}
Also, by \eqref{five6},
\begin{equation}\label{five7.1}
\begin{aligned}
&\eta\rho^{3/4}\leq\abs{x_{1}}\leq\eta
\,\wedge\,\vnorm{x}_{2}^{2}\leq-2\log\rho
\,\wedge\,x\in B_{i}\cup B_{j}\\
&\qquad\qquad\qquad\Longrightarrow\rho^{-1}\sign(x_{1})\cdot LT_{\rho}(1_{B_{i}}-1_{B_{j}})(x)>\frac{1}{10}\abs{x_{1}}.
\end{aligned}
\end{equation}

Combining \eqref{five7} and \eqref{five7.1}, and using \eqref{five6.0},
\begin{equation}\label{five7.2}
\eta\rho^{3/4}\leq\abs{x_{1}}\leq\eta
\,\wedge\,\vnorm{x}_{2}^{2}\leq-2\log\rho
\,\wedge\,x\in B_{i}\cup B_{j}
\Longrightarrow\rho^{-1}\sign(x_{1})\cdot LT_{\rho}(1_{A_{i}}-1_{A_{j}})(x)>0.
\end{equation}
Similarly, by \eqref{five7} and \eqref{five6.0}, we have the following estimate.
\begin{equation}\label{five8.33}
\eta\leq\abs{x_{1}}\leq(\rho\eta)^{1/3}
\,\wedge\,\vnorm{x}_{2}^{2}\leq1
\,\wedge\,x\in B_{i}\cup B_{j}
\Longrightarrow\rho^{-1}\sign(x_{1})\cdot LT_{\rho}(1_{A_{i}}-1_{A_{j}})(x)>0.
\end{equation}
Finally, applying \eqref{five7.2} to Lemma \ref{lemma1} for all $i',j'\in\{1,\ldots,k\}$, and using \eqref{five7.5} together with the inclusion-exclusion principle, \eqref{five8.2} and \eqref{five8.33},
\begin{equation}\label{five7.3}
\eta\rho^{3/4}\leq\abs{x_{1}}\leq\eta
\,\wedge\,\vnorm{x}_{2}^{2}\leq-2\log\rho
\,\wedge\,x\in B_{i}'\cup B_{j}'
\Longrightarrow\sign(x_{1})\cdot(1_{A_{i}}(x)-1_{A_{j}}(x))>0.
\end{equation}

\noindent\embolden{Step 3.  An estimate for intermediate values of $x$.}

In summary, \eqref{five7.3} and \eqref{five8.3} improve our initial assumption \eqref{five0.3}.  We now repeat the above procedure with the improved assumptions.  Before continuing, we need to redefine $B_{i},B_{j}$.  Via \eqref{five7.5}, let
\begin{equation}\label{five60}
B_{i}\colonequals B_{i,\min\left(\frac{2\rho^{3/4}\eta}{\sqrt{-2\log\rho}},\frac{2\eta}{\sqrt{-2\log\eta}}\right)},\,\,
B_{j}\colonequals B_{j,\min\left(\frac{2\rho^{3/4}\eta}{\sqrt{-2\log\rho}},\frac{2\eta}{\sqrt{-2\log\eta}}\right)}
\end{equation}

\begin{figure}[h!]
\centering
\def\svgwidth{\figwidth} 
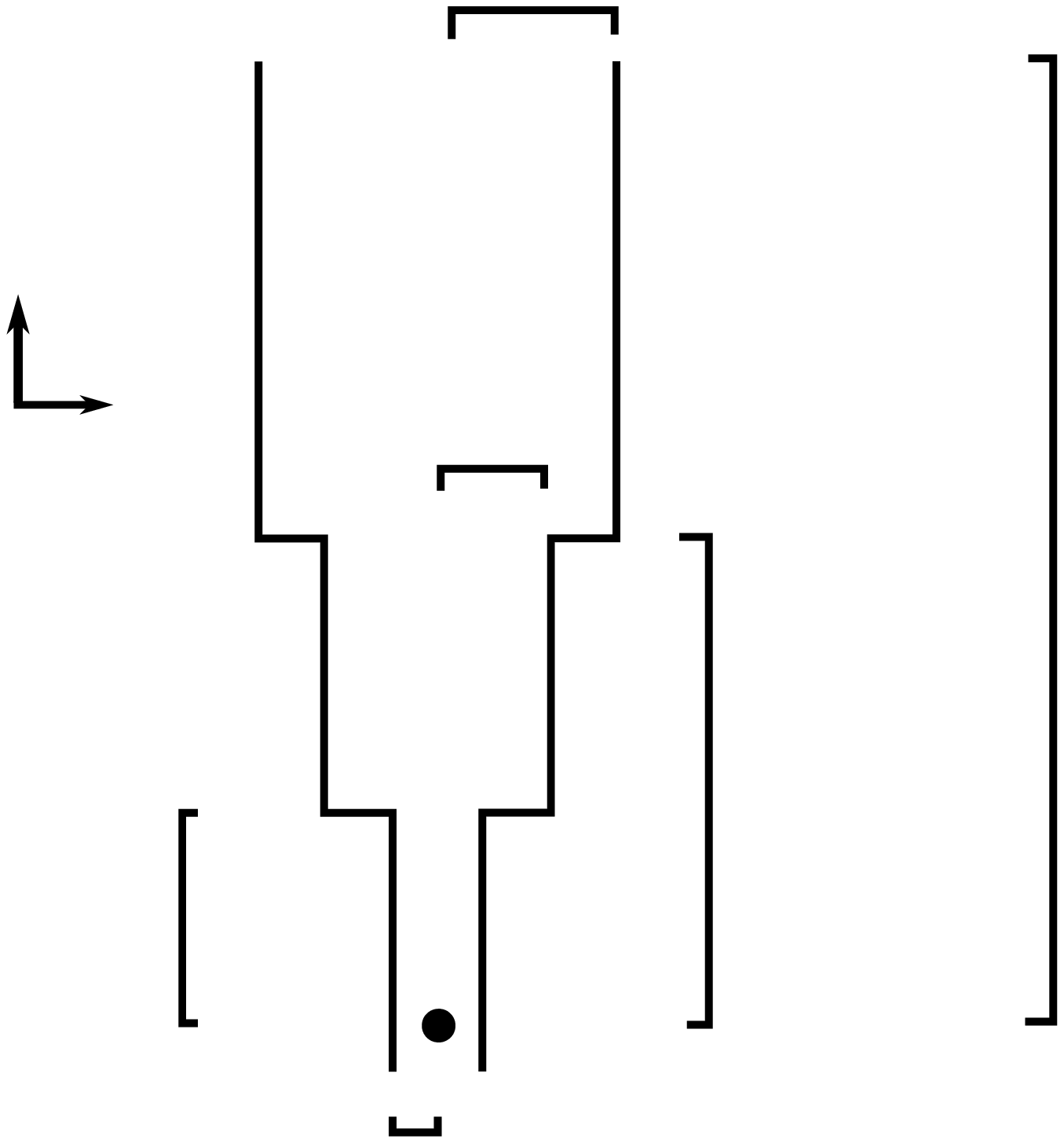
\caption{Integration regions where $g\neq0$, near $B_{i}'\cap B_{j}'$ for \eqref{five8.4}.}
\label{fig1}
\end{figure}

Let $x$ with $\vnorm{x}_{2}^{2}\leq-4\log\rho$, $\eta\rho\leq\abs{x_{1}}\leq\eta$.  Let $B\colonequals(B_{i}\cap B_{j})\cup[(B_{i}\cup B_{j})\setminus(B_{i}'\cup B_{j}')]$.  Suppose $x\in B_{i}\cup B_{j}$ also.  By \eqref{five7.3}, \eqref{five0.3}, \eqref{five8.3}, \eqref{five0.9} and the inclusion-exclusion principle, $g\neq0$ only on the following sets
\begin{flalign*}
&\{y\in\R^{n}\colon\absf{d(\sigma y-\rho x,B)}\leq\eta\rho^{3/4}/\sqrt{1-\rho^{2}}\},\\
&\{y\in\R^{n}\colon\absf{d(\sigma y-\rho x,B)}\leq\eta/\sqrt{1-\rho^{2}},\vnorm{\sigma y-\rho x}_{2}>\sqrt{-2\log\rho}/\sqrt{1-\rho^{2}}\},\\
&\{y\in\R^{n}\colon\absf{d(\sigma y-\rho x,B)}\leq(\rho\eta)^{1/4}/\sqrt{1-\rho^{2}},\vnorm{\sigma y-\rho x}_{2}\geq\sqrt{-2\log\eta}/\sqrt{1-\rho^{2}}\},\\
&\{y\in\R^{n}\colon\vnorm{\sigma y-\rho x}_{2}>(1+1/10)\sqrt{-3\log(\rho\eta)}/\sqrt{1-\rho^{2}}\}.
\end{flalign*}
We then apply Lemma \ref{lemma6.1} to get
\begin{equation}\label{five8.4}
\begin{aligned}
&\sup_{\substack{t_{1}\in[\min(x_{1},0),\max(x_{1},0)]\\\,t_{2}\in[\min(x_{2},0),\max(x_{2},0)]\\ \alpha\in[0,\rho]}}
\bigg|\int_{\R^{n}}\sum_{\substack{\ell\in\N^{n}\colon\\0\leq\abs{\ell}\leq3}}
\prod_{i=1}^{n}\abs{y_{i}}^{\ell_{i}}g((t_{1},t_{2},0,\ldots,0)\alpha+y\sqrt{1-\alpha^{2}})d\gamma_{n}(y)\bigg|\\
&\qquad\leq\eta\rho^{3/4}500000n^{3}
+500000n^{3}\eta(-2(1-\sqrt{2}\rho)^{2}\log\rho+1)\rho^{(1-\sqrt{2}\rho)^{2}}\\
&\qquad\quad+500000n^{3}(\eta\rho)^{1/4}(-2(1-\sqrt{2}\rho)^{2}\log\eta+1)\eta^{(1-\sqrt{2}\rho)^{2}}\\
&\qquad\quad+200(n+2)!((-3\log(\eta\rho))^{(n+1)/2}+1)(\rho\eta)^{3/2}\\
&\qquad\quad+1600(n+2)!\min(2\rho^{3/4}\eta/\sqrt{-2\log\rho},2\eta/\sqrt{-2\log\eta}\,).
\end{aligned}
\end{equation}

Applying \eqref{five8.4} to Lemma \ref{lemma7}, using \eqref{five6.0} and $\eta<\rho<e^{-20(n+1)^{10^{12}n^{3}(n+2)!}}$,
\begin{equation}\label{five9}
\begin{aligned}
&\vnorm{x}_{2}^{2}\leq-4\log(\rho)\,\wedge\,\eta\rho\leq\abs{x_{1}}\leq\eta\,\wedge x\in B_{i}\cup B_{j}\\
&\qquad\Longrightarrow\rho^{-1}\abs{LT_{\rho}(1_{A_{i}}-1_{A_{j}})(x)-LT_{\rho}(1_{B_{i}}-1_{B_{j}})(x)}<\frac{1}{10}\rho\eta.
\end{aligned}
\end{equation}

Also, by \eqref{five6},
\begin{equation}\label{five10}
\rho\eta\leq\abs{x_{1}}\leq\eta
\,\wedge\,\vnorm{x}_{2}^{2}\leq-4\log\rho
\,\wedge\,x\in B_{i}\cup B_{j}
\Longrightarrow\rho^{-1}\sign(x_{1})\cdot LT_{\rho}(1_{B_{i}}-1_{B_{j}})(x)>\frac{1}{10}\rho\eta.
\end{equation}

So, combining \eqref{five9}, \eqref{five10} for all $i',j'\in\{1,\ldots,k\}$, $i'\neq j'$, Lemma \ref{lemma1}, \eqref{five60}, and by applying the inclusion-exclusion principle, \eqref{five8.2} and \eqref{five8.33},
\begin{equation}\label{five11}
\rho\eta\leq\abs{x_{1}}\leq\eta
\,\wedge\,\vnorm{x}_{2}^{2}\leq-4\log\rho
\,\wedge\,x\in B_{i}'\cup B_{j}'
\Longrightarrow \sign(x_{1})\cdot(1_{A_{i}}(x)-1_{A_{j}}(x))>0.
\end{equation}

\noindent\embolden{Step 4.  Iterating the estimate for intermediate values of $x$.}

The estimate \eqref{five11} now has a cascading effect on the estimates below.  From \eqref{five11},
$$
\rho^{.9}\eta\leq\abs{x_{1}}\leq\eta
\,\wedge\,\vnorm{x}_{2}^{2}\leq-4\log\rho
\,\wedge\,x\in B_{i}'\cup B_{j}'
\Longrightarrow\sign(x_{1})\cdot(1_{A_{i}}(x)-1_{A_{j}}(x))>0.
$$

This estimate can be iterated on itself.  Let $K\in\N$, $K\geq1$, and let $M\in\N$ with $0\leq M\leq\sqrt{K}$. Suppose $\rho^{.9K}>\eta^{1/5}$.  We prove by induction on $K$ and $M$ that
\begin{equation}\label{eight1}
\begin{aligned}
&2^{M^{2}}\eta\rho^{.9K}\leq\abs{x_{1}}\leq\eta
\,\wedge\,\vnorm{x}_{2}^{2}\leq-2^{M+2}\log\rho
\,\wedge\,x\in B_{i}'\cup B_{j}'\\
&\qquad\qquad\Longrightarrow\sign(x_{1})\cdot(1_{A_{i}}(x)-1_{A_{j}}(x))>0.
\end{aligned}
\end{equation}
We already verified the case $M=0,K=1$.  We assume that, for $0\leq m<M$, 
\begin{equation}\label{eight2}
\begin{aligned}
&2^{m^{2}}\eta\rho^{.9K}\leq\abs{x_{1}}\leq\eta
\,\wedge\,\vnorm{x}_{2}^{2}\leq-2^{m+2}\log\rho
\,\wedge\,x\in B_{i}'\cup B_{j}'\\
&\qquad\qquad\Longrightarrow\sign(x_{1})\cdot(1_{A_{i}}(x)-1_{A_{j}}(x))>0.
\end{aligned}
\end{equation}
Assume also that, for $M\leq m\leq \sqrt{K-1}$ and $K\geq1$,
\begin{equation}\label{eight3}
\begin{aligned}
&2^{m^{2}}\eta\rho^{.9(K-1)}\leq\abs{x_{1}}\leq\eta
\,\wedge\,\vnorm{x}_{2}^{2}\leq-2^{m+2}\log\rho
\,\wedge\,x\in B_{i}'\cup B_{j}'\\
&\qquad\qquad\Longrightarrow\sign(x_{1})\cdot(1_{A_{i}}(x)-1_{A_{j}}(x))>0.
\end{aligned}
\end{equation}
We will conclude that \eqref{eight2} holds for $m=M$, i.e.
\begin{equation}\label{eight4}
\begin{aligned}
&2^{M^{2}}\eta\rho^{.9K}\leq\abs{x_{1}}\leq\eta
\,\wedge\,\vnorm{x}_{2}^{2}\leq-2^{M+2}\log\rho
\,\wedge\,x\in B_{i}'\cup B_{j}'\\
&\qquad\qquad\Longrightarrow\sign(x_{1})\cdot(1_{A_{i}}(x)-1_{A_{j}}(x))>0.
\end{aligned}
\end{equation}

We repeat the calculations \eqref{five60} through \eqref{five11}.  Redefine $B_{i},B_{j}$ so that

\begin{equation}\label{five61}
B_{i}\colonequals B_{i,\min\left(\frac{2\eta\rho^{.9K}}{\sqrt{-4\log\rho}},\frac{2\eta}{\sqrt{-2\log\eta}}\right)}\,,\quad
B_{j}\colonequals B_{j,\min\left(\frac{2\eta\rho^{.9K}}{\sqrt{-4\log\rho}},\frac{2\eta}{\sqrt{-2\log\eta}}\right)}.
\end{equation}

If $M>0$, we use \eqref{eight2} for $m=M-1$.  For any $M\geq0$, we use \eqref{eight3} for $M\leq m\leq\sqrt{K}$.  Let $x,M$ with $\vnorm{x}_{2}^{2}\leq-2^{M+2}\log\rho\leq-2^{\sp{\lfloor\sqrt{K}\rfloor+2}}\log\rho\leq-4\log\eta$, $2^{M^{2}}\eta\rho^{.9K}\leq\abs{x_{1}}\leq\eta$, $x\in B_{i}\cup B_{j}$.  Let $B\colonequals(B_{i}\cap B_{j})\cup[(B_{i}\cup B_{j})\setminus(B_{i}'\cup B_{j}')]$. Combining \eqref{eight2}, \eqref{eight3}, \eqref{five0.3}, \eqref{five8.3}, \eqref{five0.9} and the inclusion-exclusion principle, $g\neq0$ only on the following sets
\begin{flalign*}
&\{y\in\R^{n}\colon\absf{d(\sigma y-\rho x,B)}\leq\min(M,1)\cdot2^{(M-1)^{2}}\eta\rho^{.9K}/\sqrt{1-\rho^{2}}\},\\
&\cup_{M\leq m\leq\lfloor\sqrt{K-1}\rfloor}\{y\in\R^{n}\colon\absf{d(\sigma y-\rho x,B)}\leq2^{m^{2}}\eta\rho^{.9(K-1)}/\sqrt{1-\rho^{2}},\\
&\qquad\qquad\qquad\qquad\vnorm{\sigma y-\rho x}_{2}>\min(m,1)\cdot\sqrt{-2^{m+1}\log\rho}/\sqrt{1-\rho^{2}}\},\\
&\{y\in\R^{n}\colon\absf{d(\sigma y-\rho x,B)}\leq\eta,\vnorm{\sigma y-\rho x}_{2}>\sqrt{-2^{\sp{\lfloor\sqrt{K-1}\rfloor+2}}\log\rho}/\sqrt{1-\rho^{2}}\},\\
&\{y\in\R^{n}\colon\absf{d(\sigma y-\rho x,B)}\leq(\rho\eta)^{1/4}/\sqrt{1-\rho^{2}},\vnorm{\sigma y-\rho x}_{2}\geq\sqrt{-2\log\eta}/\sqrt{1-\rho^{2}}\},\\
&\{y\in\R^{n}\colon\vnorm{\sigma y-\rho x}_{2}>(1+1/10)\sqrt{-3\log(\rho\eta)}/\sqrt{1-\rho^{2}}\}.
\end{flalign*}
We then apply Lemma \ref{lemma6.1} to get
\begin{equation}\label{eight6}
\begin{aligned}
&\sup_{\substack{t_{1}\in[\min(x_{1},0),\max(x_{1},0)]\\\,t_{2}\in[\min(x_{2},0),\max(x_{2},0)]\\ \alpha\in[0,\rho]}}
\bigg|\int_{\R^{n}}\sum_{\substack{\ell\in\N^{n}\colon\\0\leq\abs{\ell}\leq3}}
\prod_{i=1}^{n}\abs{y_{i}}^{\ell_{i}}g((t_{1},t_{2},0,\ldots,0)\alpha+y\sqrt{1-\alpha^{2}})d\gamma_{n}(y)\bigg|\\
&\qquad\leq\min(M,1)\cdot2^{(M-1)^{2}}\eta\rho^{.9K}500000n^{3}\\
&\qquad\quad+\eta(-(1-\rho)^{2}2^{\sp{\lfloor\sqrt{K-1}\rfloor+2}}\log\rho+1)\rho^{(1-\rho)^{2}2^{\sp{\lfloor\sqrt{K-1}\rfloor+1}}}500000n^{3}\\
&\qquad\quad+500000n^{3}\sum_{m=M}^{\lfloor\sqrt{K-1}\rfloor}\eta\rho^{.9(K-1)}
(-(1-\rho)^{2}2^{m+1}\log\rho+1)2^{m^{2}}\rho^{(1-\rho)^{2}2^{m}\cdot\min(m,1)}\\
&\qquad\quad+500000n^{3}(\eta\rho)^{1/4}(-2(1-\sqrt{2}\rho)^{2}\log\eta+1)\eta^{(1-\sqrt{2}\rho)^{2}}\\
&\qquad\quad+200(n+2)!((-3\log(\eta\rho))^{(n+1)/2}+1)(\rho\eta)^{3/2}\\
&\qquad\quad+1600(n+2)!\min(2\eta\rho^{.9K}/\sqrt{-4\log\rho},2\eta/\sqrt{-2\log\eta}\,).
\end{aligned}
\end{equation}

Applying \eqref{eight6} to Lemma \ref{lemma7}, using \eqref{five6.0}, $\eta<\rho<e^{-20(n+1)^{10^{12}n^{3}(n+2)!}}$, and $\rho^{.9K}>\eta^{1/5}$,
\begin{equation}\label{eight7}
\begin{aligned}
&\vnorm{x}_{2}^{2}\leq-2^{M+2}\log(\rho)\,\wedge\,
2^{M^{2}}\eta\rho^{.9K}\leq\abs{x_{1}}\leq\eta\,\wedge\,x\in B_{i}\cup B_{j}\\
&\qquad\Longrightarrow\rho^{-1}\abs{LT_{\rho}(1_{A_{i}}-1_{A_{j}})(x)-LT_{\rho}(1_{B_{i}}-1_{B_{j}})(x)}<\frac{1}{10}2^{M^{2}}\eta\rho^{.9K}.
\end{aligned}
\end{equation}

Also, by \eqref{five6},
\begin{equation}\label{eight8}
\begin{aligned}
&2^{M^{2}}\eta\rho^{.9K}\leq\abs{x_{1}}\leq\eta
\,\wedge\,\vnorm{x}_{2}^{2}\leq-2^{M+2}\log\rho
\,\wedge\,x\in B_{i}\cup B_{j}\\
&\qquad\qquad\Longrightarrow\rho^{-1}\sign(x_{1})\cdot LT_{\rho}(1_{B_{i}}-1_{B_{j}})(x)>\frac{1}{10}2^{M^{2}}\eta\rho^{.9K}.
\end{aligned}
\end{equation}

So, combining \eqref{eight7}, \eqref{eight8} for all $i',j'\in\{1,\ldots,k\}$, $i'\neq j'$, Lemma \ref{lemma1}, \eqref{five61}, and by applying the inclusion-exclusion principle, \eqref{five8.2} and \eqref{five8.33},
\begin{equation}\label{eight9}
2^{M^{2}}\eta\rho^{.9K}\leq\abs{x_{1}}\leq\eta
\,\wedge\,\vnorm{x}_{2}^{2}\leq-2^{M+2}\log\rho
\,\wedge\,x\in B_{i}'\cup B_{j}'
\Longrightarrow \sign(x_{1})\cdot(1_{A_{i}}(x)-1_{A_{j}}(x))>0.
\end{equation}
Thus, the inductive step is completed.

Let $K\in\N$ with $-2\log\eta\leq-2^{\sp{\lfloor\sqrt{K}\rfloor+2}}\log\rho\leq-4\log\eta$.  Then \eqref{eight9} and \eqref{five0.3} say that

\begin{equation}\label{eight10}
2^{K}\eta\rho^{.9K}\leq\abs{x_{1}}\leq1
\,\wedge\,\vnorm{x}_{2}^{2}\leq-2\log\eta
\,\wedge\,x\in B_{i}'\cup B_{j}'
\Longrightarrow \sign(x_{1})\cdot(1_{A_{i}}(x)-1_{A_{j}}(x))>0.
\end{equation}

\noindent\embolden{Step 5.  Another iterative estimate, now for larger values of $x$.}

We perform another induction, though this time we hold $K$ fixed and use the additional ingredient \eqref{eight10}.  Let $M,R\in\N$ with $0\leq M\leq\sqrt{K}$, $R\geq0$ such that $\rho^{.9(K+R)}>\eta^{1/5}$.  We will induct on $M$ and $R$.  We assume that, for $0\leq m<M$, 
\begin{equation}\label{eight12}
\begin{aligned}
&2^{m^{2}}\eta\rho^{.9(K+R)}\leq\abs{x_{1}}\leq\eta
\,\wedge\,\vnorm{x}_{2}^{2}\leq-2^{m+2}\log\rho
\,\wedge\,x\in B_{i}'\cup B_{j}'\\
&\qquad\qquad\Longrightarrow\sign(x_{1})\cdot(1_{A_{i}}(x)-1_{A_{j}}(x))>0.
\end{aligned}
\end{equation}
We know that the case $R=0,0\leq M\leq\sqrt{K}$ of \eqref{eight12} holds by \eqref{eight4}.  We therefore assume that $R\geq1$.  Assume also that, for $M\leq m\leq \sqrt{K}$, 
\begin{equation}\label{eight13}
\begin{aligned}
&2^{m^{2}}\eta\rho^{.9(K+R-1)}\leq\abs{x_{1}}\leq\eta
\,\wedge\,\vnorm{x}_{2}^{2}\leq-2^{m+2}\log\rho
\,\wedge\,x\in B_{i}'\cup B_{j}'\\
&\qquad\qquad\Longrightarrow\sign(x_{1})\cdot(1_{A_{i}}(x)-1_{A_{j}}(x))>0.
\end{aligned}
\end{equation}
We will conclude that \eqref{eight12} holds for $m=M$, i.e.
\begin{equation}\label{eight14}
\begin{aligned}
&2^{M^{2}}\eta\rho^{.9(K+R)}\leq\abs{x_{1}}\leq\eta
\,\wedge\,\vnorm{x}_{2}^{2}\leq-2^{M+2}\log\rho
\,\wedge\,x\in B_{i}'\cup B_{j}'\\
&\qquad\qquad\Longrightarrow\sign(x_{1})\cdot(1_{A_{i}}(x)-1_{A_{j}}(x))>0.
\end{aligned}
\end{equation}
Redefine $B_{i},B_{j}$ so that
\begin{equation}\label{five67}
B_{i}\colonequals B_{i,\min\left(\frac{2\eta\rho^{.9(K+R)}}{\sqrt{-4\log\rho}},\frac{2\eta}{\sqrt{-2\log\eta}}\right)}\,,\quad
B_{j}\colonequals B_{j,\min\left(\frac{2\eta\rho^{.9(K+R)}}{\sqrt{-4\log\rho}},\frac{2\eta}{\sqrt{-2\log\eta}}\right)}.
\end{equation}

If $M>0$, we use \eqref{eight12} for $m=M-1$.  For any $M\geq0$, we also use \eqref{eight13} for $M\leq m\leq\sqrt{K}$.  Let $x,M$ with $\vnorm{x}_{2}^{2}\leq-2^{M+2}\log\rho\leq-2^{\sp{\lfloor\sqrt{K}\rfloor+2}}\log\rho\leq-4\log\eta$, $2^{M^{2}}\eta\rho^{.9K}\leq\abs{x_{1}}\leq\eta$, $x\in B_{i}\cup B_{j}$.  Let $B\colonequals(B_{i}\cap B_{j})\cup[(B_{i}\cup B_{j})\setminus(B_{i}'\cup B_{j}')]$. Combining \eqref{eight12}, \eqref{eight13}, \eqref{five0.3}, \eqref{five8.3}, \eqref{five0.9}, and the fact that $-2\log\eta\leq-2^{\sp{\lfloor\sqrt{K}\rfloor+2}}\log\rho\leq-4\log\eta$, we conclude that $g\neq0$ only on the following sets:
\begin{flalign*}
&\{y\in\R^{n}\colon\absf{d(\sigma y-\rho x,B)}\leq
\min(M,1)\cdot2^{(M-1)^{2}}\eta\rho^{.9(K+R)}/\sqrt{1-\rho^{2}}\},\\
&\cup_{M\leq m\leq\lfloor\sqrt{K}\rfloor}\{y\in\R^{n}\colon\absf{d(\sigma y-\rho x,B)}\leq
2^{m^{2}}\eta\rho^{.9(K+R-1)}/\sqrt{1-\rho^{2}},\\
&\qquad\qquad\qquad\qquad\qquad\qquad\quad\vnorm{\sigma y-\rho x}_{2}>\min(m,1)\cdot\sqrt{-2^{m+1}\log\rho}/\sqrt{1-\rho^{2}}\},\\
&\{y\in\R^{n}\colon\absf{d(\sigma y-\rho x,B)}\leq(\rho\eta)^{1/4}/\sqrt{1-\rho^{2}},\vnorm{\sigma y-\rho x}_{2}\geq\sqrt{-2\log\eta}/\sqrt{1-\rho^{2}}\},\\
&\{y\in\R^{n}\colon\vnorm{\sigma y-\rho x}_{2}>(1+1/10)\sqrt{-3\log(\rho\eta)}/\sqrt{1-\rho^{2}}\}.
\end{flalign*}
We then apply Lemma \ref{lemma6.1} to get
\begin{equation}\label{eight16}
\begin{aligned}
&\sup_{\substack{t_{1}\in[\min(x_{1},0),\max(x_{1},0)]\\\,t_{2}\in[\min(x_{2},0),\max(x_{2},0)]\\ \alpha\in[0,\rho]}}
\bigg|\int_{\R^{n}}\sum_{\substack{\ell\in\N^{n}\colon\\0\leq\abs{\ell}\leq3}}
\prod_{i=1}^{n}\abs{y_{i}}^{\ell_{i}}g((t_{1},t_{2},0,\ldots,0)\alpha+y\sqrt{1-\alpha^{2}})d\gamma_{n}(y)\bigg|\\
&\qquad\leq\min(M,1)\cdot2^{(M-1)^{2}}\eta\rho^{.9(K+R)}500000n^{3}\\
&\qquad\quad+500000n^{3}\sum_{m=M}^{\lfloor\sqrt{K}\rfloor}\eta\rho^{.9(K+R-1)}
(-(1-\rho)^{2}2^{m+1}\log\rho+1)2^{m^{2}}\rho^{(1-\rho)^{2}2^{m}}\\
&\qquad\quad+500000n^{3}(\eta\rho)^{1/4}(-2(1-\sqrt{2}\rho)^{2}\log\eta+1)\eta^{(1-\sqrt{2}\rho)^{2}}\\
&\qquad\quad+200(n+2)!((-3\log(\eta\rho))^{(n+1)/2}+1)(\rho\eta)^{3/2}\\
&\qquad\quad+1600(n+2)!\min(2\eta\rho^{.9(K+R)}/\sqrt{-4\log\rho},2\eta/\sqrt{-2\log\eta}).
\end{aligned}
\end{equation}

Applying \eqref{eight16} to Lemma \ref{lemma7}, using \eqref{five6.0}, $\eta<\rho<e^{-20(n+1)^{10^{12}n^{3}(n+2)!}}$, and $\rho^{.9(K+R)}>\eta^{1/5}$,
\begin{equation}\label{eight17}
\begin{aligned}
&\vnorm{x}_{2}^{2}\leq-2^{M+2}\log(\rho)\,\wedge\,
2^{M^{2}}\eta\rho^{.9(K+R)}\leq\abs{x_{1}}\leq\eta\,\wedge\,x\in B_{i}\cup B_{j}\\
&\qquad\Longrightarrow\rho^{-1}\abs{LT_{\rho}(1_{A_{i}}-1_{A_{j}})(x)-LT_{\rho}(1_{B_{i}}-1_{B_{j}})(x)}<\frac{1}{10}2^{M^{2}}\eta\rho^{.9(K+R)}.
\end{aligned}
\end{equation}

Also, by \eqref{five6},
\begin{equation}\label{eight18}
\begin{aligned}
&2^{M^{2}}\eta\rho^{.9(K+R)}\leq\abs{x_{1}}\leq\eta
\,\wedge\,\vnorm{x}_{2}^{2}\leq-2^{M+2}\log\rho
\,\wedge\,x\in B_{i}\cup B_{j}\\
&\qquad\qquad\Longrightarrow\rho^{-1}\sign(x_{1})\cdot LT_{\rho}(1_{B_{i}}-1_{B_{j}})(x)>\frac{1}{10}2^{M^{2}}\eta\rho^{.9(K+R)}.
\end{aligned}
\end{equation}

So, combining \eqref{eight17}, \eqref{eight18} for all $i',j'\in\{1,\ldots,k\}$, $i'\neq j'$, Lemma \ref{lemma1}, \eqref{five67}, and by applying the inclusion-exclusion principle, \eqref{five8.2} and \eqref{five8.33},
\begin{equation}\label{eight19}
\begin{aligned}
&2^{M^{2}}\eta\rho^{.9(K+R)}\leq\abs{x_{1}}\leq\eta
\,\wedge\,\vnorm{x}_{2}^{2}\leq-2^{M+2}\log\rho
\,\wedge\,x\in B_{i}'\cup B_{j}'\\
&\qquad\qquad\Longrightarrow \sign(x_{1})\cdot(1_{A_{i}}(x)-1_{A_{j}}(x))>0.
\end{aligned}
\end{equation}
Thus, the inductive step is completed.  Let $M=\lfloor\sqrt{K}\rfloor$.  Let $R\in\N$ such that $\eta^{1/5}\leq\rho^{.9(K+R)}\leq\eta^{1/5}\rho^{-.9}$.  If no such $R$ exists, then $\rho^{.9K}<\eta^{1/5}$, so $\rho^{.45K}<\eta^{1/10}$, and \eqref{eight21} below holds by combining \eqref{eight10} and \eqref{five0.3}.  Otherwise, $R\geq0$, so \eqref{eight19} and \eqref{five0.3} say that
\begin{equation}\label{eight20}
2^{K}\eta^{6/5}\rho^{-.9}\leq\abs{x_{1}}\leq1
\,\wedge\,\vnorm{x}_{2}^{2}\leq-2\log\eta
\,\wedge\,x\in B_{i}'\cup B_{j}'
\Longrightarrow \sign(x_{1})\cdot(1_{A_{i}}(x)-1_{A_{j}}(x))>0.
\end{equation}

Since $\eta^{1/5}\leq\rho^{.9(K+R)}\leq\eta^{1/5}\rho^{-.9}$, note that $\eta^{1/10}\leq\rho^{.45(K+R)}\leq\eta^{1/10}\rho^{-.45}$, so for $R\geq2$, we have $2^{K}\eta^{6/5}\rho^{-.9}\leq2^{K}\rho^{.45K}\rho^{.45R}\eta^{11/10}\rho^{-.9}<\eta^{11/10}$.  If $R=1$, and if $K\geq2$, note that $2^{K}\eta^{6/5}\rho^{-.9}\leq2^{K}\rho^{.2K}\rho^{.25K}\rho^{.45}\eta^{11/10}\rho^{-.9}<\eta^{11/10}$.  If $R=0$, $K\geq3$ then $2^{K}\eta^{6/5}\rho^{-.9}\leq2^{K}\rho^{.1K}\rho^{.35K}\eta^{11/10}\rho^{-.9}<\eta^{11/10}$.  If $1\leq R+K\leq3$, then $(1/5)\log\eta\leq3\log\rho$ and $2\log\rho\leq(1/5)\log\eta$, so \eqref{eight6} directly implies \eqref{eight21}.  More specifically, by \eqref{eight6}, Lemma \ref{lemma7},\eqref{five8.2} and \eqref{five8.33}, $\sign(x_{1})\cdot(1_{A_{i}}-1_{A_{j}})(x)>0$ for $x\in B_{i}'\cup B_{j}'$ with $\vnorm{x}_{2}^{2}\leq-2\log\eta$ and $\eta\rho^{.9K}\rho^{.9}\leq\abs{x_{1}}\leq\eta$.  Now, $\rho^{.45(K+R)}\leq\eta^{1/10}\rho^{-.45}$, so $\eta\rho^{.9K}\rho^{.9}=\eta\rho^{.45K}\rho^{.45K}\rho^{.9}\leq\eta\rho^{.45K}\rho^{.45(K+R)}\leq\eta^{11/10}$.

In the latter case, \eqref{eight21} follows, and in the former cases, \eqref{eight20} implies
\begin{equation}\label{eight21}
\eta^{11/10}\leq\abs{x_{1}}\leq1
\,\wedge\,\vnorm{x}_{2}^{2}\leq-2\log\eta
\,\wedge\,x\in B_{i}'\cup B_{j}'
\Longrightarrow \sign(x_{1})\cdot(1_{A_{i}}(x)-1_{A_{j}}(x))>0.
\end{equation}
In all cases, \eqref{eight21} holds.  We can finally use \eqref{eight21} to conclude the proof.

\noindent\embolden{Step 6.  Using Step 5 to get an estimate for large values of $x$.}

Redefine $B_{i},B_{j}$ so that
\begin{equation}\label{five68}
B_{i}\colonequals B_{i,2\eta^{11/10}/\sqrt{-2\log\eta}}\,,\quad
B_{j}\colonequals B_{j,2\eta^{11/10}/\sqrt{-2\log\eta}}.
\end{equation}

Let $\sigma\colon\R^{n}\to\R^{n}$ be any rotation fixing the $x_{1}$-axis.  Let $x$ with $\vnorm{x}_{2}^{2}\leq-4\log(\eta\rho)\leq-8\log\eta$ and $\eta^{21/20}\rho^{1/2}\leq\abs{x_{1}}\leq(\eta\rho)^{1/4}$.  Let $B\colonequals(B_{i}\cap B_{j})\cup[(B_{i}\cup B_{j})\setminus(B_{i}'\cup B_{j}')]$.  Suppose $x\in B_{i}\cup B_{j}$ also.  Combining \eqref{eight21}, \eqref{five8.3}, and \eqref{five0.9}, $g\neq0$ only on the following sets:
\begin{flalign*}
&\{y\in\R^{n}\colon\absf{d(\sigma y-\rho x,B)}\leq
\eta^{11/10}/\sqrt{1-\rho^{2}}\},\\
&\{y\in\R^{n}\colon\absf{d(\sigma y-\rho x,B)}\leq(\rho\eta)^{1/4}/\sqrt{1-\rho^{2}},\vnorm{\sigma y-\rho x}_{2}\geq\sqrt{-2\log\eta}/\sqrt{1-\rho^{2}}\},\\
&\{y\in\R^{n}\colon\vnorm{\sigma y-\rho x}_{2}>(1+1/10)\sqrt{-3\log(\rho\eta)}/\sqrt{1-\rho^{2}}\}.
\end{flalign*}
We then apply Lemma \ref{lemma6.1} to get
\begin{equation}\label{eight26}
\begin{aligned}
&\sup_{\substack{t_{1}\in[\min(x_{1},0),\max(x_{1},0)]\\\,t_{2}\in[\min(x_{2},0),\max(x_{2},0)]\\ \alpha\in[0,\rho]}}
\bigg|\int_{\R^{n}}\sum_{\substack{\ell\in\N^{n}\colon\\0\leq\abs{\ell}\leq3}}
\prod_{i=1}^{n}\abs{y_{i}}^{\ell_{i}}g((t_{1},t_{2},0,\ldots,0)\alpha+y\sqrt{1-\alpha^{2}})d\gamma_{n}(y)\bigg|\\
&\qquad\leq500000n^{3}\eta^{11/10}
+500000n^{3}(\rho\eta)^{1/4}(-2(1-2\rho)^{2}\log\eta+1)\eta^{(1-2\rho)^{2}}\\
&\qquad\quad+200(n+2)!((-3\log(\eta\rho))^{(n+1)/2}+1)(\rho\eta)^{3/2}
+1600(n+2)!2\eta^{11/10}/\sqrt{-2\log\eta}.
\end{aligned}
\end{equation}

Applying \eqref{eight26} to Lemma \ref{lemma7}, using \eqref{five6.0} and $\eta<\rho<e^{-20(n+1)^{10^{12}n^{3}(n+2)!}}$,
\begin{equation}\label{eight27}
\begin{aligned}
&\vnorm{x}_{2}^{2}\leq-4\log(\eta\rho)\,\wedge\,
\eta^{21/20}\rho^{1/2}\leq\abs{x_{1}}\leq(\eta\rho)^{1/4}\,\wedge\,x\in B_{i}\cup B_{j}\\
&\qquad\Longrightarrow\rho^{-1}\abs{LT_{\rho}(1_{A_{i}}-1_{A_{j}})(x)-LT_{\rho}(1_{B_{i}}-1_{B_{j}})(x)}<\frac{1}{10}\eta^{21/20}\rho^{1/2}.
\end{aligned}
\end{equation}

Also, by \eqref{five6},
\begin{equation}\label{eight28}
\begin{aligned}
&\eta^{21/20}\rho^{1/2}\leq\abs{x_{1}}\leq(\eta\rho)^{1/4}
\,\wedge\,\vnorm{x}_{2}^{2}-4\log(\eta\rho)
\,\wedge\,x\in B_{i}\cup B_{j}\\
&\qquad\qquad\Longrightarrow\rho^{-1}\sign(x_{1})\cdot LT_{\rho}(1_{B_{i}}-1_{B_{j}})(x)>\frac{1}{10}\eta^{21/20}\rho^{1/2}.
\end{aligned}
\end{equation}

So, combining \eqref{eight27}, \eqref{eight28} for all $i',j'\in\{1,\ldots,k\}$, $i'\neq j'$, Lemma \ref{lemma1}, \eqref{five68}, and by applying the inclusion-exclusion principle, \eqref{five8.2} and \eqref{five8.33},
\begin{equation}\label{eight29}
\begin{aligned}
&\eta^{21/20}\rho^{1/2}\leq\abs{x_{1}}\leq(\eta\rho)^{1/4}
\,\wedge\,\vnorm{x}_{2}^{2}\leq-4\log(\eta\rho)
\,\wedge\,x\in B_{i}'\cup B_{j}'\\
&\qquad\qquad\Longrightarrow \sign(x_{1})\cdot(1_{A_{i}}(x)-1_{A_{j}}(x))>0.
\end{aligned}
\end{equation}

So, \eqref{eight29} and \eqref{five8.3} say that
\begin{equation}\label{eight30}
\eta^{21/20}\rho^{1/2}\leq\abs{x_{1}}
\,\wedge\,\vnorm{x}_{2}^{2}\leq-4\log(\eta\rho)
\,\wedge\,x\in B_{i}'\cup B_{j}'
\Longrightarrow \sign(x_{1})\cdot(1_{A_{i}}(x)-1_{A_{j}}(x))>0.
\end{equation}

Finally, we use \eqref{eight30} in place of \eqref{eight21} and repeat the computations \eqref{eight26} through \eqref{eight29} to get
\begin{equation}\label{eight32}
\eta\rho\leq\abs{x_{1}}
\,\wedge\,\vnorm{x}_{2}^{2}\leq-4\log(\eta\rho)
\,\wedge\,x\in B_{i}'\cup B_{j}'
\Longrightarrow \sign(x_{1})\cdot(1_{A_{i}}(x)-1_{A_{j}}(x))>0.
\end{equation}

In conclusion, \eqref{five0.4} follows from \eqref{eight32} and \eqref{five0.5}, letting $B_{i}''\colonequals B_{i}'$ and $B_{j}''\colonequals B_{j}'$.
%

\noindent\embolden{Step 7.  Completing the proof.}

For completeness, we derive \eqref{eight32}.  Redefine $B_{i},B_{j}$ so that
\begin{equation}\label{five70}
B_{i}\colonequals B_{i,2\eta^{21/20}\rho^{1/2}/\sqrt{-4\log(\eta\rho)}}\,,\quad
B_{j}\colonequals B_{j,2\eta^{21/20}\rho^{1/2}/\sqrt{-4\log(\eta\rho)}}.
\end{equation}

Let $\sigma\colon\R^{n}\to\R^{n}$ be any rotation fixing the $x_{1}$-axis.  Let $x$ with $\vnorm{x}_{2}^{2}\leq-4\log(\eta\rho)\leq-8\log\eta$ and $\eta\rho\leq\abs{x_{1}}\leq(\eta\rho)^{1/3}$.  Let $B\colonequals(B_{i}\cap B_{j})\cup[(B_{i}\cup B_{j})\setminus(B_{i}'\cup B_{j}')]$.  Suppose $x\in B_{i}\cup B_{j}$ also.  Combining \eqref{eight30}, \eqref{five8.3}, and \eqref{five0.9}, $g\neq0$ only on the following sets:
\begin{flalign*}
&\{y\in\R^{n}\colon\absf{d(\sigma y-\rho x,B)}\leq
\eta^{21/20}\rho^{1/2}/\sqrt{1-\rho^{2}}\},\\
&\{y\in\R^{n}\colon\absf{d(\sigma y-\rho x,B)}\leq(\rho\eta)^{1/3}/\sqrt{1-\rho^{2}},\vnorm{\sigma y-\rho x}_{2}\geq\sqrt{-2\log\eta}/\sqrt{1-\rho^{2}}\},\\
&\{y\in\R^{n}\colon\vnorm{\sigma y-\rho x}_{2}>(1+1/10)\sqrt{-3\log(\rho\eta)}/\sqrt{1-\rho^{2}}\}.
\end{flalign*}
We then apply Lemma \ref{lemma6.1} to get
\begin{equation}\label{eight36}
\begin{aligned}
&\sup_{\substack{t_{1}\in[\min(x_{1},0),\max(x_{1},0)]\\\,t_{2}\in[\min(x_{2},0),\max(x_{2},0)]\\ \alpha\in[0,\rho]}}
\bigg|\int_{\R^{n}}\sum_{\substack{\ell\in\N^{n}\colon\\0\leq\abs{\ell}\leq3}}
\prod_{i=1}^{n}\abs{y_{i}}^{\ell_{i}}g((t_{1},t_{2},0,\ldots,0)\alpha+y\sqrt{1-\alpha^{2}})d\gamma_{n}(y)\bigg|\\
&\qquad\leq500000n^{3}\eta^{21/20}\rho^{1/2}
+500000n^{3}(\rho\eta)^{1/3}(-2(1-2\rho)^{2}\log\eta+1)\eta^{(1-2\rho)^{2}}\\
&\qquad\quad+200(n+2)!((-3\log(\eta\rho))^{(n+1)/2}+1)(\rho\eta)^{3/2}
+1600(n+2)!2\eta^{6/5}\rho^{1/2}/\sqrt{-4\log(\eta\rho)}.
\end{aligned}
\end{equation}

Applying \eqref{eight36} to Lemma \ref{lemma7}, using \eqref{five6.0} and $\eta<\rho<e^{-20(n+1)^{10^{12}n^{3}(n+2)!}}$,
\begin{equation}\label{eight37}
\begin{aligned}
&\vnorm{x}_{2}^{2}\leq-4\log(\eta\rho)\,\wedge\,
\eta\rho\leq\abs{x_{1}}\leq(\eta\rho)^{1/3}\,\wedge\,x\in B_{i}\cup B_{j}\\
&\qquad\Longrightarrow\rho^{-1}\abs{LT_{\rho}(1_{A_{i}}-1_{A_{j}})(x)-LT_{\rho}(1_{B_{i}}-1_{B_{j}})(x)}<\frac{1}{10}\eta\rho.
\end{aligned}
\end{equation}

Also, by \eqref{five6},
\begin{equation}\label{eight38}
\begin{aligned}
&\eta\rho\leq\abs{x_{1}}\leq(\eta\rho)^{1/3}
\,\wedge\,\vnorm{x}_{2}^{2}\leq-4\log(\eta\rho)
\,\wedge\,x\in B_{i}\cup B_{j}\\
&\qquad\qquad\Longrightarrow\rho^{-1}\sign(x_{1})\cdot LT_{\rho}(1_{B_{i}}-1_{B_{j}})(x)>\frac{1}{10}\eta\rho.
\end{aligned}
\end{equation}

So, combining \eqref{eight37}, \eqref{eight38} for all $i',j'\in\{1,\ldots,k\}$, $i'\neq j'$, Lemma \ref{lemma1}, \eqref{five70}, and by applying the inclusion-exclusion principle, \eqref{five8.2} and \eqref{five8.33},
\begin{equation}\label{eight39}
\eta\rho\leq\abs{x_{1}}\leq(\eta\rho)^{1/3}
\,\wedge\,\vnorm{x}_{2}^{2}\leq-4\log(\eta\rho)
\,\wedge\,x\in B_{i}'\cup B_{j}'
\Longrightarrow \sign(x_{1})\cdot(1_{A_{i}}(x)-1_{A_{j}}(x))>0.
\end{equation}

Then, \eqref{eight39} and \eqref{five8.3} say that
\begin{equation}\label{eight40}
\eta\rho\leq\abs{x_{1}}
\,\wedge\,\vnorm{x}_{2}^{2}\leq-4\log(\eta\rho)
\,\wedge\,x\in B_{i}'\cup B_{j}'
\Longrightarrow \sign(x_{1})\cdot(1_{A_{i}}(x)-1_{A_{j}}(x))>0.
\end{equation}

Finally, \eqref{eight32} follows from \eqref{eight40}, completing the proof.
\end{proof}

\section{Proof of the Main Theorem}\label{secmain}

We now combine the Lemmas of the previous sections, as described in Section \ref{secintro}.  The main effort involves verifying the assumption of the Main Lemma \ref{lemma8}. Once this is done, Lemma \ref{lemma8} can be iterated infinitely many times to complete the proof.

\begin{theorem}\label{thm1}
Fix $k=3$, $n\geq2$.  Define $\Delta_{k}(\gamma_{n})$ as in Definition \ref{Deltadef} and define $\psi_{\rho}$ as in \eqref{three1.5}.  Let $\{C_{i}\}_{i=1}^{k}\subset\R^{n}$ be a regular simplicial conical partition.  Then there exists $\rho_{0}=\rho_{0}(n,k)>0$ such that, for all $\rho\in(0,\rho_{0})$, $(1_{C_{1}},\ldots,1_{C_{k}})$ uniquely achieves the following supremum, up to rotation
$$
\sup_{(f_{1},\ldots,f_{k})\in\Delta_{k}(\gamma_{n})}
\rho^{-1}\sum_{i=1}^{k}\int_{\R^{n}} f_{i}LT_{\rho}f_{i}d\gamma_{n}
=\sup_{(f_{1},\ldots,f_{k})\in\Delta_{k}(\gamma_{n})}
\psi_{\rho}(f_{1},\ldots,f_{k}).
$$
\end{theorem}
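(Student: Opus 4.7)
The strategy has three stages: invoke existence of a maximizer, show that any maximizer is already close to a regular simplicial conical partition, and then run Lemma \ref{lemma8} as a contractive feedback loop to upgrade ``close'' to ``equal.''

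First, by Lemma \ref{lemma1}, the supremum is attained at some partition $\{A_i\}_{i=1}^3$ satisfying the first variation inclusion \eqref{three1}. It therefore suffices to prove that, for $\rho_0$ small and $\rho \in (0,\rho_0)$, any such maximizer coincides $\gamma_n$-a.e. with a regular simplicial conical partition; uniqueness up to rotation then follows from the fact that any two regular simplicial conical partitions of $\R^n$ are related by a rotation. To establish this closeness, I would use \eqref{six3.0} to write
\begin{equation*}
\rho\, \psi_\rho(1_{A_1},1_{A_2},1_{A_3}) \;=\; \psi_0(1_{A_1},1_{A_2},1_{A_3}) \;+\; O_n(\rho),
\end{equation*}
where the error is bounded uniformly in the partition because the Hermite coefficients of each $1_{A_i}$ are $\ell^2$-summable with norm bounded by $\gamma_n(A_i)$. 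Applied to a regular simplicial conical partition $\{C_i\}$, the same identity together with Lemma \ref{lemma5.3} gives $\rho\,\psi_\rho(1_{C_1},1_{C_2},1_{C_3}) \ge 9/(8\pi) - C_n\rho$. Comparing with the maximality of $\{A_i\}$ yields $\psi_0(1_{A_1},1_{A_2},1_{A_3}) \ge 9/(8\pi) - C_n'\rho$, and Lemma \ref{lemma5} then produces a regular simplicial conical partition $\{B_i\}$ with $d_2(\{A_i\}, \{B_i\}) \le 6(C_n'\rho)^{1/8}$.

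Second, I would use this closeness to verify the hypotheses of Lemma \ref{lemma8} with some initial tolerance $\eta_0$ (say $\eta_0 = \rho^{1/100}$). For each pair $i \neq j$, choose a regular simplicial conical partition $\{B_p'\}$ as in \eqref{five0.5} so that $\int_{\R^n} y(1_{A_i} - 1_{A_j})\,d\gamma_n$ is parallel to the corresponding difference of barycenters; the common plane $\Pi$ in \eqref{five0.9} is taken to be $\mathrm{span}\{z_1,z_2,z_3\}$ with $z_i \colonequals \int_{A_i} x\,d\gamma_n(x)$, which is close to the symmetry plane of $\{B_i\}$ because the $d_2$-bound forces the $z_i$ to nearly span a $2$-plane in a regular triangular configuration. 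The pointwise inclusion \eqref{five0.3} then follows from the $d_2$-bound combined with two standard Gaussian facts: the $\eta$-neighborhood of $\partial B_i'\cup\partial B_j'$ has Gaussian measure $O(\eta)$, and $\gamma_n(\R^n\setminus B(0,R)) \le e^{-R^2/2}$.

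Third, taking $\rho_0$ small enough that $\eta_0 < \rho_0 < e^{-20(n+1)^{10^{12}n^3(n+2)!}}$, Lemma \ref{lemma8} applies and yields a new regular simplicial conical partition $\{B_i''\}$ obeying \eqref{five0.4} with tolerance $\rho\eta_0$ replacing $\eta_0$. Iterating gives tolerances $\eta_N = \eta_0\rho^N \to 0$ while the far-field cutoff grows only like $\sqrt{-2\log\eta_N}$, so the exceptional set on which $1_{A_i} - 1_{A_j}$ can differ from $1_{B_i^{(N)}} - 1_{B_j^{(N)}}$ has Gaussian measure tending to zero. In the limit $1_{A_i} - 1_{A_j} = 1_{B_i} - 1_{B_j}$ almost everywhere for every pair, forcing $\{A_i\}$ to agree $\gamma_n$-a.e. with a regular simplicial conical partition. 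The main obstacle is the bookkeeping in the second and third steps: converting the $L^2$-type estimate $d_2 \lesssim \rho^{1/8}$ from Lemma \ref{lemma5} into the geometric pointwise containment \eqref{five0.3}, and simultaneously making the planarity choice \eqref{five0.9} consistent across all three pairs $(i,j)$. This requires exploiting the structural rigidity of the regular $2$-simplex (the $k-1=2$ case) to merge three pairwise approximations into one consistent near-simplicial picture before Lemma \ref{lemma8}'s iterative machinery takes over.
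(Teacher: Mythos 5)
Your high-level architecture matches the paper's (Lemma~\ref{lemma1} for existence, Lemma~\ref{lemma5} for $L^2$-closeness, then iterate Lemma~\ref{lemma8}), but there is a genuine gap at the bridge between stages two and three. You claim that the pointwise containment \eqref{five0.3} ``follows from the $d_2$-bound combined with two standard Gaussian facts.'' It does not. The $d_2$-bound $d_2(\{A_i\},\{B_i\}) \lesssim \rho^{1/8}$ gives $\gamma_n(A_i\triangle B_i) \lesssim \rho^{1/4}$, i.e.\ it controls the \emph{measure} of the disagreement set, but says nothing at all about \emph{where} that set lies. Hypothesis \eqref{five0.3}, by contrast, is a pointwise statement: the disagreement set must be confined to an $\eta$-neighborhood of $\partial B_i'\cup\partial B_j'$ or to the far field $\|x\|_2\geq\sqrt{-2\log\eta}+\cdots$, and moreover the paper feeds $\eta=\rho^{21/20}<\rho$ into Lemma~\ref{lemma8} --- a far thinner strip than the $O(\rho^{1/4})$ mass estimate could ever yield even if you could localize it. A set of Gaussian measure $\rho^{1/4}$ can easily be a fat blob near the origin and far from $\partial B_i'$, which would violate \eqref{five0.3} with any $\eta$ of this order. (Incidentally your proposed $\eta_0=\rho^{1/100}$ already violates the requirement $\eta<\rho$ in Lemma~\ref{lemma8}.)

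The paper closes this gap by an additional, non-$L^2$ argument (its Step~3) that you have omitted. After rotating so that $c_\ell=0$ for $|\ell|=1,\ell_1=0$, one expands $\rho^{-1}LT_\rho(1_{A_i}-1_{A_j}-1_{B_i}+1_{B_j})$ in Hermite polynomials, uses Hilbert-space duality and \eqref{three7.93} to bound each coefficient by $O(\rho^{1/16})$, and then invokes the pointwise Hermite bound of Lemma~\ref{lemma6} to obtain the uniform estimate \eqref{two2} on the region $\|x\|_2^2\le-\log\rho^3$. Combined with the lower bound \eqref{two3} for $\{B_i\}$ from Lemma~\ref{lemma4}, this shows $\rho^{-1}\sign(x_1)\cdot LT_\rho(1_{A_i}-1_{A_j})(x)>0$ whenever $|x_1|>\rho^{21/20}$ and $\|x\|_2^2\le-\log\rho^3$. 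Only then does the first-variation inclusion \eqref{three1} --- not the $d_2$ estimate --- convert this into the \emph{set} containment $\sign(x_1)\cdot(1_{A_i}(x)-1_{A_j}(x))>0$, which is \eqref{two3.11} and is what actually feeds Lemma~\ref{lemma8} with $\eta=\rho^{21/20}$. In short, the $L^2$ information enters only through Hermite coefficient bounds; the geometric localization is extracted from the optimality condition \eqref{three1}, and this step cannot be bypassed by measure-theoretic heuristics.
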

\begin{proof}
Within the proof, we will assert that $\rho>0$ satisfies several upper bounds, and then at the end of the proof, we will define $\rho_{0}$ as the minimum of these upper bounds.  By Lemma \ref{lemma1}, let $\{A_{i}\}_{i=1}^{k}$ be a partition of $\R^{n}$ such that

\begin{equation}\label{three5.7}
\psi_{\rho}(1_{A_{1}},\ldots,1_{A_{k}})
=\sup_{(f_{1},\ldots,f_{k})\in\Delta_{k}(\gamma_{n})}
\psi_{\rho}(f_{1},\ldots,f_{k}).
\end{equation}

By \eqref{six2}, write
\begin{equation}\label{three5.8}
\rho^{-1}\sum_{i=1}^{k}\int_{\R^{n}} 1_{A_{i}}LT_{\rho}1_{A_{i}}d\gamma_{n}
=\sum_{i=1}^{k}\sum_{\ell\in\N^{n}}\abs{\ell}\abs{\int_{\R^{n}} 1_{A_{i}}\sqrt{\ell!}h_{\ell}d\gamma_{n}}^{2}\rho^{\abs{\ell}-1}.
\end{equation}

\noindent\embolden{Step 1.  The partition $\{A_{i}\}_{i=1}^{k}$ is close to being simplicial.}

For $i\in\{1,\ldots,k\}$, let $z_{i}\colonequals\int_{A_{i}}xd\gamma_{n}(x)\in\R^{n}$.  Subtracting the $\abs{\ell}=1$ term from both sides of \eqref{three5.8}, treating the remaining terms as error terms, and using that $\vnorm{1_{A_{i}}}_{L_{2}(\gamma_{n})}\leq1$ for all $i=1,\ldots,k$,
\begin{equation}\label{three5p2}
\bigg|\rho^{-1}\sum_{i=1}^{k}\int_{\R^{n}} 1_{A_{i}}LT_{\rho}1_{A_{i}}d\gamma_{n}-\sum_{i=1}^{k}\vnorm{z_{i}}_{\ell_{2}^{n}}^{2}\bigg|
\leq3k\rho.
\end{equation}
Therefore,
\begin{flalign*}
&\sum_{i=1}^{k}\vnorm{z_{i}}_{\ell_{2}^{n}}^{2}
\stackrel{\eqref{six3}}{=}\psi_{0}(1_{A_{1}},\ldots,1_{A_{k}})
\stackrel{\eqref{three5p2}}{\geq}\psi_{\rho}(1_{A_{1}},\ldots,1_{A_{k}})-3k\rho
\stackrel{\eqref{three5.7}}{\geq}\psi_{\rho}(1_{B_{1}},\ldots,1_{B_{k}})-3k\rho\\
&\quad\stackrel{\eqref{three5p2}}{\geq}\psi_{0}(1_{B_{1}},\ldots,1_{B_{k}})-6k\rho
\stackrel{(\mathrm{Lemma}\,\,\ref{lemma0})}{=}\sup_{(f_{1},\ldots,f_{k})\in\Delta_{k}(\gamma_{n})}\psi_{0}(f_{1},\ldots,f_{k})-6k\rho.
\end{flalign*}

\noindent\embolden{Step 2.  Applying a small rotation.}

For $i\in\{1,\ldots,k\}$, let $w_{i}\colonequals\int_{B_{i}}xd\gamma_{n}(x)$.  Let $\rho>0$ such that $6k\rho<10^{-2}$.  Then by Lemma \ref{lemma5},
\begin{equation}\label{three5p}
d_{2}(\{A_{i}\}_{i=1}^{k},\{B_{i}\}_{i=1}^{k})<6(6k\rho)^{1/8},
\end{equation}
\begin{equation}\label{three5p1}
\inf_{\sigma\in SO(n)}\left(\sum_{i=1}^{k}\vnorm{\sigma z_{i}-w_{i}}_{\ell_{2}^{n}}^{2}\right)^{1/2}<6(6k\rho)^{1/8}.
\end{equation}
Note that \eqref{three5p1} follows from \eqref{three5p} by Hilbert space duality and since the set of functions $\{x_{i}\}_{i=1}^{n}$ are contained in the orthonormal basis $\{h_{\ell}\sqrt{\ell!}\}_{\ell\in\N^{n}}$ of $L_{2}(\gamma_{n})$.

Let $x=(x_{1},\ldots,x_{n})\in\R^{n}$, let $i,j\in\{1,\ldots,k\}$, $i\neq j$, and write the following equality of $L_{2}$ functions
\begin{equation}\label{three6}
1_{A_{i}}(x)-1_{A_{j}}(x)\equalscolon\sum_{\ell\in \N^{n}}c_{\ell}h_{\ell}(x)\sqrt{\ell!}.
\end{equation}
Let $\ell=(\ell_{1},\ldots,\ell_{n})\in\N^{n}$.  By applying an orthogonal change of coordinates to $\{A_{p}\}_{p=1}^{k}$, we may assume that $c_{\ell}=0$ when $\abs{\ell}=1$, $\ell_{1}=0$.  By \eqref{six1.1} and \eqref{six1}, write
\begin{equation}\label{three7}
\rho^{-1}LT_{\rho}(1_{A_{i}}-1_{A_{j}})(x)=\sum_{\ell\in\N^{n}}c_{\ell}\abs{\ell}\rho^{\abs{\ell}-1}h_{\ell}(x)\sqrt{\ell!}.
\end{equation}

Since $d_{2}(\{A_{p}\}_{p=1}^{k},\{B_{p}\}_{p=1}^{k})<6(6k\rho)^{1/8}$, there exists $\{B_{p}''\}_{p=1}^{k}$ a regular simplicial conical partition, such that $(\sum_{p=1}^{k}\vnormf{1_{A_{p}}-1_{B_{p}''}}_{L_{2}(\gamma_{n})}^{2})^{1/2}<6(6k\rho)^{1/8}$.  In particular, by Hilbert space duality,
\begin{equation}\label{three7.8}
\bigg|\bigg|\int_{\R^{n}}x(1_{A_{i}}(x)-1_{A_{j}}(x)-(1_{B_{i}''}(x)-1_{B_{j}''}(x)))d\gamma_{n}(x)\bigg|\bigg|_{\ell_{2}^{n}}<6(6k\rho)^{1/8},
\end{equation}
\begin{equation}\label{three7.88}
\bigg|\bigg|\int_{\R^{n}}x(1_{(A_{i}\cup A_{j})^{c}}(x)-1_{(B_{i}''\cup B_{j}'')^{c}}(x))d\gamma_{n}(x)\bigg|\bigg|_{\ell_{2}^{n}}<6(6k\rho)^{1/8}.
\end{equation}

Since $k=3$, and since $\sum_{p=1}^{k}\int_{A_{p}}xd\gamma_{n}(x)=\int_{\R^{n}}xd\gamma_{n}(x)=0$, there exists a $2$-dimensional plane $\Pi\subset\R^{n}$ such that $0\in\Pi$ and such that, for all $p\in\{1,\ldots,k\}$, $\int_{A_{p}}xd\gamma_{n}(x)\in\Pi$.  Without loss of generality, $\Pi$ contains the $x_{1}$ and $x_{2}$ axes.

Let $\{B_{p}'\}_{p=1}^{k}$ be a regular simplicial conical partition such that
\begin{equation}\label{two0.1}
\left(\sum_{p=1}^{k}\vnormf{1_{B_{p}'}-1_{B_{p}''}}_{L_{2}(\gamma_{n})}^{2}\right)^{1/2}<10(6k\rho)^{1/16},
\end{equation}
such that for fixed $i\neq j$, $i,j\in\{1,\ldots,k\}$ and for some $\lambda'\in\R$,
\begin{equation}\label{two0}
\int_{\R^{n}}x(1_{A_{i}}(x)-1_{A_{j}}(x))d\gamma_{n}(x)=\lambda'\int_{\R^{n}}x(1_{B_{i}'}(x)-1_{B_{j}'}(x))d\gamma_{n}(x),
\end{equation}
and such that
\begin{equation}\label{two0.4}
\int_{\R^{n}}x(1_{(B_{i}'\cup B_{j}')^{c}})d\gamma_{n}(x)\in\Pi.
\end{equation}

Such $\{B_{p}'\}_{p=1}^{k}$ exists by \eqref{three7.8}, letting $\rho>0$ such that $\rho<(10000k)^{-8}$, so that
$$\vnorm{\int_{\R^{n}}x(1_{B_{i}''}(x)-1_{B_{j}''}(x))d\gamma_{n}(x)}_{\ell_{2}^{n}}=3\sqrt{2}/(4\sqrt{\pi}).$$
$$\vnorm{\int_{\R^{n}}x(1_{(B_{i}''\cup B_{j}'')^{c}}(x))d\gamma_{n}(x)}_{\ell_{2}^{n}}
=\vnorm{\int_{\R^{n}}x(1_{B_{i}''}(x))}_{\ell_{2}^{n}}
=\sqrt{6}/(4\sqrt{\pi}).$$
So, by the triangle inequality applied to \eqref{three7.8}, and \eqref{three7.88},
\begin{equation}\label{three7.90}
\vnorm{\int_{\R^{n}}x(1_{A_{i}}(x)-1_{A_{j}}(x))d\gamma_{n}(x)}_{\ell_{2}^{n}}
>3\sqrt{2}/(4\sqrt{\pi})-10^{-2}>1/3.
\end{equation}
\begin{equation}\label{three7.91}
\vnorm{\int_{\R^{n}}x(1_{(A_{i}\cup A_{j})^{c}}(x))d\gamma_{n}(x)}_{\ell_{2}^{n}}
>\sqrt{6}/(4\sqrt{\pi})-10^{-2}>1/3.
\end{equation}
Specifically, we first apply a rotation to $\{B_{p}''\}_{p=1}^{k}$ such that \eqref{two0} holds.  Then, by \eqref{three7.88}, we then apply another rotation that fixes the $x_{1}$ axis, so that \eqref{two0.4} holds.  By \eqref{three7.8}, \eqref{three7.88}, \eqref{three7.90} and \eqref{three7.91}, each of these two rotations can be chosen so that a given unit vector is moved in $\R^{n}$ a distance not more than $12(6k\rho)^{1/8}$.  And since we are rotating three polygonal cones with two facets each, \eqref{two0.1} holds.

Using \eqref{two0.1} and the triangle inequality,
\begin{equation}\label{three7.93}
(\sum_{p=1}^{k}\vnormf{1_{A_{p}}-1_{B_{p}'}}_{L_{2}(\gamma_{n})}^{2})^{1/2}<20(6k\rho)^{1/16}.
\end{equation}
Also, using that $c_{\ell}=0$ for $\abs{\ell}=1,\ell_{1}=0$, \eqref{two0} implies that $B_{i}'\cap B_{j}'\subset\{x\in\R^{n}\colon x_{1}=0\}$, and we may assume that $B_{i}'\subset\{x\in\R^{n}\colon x_{1}\geq0\}$.

Let $n_{i}'\in\R^{n}$ denote the interior unit normal of $B_{i}'$ such that $n_{i}'$ is normal to $(\partial B_{i}')\setminus B_{j}'$, and let $n_{j}'\in\R^{n}$ denote the interior unit normal of $B_{j}'$ such that $n_{j}'$ is normal to $(\partial B_{j}')\setminus B_{i}'$.  Then, define $B_{i},B_{j}$ such that
\begin{equation}\label{five7.55}
\begin{aligned}
&B_{i}\colonequals B_{i}'\cup\{x\in\R^{n}\colon x_{1}\geq0
\wedge\langle n_{i}',x/\vnorm{x}_{2}\rangle\geq-4\rho^{21/20}/\sqrt{-3\log\rho}\},\\
&B_{j}\colonequals B_{j}'\cup\{x\in\R^{n}\colon x_{1}\leq0\wedge
\langle n_{j}',x/\vnorm{x}_{2}\rangle\geq-4\rho^{21/20}/\sqrt{-3\log\rho}\}.
\end{aligned}
\end{equation}

Since $B_{i}\cup B_{j}$ is symmetric with respect to reflection across $B_{i}\cap B_{j}=B_{i}'\cap B_{j}'$, equation \eqref{two0} implies that there is a $\lambda>0$ such that
\begin{equation}\label{two00}
\int_{\R^{n}}x(1_{A_{i}}(x)-1_{A_{j}}(x))d\gamma_{n}(x)=\lambda\int_{\R^{n}}x(1_{B_{i}}(x)-1_{B_{j}}(x))d\gamma_{n}(x).
\end{equation}

\noindent\embolden{Step 3.  An estimate for small $x$.}

Combining \eqref{six1}, \eqref{six1.1}, and \eqref{two00}, there exists $\abs{b_{1}}<50(6k\rho)^{1/16}$ (by Hilbert space duality, eqref{three7.93} and \eqref{five7.55}) such that
\begin{equation}\label{two1}
\rho^{-1}LT_{\rho}(1_{A_{i}}-1_{A_{j}})(x)-\rho^{-1}LT_{\rho}(1_{B_{i}}-1_{B_{j}})(x)-x_{1}b_{1}
\equalscolon\sum_{\ell\in\N^{n}\colon\abs{\ell}\geq2}b_{\ell}\abs{\ell}\rho^{\abs{\ell}-1}h_{\ell}(x)\sqrt{\ell!}.
\end{equation}
Choose $\rho_{1}$ so that $0<\rho<\rho_{1}$ implies $100k^{1/16}\sum_{m=2}^{\infty}m(m+n-1)^{n}\rho^{m-2}m^{n}3^{m}(-\log\rho^{3})^{m/2}<\rho^{-1/80}/20$.  Recall that the number of $\ell\in\N^{n}$ such that $\abs{\ell}=m$ is equal to 
$\frac{m+n-1!}{m!(n-1)!}\leq (m+n-1)^{n}$.  Note that, $\abs{b_{\ell}}<100k^{1/16}\rho^{1/16}$, for all $\ell\in\N^{n}$, $\abs{\ell}\geq2$, by Hilbert space duality.  Let $x\in\R^{n}$ with $\vnorm{x}_{2}^{2}\leq-\log\rho^{3}$.  By \eqref{two1}, Lemma \ref{lemma6},
\begin{equation}\label{two2}
\begin{aligned}
&\abs{\rho^{-1}LT_{\rho}(1_{A_{i}}-1_{A_{j}})(x)-\rho^{-1}LT_{\rho}(1_{B_{i}}-1_{B_{j}})(x)-x_{1}b_{1}}\\
&\qquad\leq100k^{1/8}\rho^{17/16}\sum_{\ell\in\N^{n}\colon\abs{\ell}\geq2}\abs{\ell}\rho^{\abs{\ell}-2}\abs{h_{\ell}(x)}\sqrt{\ell!}\\
&\qquad\leq100k^{1/8}\rho^{17/16}\sum_{\ell\in\N^{n}\colon\abs{\ell}\geq2}
\abs{\ell}\rho^{\abs{\ell}-2}\abs{\ell}^{n}3^{\abs{\ell}}\prod_{i=1}^{n}\max\{1,\abs{x_{i}}^{\ell_{i}}\}\\
&\qquad\leq100k^{1/8}\rho^{17/16}\sum_{m=2}^{\infty}
m(m+n-1)^{n}\rho^{m-2}m^{n}3^{m}(-\log\rho^{3})^{m/2}
\leq\rho^{21/20}/20.
\end{aligned}
\end{equation}

From Lemma \ref{lemma4} and \eqref{six1.1}, for $x=(x_{1},\ldots,x_{n})$, with $B_{i}\cap B_{j}\subset\{x\in\R^{n}\colon x_{1}=0\}$,
\begin{equation}\label{two3}
x\in B_{i}\cup B_{j}\,\wedge\,\vnorm{x}_{2}^{2}\leq-\log\rho^{3}
\Longrightarrow
\rho^{-1}\sign(x_{1})\cdot LT_{\rho}(1_{B_{i}}-1_{B_{j}})(x)>(1/10)\abs{x_{1}}.
\end{equation}

Then \eqref{two2} and \eqref{two3} show that
\begin{equation}\label{three7.5}
\abs{x_{1}}>\rho^{21/20}
\,\wedge\,\vnorm{x}_{2}^{2}\leq-\log\rho^{3}
\,\wedge\, x\in B_{i}\cup B_{j}
\Longrightarrow\rho^{-1}\sign(x_{1})\cdot LT_{\rho}(1_{A_{i}}-1_{A_{j}})(x)>0.
\end{equation}

By \eqref{three5.7}, Lemma \ref{lemma1}, and by applying \eqref{three7.5} for all $i',j'\in\{1,\ldots,k\}$, $i'\neq j'$, along with the inclusion-exclusion principle,
\begin{equation}\label{two3.1}
\abs{x_{1}}>\rho^{21/20}
\,\wedge\,\vnorm{x}_{2}^{2}\leq-\log\rho^{3}
\,\wedge\, x\in B_{i}'\cup B_{j}'
\Longrightarrow\sign(x_{1})\cdot(1_{A_{i}}(x)-1_{A_{j}}(x))>0.
\end{equation}
From \eqref{two3.1},
\begin{equation}\label{two3.11}
\begin{aligned}
&x\in B_{i}'\cup B_{j}'
\,\wedge\,\absf{d(x,(\partial B_{i}')\cup(\partial B_{j}'))}>\rho^{21/20}
\,\wedge\,\vnorm{x}_{2}^{2}\leq-\log\rho^{3}\\
&\qquad\qquad\Longrightarrow\sign(x_{1})\cdot(1_{A_{i}}(x)-1_{A_{j}}(x))>0.
\end{aligned}
\end{equation}

\noindent\embolden{Step 4.  Applying the Main Lemma.}

Recall that there exists a $2$-dimensional plane $\Pi\subset\R^{n}$ such that $0\in\Pi$ and such that, for all $p\in\{1,\ldots,k\}$, $\int_{A_{p}}xd\gamma_{n}(x)\in\Pi$.  Define
\begin{flalign*}
S&\colonequals\mathrm{span}\left\{\int_{\R^{n}}(1_{B_{i}'}(x)-1_{B_{j}'}(x))xd\gamma_{n}(x)
,\int_{\R^{n}}(1_{B_{i}'}(x)-1_{(B_{i}'\cup B_{j}')^{c}}(x))xd\gamma_{n}(x)\right.\\
&\qquad\qquad\qquad\qquad\qquad\qquad\left.,\int_{\R^{n}}(1_{B_{j}'}(x)-1_{(B_{i}'\cup B_{j}')^{c}}(x))xd\gamma_{n}(x)\right\}.
\end{flalign*}
Note that $S$ is a $2$-dimensional plane and $0\in S$.  By \eqref{two0}, $\int_{\R^{n}}(1_{B_{i}'}(x)-1_{B_{j}'}(x))d\gamma_{n}(x)\in\Pi$.  Moreover, since $\{B_{i}',B_{j}',(B_{i}'\cup B_{j}')^{c}\}$ is a regular simplicial conical partition,
$$
S=\mathrm{span}\left\{\int_{B_{i}'}xd\gamma_{n}(x),\int_{B_{j}'}xd\gamma_{n}(x),\int_{(B_{i}'\cup B_{j}')^{c}}xd\gamma_{n}(x)\right\}.
$$
From \eqref{two0.4}, $S$ and $\Pi$ are $2$-dimensional planes that both contain the linearly independent vectors $\int_{(B_{i}'\cup B_{j}')^{c}}xd\gamma_{n}(x)$ and $\int_{\R^{n}}(1_{B_{i}'}(x)-1_{B_{j}'}(x))xd\gamma_{n}(x)$.  We therefore conclude that $S=\Pi$.  In particular,
\begin{equation}\label{two4}
\int_{B_{p}'}xd\gamma_{n}(x)\in\Pi,\,\forall\,p\in\{i,j\}.
\end{equation}

Let $\rho_{0}\colonequals\min(\rho_{1},10^{-2}/6k,e^{-20(n+1)^{10^{12}n^{3}(n+2)!}})$.  Using \eqref{two3.11}, \eqref{two0} and \eqref{two4}, we can iteratively apply Lemma \ref{lemma8} an infinite number of times.  In particular, any time we know the conclusion \eqref{five0.4}, we use \eqref{five0.4} in the assumption \eqref{five0.3}.  That is, we first apply Lemma \ref{lemma8} with $\eta=\rho^{21/20}$.  In this case, since $\eta=\rho^{21/20}$, \eqref{two3.11} implies \eqref{five0.3}, \eqref{two4} implies \eqref{five0.9}, and \eqref{two00} implies \eqref{five0.5}.  Now, using the conclusion \eqref{five0.4} of Lemma \ref{lemma8}, we can then apply Lemma \ref{lemma8} with $\eta=\rho^{1+21/20}$.  Once again, using the conclusion \eqref{five0.4} of Lemma \ref{lemma8}, we can apply Lemma \ref{lemma8} with $\eta=\rho^{2+21/20}$, and so on.  Repeating this process infinitely many times shows that there exists a regular simplicial conical partition $\{C_{i}\}_{i=1}^{k}$ which is equal to $\{A_{i}\}_{i=1}^{k}$.
\end{proof}

The Main Theorem now follows from Theorem \ref{thm1} and the Fundamental Theorem of Calculus.

\begin{theorem}[\textbf{Main Theorem}]\label{thm2}
Let $n\geq2,k=3$.  There exists $\rho_{0}=\rho_{0}(n,k)>0$ such that Conjecture \ref{SSC} holds for $\rho\in(0,\rho_{0})$.  Moreover, up to orthogonal transformation, the regular simplicial conical partition uniquely achieves the maximum of \eqref{six1.5} in Conjecture \ref{SSC}.
\end{theorem}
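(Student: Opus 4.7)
The plan is to transfer Theorem \ref{thm1}, which identifies the maximizer of the derivative quantity $\psi_{\rho}$, into the analogous statement for the noise stability $J$ itself, via the Fundamental Theorem of Calculus. Fix a partition $\{A_{i}\}_{i=1}^{k}$ of $\R^{n}$ with $\gamma_{n}(A_{i})=1/k$ for all $i$, and let $\{C_{i}\}_{i=1}^{k}$ denote a regular simplicial conical partition. The first observation is that $\rho\mapsto J(\{A_{i}\},\rho)\colonequals\sum_{i=1}^{k}\int_{\R^{n}}1_{A_{i}}T_{\rho}1_{A_{i}}d\gamma_{n}$ is smooth on $(-1,1)$, since the series \eqref{six2} converges absolutely and uniformly on compact subsets; its value at $\rho=0$ is $\sum_{i=1}^{k}\gamma_{n}(A_{i})^{2}=1/k$, and its derivative in $\rho$ coincides with $\psi_{\rho}(1_{A_{1}},\ldots,1_{A_{k}})$ by the definition \eqref{three1.5} together with termwise differentiation of \eqref{six2}. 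The same identity holds for $\{C_{i}\}$, with the same initial value $1/k$ and with derivative $\psi_{\rho}(1_{C_{1}},\ldots,1_{C_{k}})$.

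Next, I would apply Theorem \ref{thm1} pointwise in $t$: for every $t\in(0,\rho_{0})$, $\psi_{t}(1_{A_{1}},\ldots,1_{A_{k}})\leq\psi_{t}(1_{C_{1}},\ldots,1_{C_{k}})$, with strict inequality unless $\{A_{i}\}$ agrees with $\{C_{i}\}$ up to rotation and relabeling. Integrating this pointwise inequality against $dt$ on $[0,\rho]$ and then adding the common initial value $1/k$ to both sides yields
$$
\sum_{i=1}^{k}\int_{\R^{n}}1_{A_{i}}T_{\rho}1_{A_{i}}d\gamma_{n}
\leq\sum_{i=1}^{k}\int_{\R^{n}}1_{C_{i}}T_{\rho}1_{C_{i}}d\gamma_{n}
$$
for all $\rho\in(0,\rho_{0})$, which is exactly part (a) of Conjecture \ref{SSC} (the only case in the range of $\rho$ covered by the theorem).

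For the uniqueness clause, suppose $\{A_{i}\}$ is not a rotation of any regular simplicial conical partition. Theorem \ref{thm1} then provides strict inequality $\psi_{t}(1_{A_{1}},\ldots,1_{A_{k}})<\psi_{t}(1_{C_{1}},\ldots,1_{C_{k}})$ at each $t\in(0,\rho_{0})$; the difference is continuous in $t$ by the smoothness of \eqref{six2}, so integrating a strictly positive continuous function over $(0,\rho)$ preserves strict positivity, giving $J(\{A_{i}\},\rho)<J(\{C_{i}\},\rho)$ for all $\rho\in(0,\rho_{0})$. There is no real obstacle at this stage: all of the substantive analytic and geometric work is concentrated in Theorem \ref{thm1} (and through it, in the iterative feedback argument of Lemma \ref{lemma8}), while the passage to the Main Theorem amounts only to the formal FTC-plus-continuity argument outlined above.
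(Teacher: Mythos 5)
Your proposal is correct and follows essentially the same route as the paper: both pass from Theorem~\ref{thm1} (identifying the maximizer of $\psi_{\rho}$) to the statement about $J$ via the Fundamental Theorem of Calculus, using that every equal-measure partition starts at the common value $1/k$ at $\rho=0$. You are slightly more explicit than the paper in spelling out the uniqueness clause (integrating a continuous strictly positive difference of $\psi_t$'s), which the paper's displayed proof leaves implicit; but there is no substantive difference in method.
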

\begin{proof}
Choose $\rho_{0}$ via Theorem \ref{thm1} and let $0<\rho<\rho_{0}$.  Let $\{B_{i}\}_{i=1}^{k}\subset\R^{n}$ be a regular simplicial conical partition.  By Theorem \ref{thm1} and the fact that $\Delta_{k}^{0}(\gamma_{n})\subset\Delta_{k}(\gamma_{n})$,
\begin{equation}\label{two3.2}
\psi_{\rho}(1_{B_{1}},\ldots,1_{B_{k}})
=\sup_{(f_{1},\ldots,f_{k})\in\Delta_{k}^{0}(\gamma_{n})}\psi_{\rho}(f_{1},\ldots,f_{k}).
\end{equation}
Let $(f_{1},\ldots,f_{k})\in\Delta_{k}^{0}(\gamma_{n})$.  By \eqref{six2}, $\sum_{i=1}^{k}\int_{\R^{n}} f_{i}T_{0}f_{i}d\gamma_{n}=k(1/k^{2})=1/k$.  By the Fundamental Theorem of Calculus and \eqref{two3.2},
\begin{flalign*}
&\sum_{i=1}^{k}\int_{\R^{n}} f_{i}T_{\rho}f_{i}d\gamma_{n}
=\int_{0}^{\rho}\bigg[\frac{d}{d\alpha}\sum_{i=1}^{k}\int_{\R^{n}} f_{i}T_{\alpha}f_{i}d\gamma_{n}\bigg]d\alpha+\frac{1}{k}
=\int_{0}^{\rho}\psi_{\alpha}(f_{1},\ldots,f_{k})d\alpha+\frac{1}{k}\\
&\leq\int_{0}^{\rho}\psi_{\alpha}(1_{B_{1}},\ldots,1_{B_{k}})d\alpha+\frac{1}{k}
=\int_{0}^{\rho}\bigg[\frac{d}{d\alpha}\sum_{i=1}^{k}\int_{\R^{n}} 1_{B_{i}}T_{\alpha}1_{B_{i}}d\gamma_{n}\bigg]d\alpha+\frac{1}{k}\\
&\qquad\qquad\qquad=\sum_{i=1}^{k}\int_{\R^{n}} 1_{B_{i}}T_{\rho}1_{B_{i}}d\gamma_{n}.
\end{flalign*}
\end{proof}

By using the invariance principle of \cite[Theorem 1.10,Theorem 3.6,Theorem 7.1,Theorem 7.4]{isaksson11} which transfers results from partitions of Euclidean space to low-influence discrete functions, Theorem \ref{thm2} implies a weak form of the Plurality is Stablest Conjecture.  While the following result is quite far from Conjecture \ref{PS} and might not be of immediate use to complexity theory, it is included to indicate a possible application of Theorem \ref{thm2}.  Essentially, if we modify the exact application of the invariance principle that is used in \cite[Theorem 7.1]{isaksson11}, then Conjecture \ref{PS} follows.  However, by avoiding \cite[Theorem 7.1]{isaksson11}, we must make very restrictive assumptions on the function $f$ in Conjecture \ref{PS}.  Nevertheless, \cite[Theorem 7.4]{isaksson11} shows that the class of functions $f$ described in Corollary \ref{thm0.1} is nonempty.

Note that the most straightforward application of Theorem \ref{thm2} only gives vacuous cases of Conjecture \ref{PS}, in which $0<\rho<\rho_{0}(n,k)$.  In particular, since Theorem \ref{thm2} requires $0<\rho<\rho_{0}(n,k)$, by \eqref{six2} we must take $\epsilon<3k\rho$ to get a nontrivial statement in Conjecture \ref{PS}.  In this case, the invariance principle \cite[Theorem 3.6]{isaksson11} gives $\tau$ with $\log\tau=-C(\log(\epsilon))^{2}(1/\epsilon)$, so that $\tau$ becomes a function of $\rho$.  Since we provide a $\rho$ with inverse exponential dependence on $n$, then $\tau$ also has inverse exponential dependence on $n$.  Thus, no function $f$ can satisfy the assumptions of Conjecture \ref{PS} in this case.  To avoid this issue, we modify Conjecture \ref{PS} as follows.
\begin{cor}[Weak Form of Plurality is Stablest]\label{thm0.1}
Let $\rho_{0}(n,k)$ be given by Theorem \ref{thm2}.  Fix $n\geq2$, $k=3$, and Let $N\colonequals\log\log\log\log\log(n)\geq1$.  Let $0<\rho<\rho_{0}(N,k)<1/2$, $\epsilon>0$, $\tau=\tau(\epsilon,k)>0$.  Let $f\colon\{1,\ldots,k\}^{n}\to\Delta_{k}$ with $\sum_{\sigma\in\{1,\ldots,k\}^{n}\colon\sigma_{j}\neq0}(\widehat{f_{i}}(\sigma))^{2}\leq\tau$ for all $i\in\{1,\ldots,k\}$, $j\in\{1,\ldots,n\}$.  Assume that there exists $0<m<N$ and  $g\colon\R^{m}\to\Delta_{k}$ with $\int_{\R^{m}} gd\gamma_{m}=\frac{1}{k^{n}}\sum_{\sigma\in\{1,\ldots,k\}^{n}}f(\sigma)$, and such that
$$
\bigg|\int_{\R^{n}}\langle\,g,T_{\rho}g\rangle d\gamma_{n}
-\frac{1}{k^{n}}\sum_{\sigma\in\{1,\ldots,k\}^{n}}\langle f(\sigma),T_{\rho}f(\sigma)\rangle\bigg|<\epsilon.
$$
Then part (a) of Conjecture \ref{PS} holds.  From \cite{isaksson11}[Theorem 7.4], this class of $f$ is nontrivial.
\end{cor}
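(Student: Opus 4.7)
The plan is to use the hypothesized bridge function $g$ to transfer the continuous bound of Theorem \ref{thm2} to the discrete side. First I would verify that $g$ lies in the class to which Theorem \ref{thm2} applies. The conclusion of Conjecture \ref{PS}(a) is to be deduced under its own hypothesis $\frac{1}{k^{n}}\sum_{\sigma}f(\sigma)=\frac{1}{k}\sum_{i=1}^{k}e_{i}$, which combined with the assumption $\int_{\R^{m}}g\,d\gamma_{m}=\frac{1}{k^{n}}\sum_{\sigma}f(\sigma)$ forces $\int_{\R^{m}}g_{i}\,d\gamma_{m}=1/k$ for every $i$, placing $(g_{1},\ldots,g_{k})$ in $\Delta_{k}^{0}(\gamma_{m})$. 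Moreover, the explicit expression $\rho_{0}(n,k)=\min\{\rho_{1},10^{-2}/(6k),\exp(-20(n+1)^{10^{12}n^{3}(n+2)!})\}$ built up in the proof of Theorem \ref{thm1} is manifestly non-increasing in its first argument, so $\rho<\rho_{0}(N,k)\le\rho_{0}(m,k)$ whenever $m<N$; Theorem \ref{thm2} therefore applies in dimension $m$.

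Next I would run Theorem \ref{thm2} to obtain
\begin{equation*}
\int_{\R^{m}}\langle g,T_{\rho}g\rangle\,d\gamma_{m}
=\sum_{i=1}^{k}\int_{\R^{m}}g_{i}T_{\rho}g_{i}\,d\gamma_{m}
\le\sum_{i=1}^{k}\int_{\R^{m}}1_{B_{i}}T_{\rho}1_{B_{i}}\,d\gamma_{m},
\end{equation*}
where $\{B_{i}\}_{i=1}^{k}\subset\R^{m}$ is a regular simplicial conical partition. Since each $B_{i}$ is, up to rotation, the product of a cone in a $(k-1)$-dimensional plane with $\R^{m-k+1}$, the orthogonal Gaussian directions factor out of the integral and the right-hand side reduces to $\sum_{i=1}^{k}\int_{\R^{k-1}}1_{B_{i}'}T_{\rho}1_{B_{i}'}\,d\gamma_{k-1}$ for the corresponding regular simplicial conical partition $\{B_{i}'\}_{i=1}^{k}\subset\R^{k-1}$, which is a constant independent of $m$.

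Finally I would identify this Gaussian quantity with the plurality limit via a multidimensional central limit theorem applied to the normalized coordinate sums inside $\mathrm{PLUR}_{m',k}$, yielding
\begin{equation*}
\sum_{i=1}^{k}\int_{\R^{k-1}}1_{B_{i}'}T_{\rho}1_{B_{i}'}\,d\gamma_{k-1}
=\lim_{m'\to\infty}\frac{1}{k^{m'}}\sum_{\sigma\in\{1,\ldots,k\}^{m'}}\langle\mathrm{PLUR}_{m',k}(\sigma),T_{\rho}(\mathrm{PLUR}_{m',k})(\sigma)\rangle;
\end{equation*}
this identity is carried out in \cite[Theorem 1.10, Theorem 7.1]{isaksson11}. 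Combining the three displays above with the hypothesized $\epsilon$-closeness of $\int_{\R^{m}}\langle g,T_{\rho}g\rangle\,d\gamma_{m}$ and $\frac{1}{k^{n}}\sum_{\sigma}\langle f(\sigma),T_{\rho}f(\sigma)\rangle$ yields the conclusion of Conjecture \ref{PS}(a). There is no real obstacle here: the hypothesis hands us $g$, sidestepping the full invariance-principle construction of \cite[Theorem 7.1]{isaksson11}, which is precisely why the corollary demands the restrictive bridge-existence assumption in the first place. The low-influence condition on $f$ is vestigial in the argument itself and is retained only because \cite[Theorem 7.4]{isaksson11} guarantees that the class of $f$ satisfying the remaining hypotheses is nonempty under such an influence bound.
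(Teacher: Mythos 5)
Your proposal is correct and reconstructs the argument the paper only sketches in prose: no explicit proof of Corollary~\ref{thm0.1} appears in the text, so your account supplies the details the paper leaves to the surrounding discussion. Specifically, you correctly place $g$ in $\Delta_{k}^{0}(\gamma_{m})$ using the hypothesis of Conjecture~\ref{PS}(a) together with the bridge assumption; you correctly note that $\rho<\rho_{0}(N,k)\le\rho_{0}(m,k)$ since the explicit $\rho_{0}$ built in the proof of Theorem~\ref{thm1} is (or can be taken to be) non-increasing in dimension; and the factorization of orthogonal Gaussian directions and the CLT identification with the plurality limit are the right ingredients.

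Two small clarifications are worth recording. First, the statement of Theorem~\ref{thm2} is formulated for partitions, while $g$ need not be a tuple of indicators; the inequality you use is the one established inside the proof of Theorem~\ref{thm2} (via Theorem~\ref{thm1} and the Fundamental Theorem of Calculus), which holds for every $(f_{1},\ldots,f_{k})\in\Delta_{k}^{0}(\gamma_{m})$, so citing that is the cleaner move. Second, your attribution of the Gaussian-to-plurality identity to \cite[Theorems 1.10, 7.1]{isaksson11} should be split apart: the step that produces a bridge function from a general low-influence $f$ is precisely what their Theorem~7.1 does and is exactly what the corollary's hypothesis bypasses, whereas the CLT identity between the regular simplicial cone noise stability in $\R^{k-1}$ and $\lim_{m'\to\infty}\frac{1}{k^{m'}}\sum_{\sigma}\langle\mathrm{PLUR}_{m',k}(\sigma),T_{\rho}\mathrm{PLUR}_{m',k}(\sigma)\rangle$ is a separate fact used in the equivalence \cite[Theorem 1.10]{isaksson11}. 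Your observation that the low-influence condition on $f$ is vestigial in the deduction and is retained only for the nontriviality claim via \cite[Theorem 7.4]{isaksson11} is consistent with the paper's remarks preceding the corollary.
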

%

Unfortunately, the proof of Theorem \ref{thm2} fails for small negative $\rho$, as we now show.

\begin{theorem}\label{thm3}
Fix $k=3$, $n\geq2$.  Define $\Delta_{k}^{0}(\gamma_{n})$ as in Definition \ref{dkedef} and define $\psi_{\rho}$ as in \eqref{three1.5}.  Let $\{B_{i}\}_{i=1}^{k}\subset\R^{n}$ be a regular simplicial conical partition.  Then there exists $\rho_{2}=\rho_{2}(n,k)>0$ such that, for $\rho\in(-\rho_{2},0)$, $(1_{B_{1}},\ldots,1_{B_{k}})$ does not achieve the following supremum.
$$
\sup_{(f_{1},\ldots,f_{k})\in\Delta_{k}^{0}(\gamma_{n})}
\rho^{-1}\sum_{i=1}^{k}\int_{\R^{n}} f_{i}LT_{\rho}f_{i}d\gamma_{n}
=\sup_{(f_{1},\ldots,f_{k})\in\Delta_{k}^{0}(\gamma_{n})}
\psi_{\rho}(f_{1},\ldots,f_{k}).
$$
\end{theorem}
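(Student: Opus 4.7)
My plan is to exhibit an admissible swap perturbation of $(1_{B_1}, \ldots, 1_{B_k})$ along which $\psi_\rho$ has strictly positive first variation for $\rho<0$ sufficiently small, thereby contradicting the first-order optimality of $(1_{B_l})$ as a maximizer of $\psi_\rho$ on $\Delta_3^0(\gamma_n)$. Fix indices $i=1, j=2$ and take two points $y_E \in B_2$ and $y_F \in B_1$ together with small balls $E=B(y_E,\delta)$, $F=B(y_F,\delta)$ renormalized so that $\gamma_n(E)=\gamma_n(F)$. The direction $\phi_1 = 1_E - 1_F$, $\phi_2 = -1_E + 1_F$, $\phi_3 = 0$ lies in the admissible cone at the corner $(1_{B_l})$ of $\Delta_3^0(\gamma_n)$, and a direct computation using $LT_\rho = \rho(d/d\rho)T_\rho$ yields
\[
\frac{d}{dt}\bigg|_{t=0^+}\psi_\rho\bigl((1_{B_l}) + t\phi\bigr) = 2\gamma_n(E)\bigl[\Phi_{12}(y_E) - \Phi_{12}(y_F)\bigr] + o(\gamma_n(E)),
\]
where $\Phi_{12}(x) := (d/d\rho)T_\rho(1_{B_1}-1_{B_2})(x)$. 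So the theorem reduces to exhibiting $y_E \in B_2$ and $y_F \in B_1$ with $\Phi_{12}(y_E) > \Phi_{12}(y_F)$ when $\rho<0$ is small.

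Using the Hermite expansion \eqref{six1.8}, I would write $\Phi_{12}(x) = \langle z_1-z_2, x\rangle + 2\rho M_{12}(x) + \text{(higher Hermite terms)}$, where $M_{12}(x) = \sum_{|\ell|=2}(\hat 1_{B_1}(\ell)-\hat 1_{B_2}(\ell))\sqrt{\ell!}\,h_\ell(x)$ is a homogeneous quadratic polynomial supported in the $2$-plane of the partition. A direct polar computation analogous to the sector integrals carried out before Lemma \ref{lemma5.3} gives the explicit formula $M_{12}(r\cos\phi, r\sin\phi, 0, \ldots) = -\tfrac{3}{8\pi}r^2\sin(2\phi - 2\pi/3)$, an indefinite quadratic form vanishing on the bisector ray $\phi = 60^\circ$. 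Choosing $y_F$ at polar angle $15^\circ$ (inside $B_1$, where $M_{12}>0$) and $y_E$ at polar angle $105^\circ$ (inside $B_2$, where $M_{12}<0$), both at common radius $r$, substitution yields
\[
\Phi_{12}(y_E) - \Phi_{12}(y_F) \;\approx\; -c_1 r - 2\rho c_2 r^2 \;=\; -c_1 r\bigl(1 + (2c_2/c_1)\rho r\bigr)
\]
for explicit positive constants $c_1, c_2$; for $\rho<0$ this becomes strictly positive once $r > c_1/(2c_2|\rho|)$, suggesting the desired KKT failure.

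The main obstacle is that the required scale $r \sim 1/|\rho|$ is precisely where the Hermite truncation is not reliable: at that radius every term of degree $|\ell|=p$ contributes an amount of size $|\rho|^{p-1}r^p \sim 1/|\rho|$, so no finite truncation is justified and one cannot conclude about $\Phi_{12}(y_E) - \Phi_{12}(y_F)$ purely from the expansion. To close the argument I plan to abandon the Hermite expansion at the crucial step and instead work directly with the integral representation \eqref{six1.2} of $(d/d\rho)T_\rho$. At the chosen $y_E, y_F$ with $r$ of order $1/|\rho|$ and $\rho<0$, the translated base point $x\rho$ lies at the antipodal angle $x_{\text{angle}} + 180^\circ$ (interior of $B_3$) with magnitude of order one, and the two integrals in \eqref{six1.2} can be estimated to leading order by splitting $\R^n$ according to which of $B_1, B_2, B_3$ contains $x\rho + y\sqrt{1-\rho^2}$ and applying the Gaussian tail bounds of Lemma \ref{lemma6.1} to the contributions where the shift crosses a cone boundary. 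The crux of the argument is verifying that the resulting pointwise estimate has the claimed sign $\Phi_{12}(y_E) > \Phi_{12}(y_F)$ for $|\rho|$ small; once this is in hand, the swap perturbation has strictly positive first variation, which contradicts the maximality of $(1_{B_l})_{l=1}^3$ and proves Theorem \ref{thm3}.
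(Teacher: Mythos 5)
Your reduction is sound and, in its final form, your plan converges to the paper's own strategy: the swap direction $\phi=(1_E-1_F,\,-(1_E-1_F),\,0)$ is exactly the admissible perturbation underlying \eqref{three2.5}, and the paper likewise abandons the Hermite expansion in favor of the integral representation \eqref{six1.2}/\eqref{four2} evaluated at points far from the origin (your diagnosis that the truncated expansion is unreliable at scale $r\sim1/\absf{\rho}$ is correct). The gap is that you stop at precisely the step that constitutes the proof: you never verify the sign of $\Phi_{12}(y_E)-\Phi_{12}(y_F)$, deferring it as ``the crux.'' Without that verification nothing is proved, and it is not routine bookkeeping --- it is the only place where the hypothesis $\rho<0$ actually does work.

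Here is the missing computation as the paper carries it out. Normalize $B_i\cap B_j\subset\{x_1=0\}$, $B_i\subset\{x_1\geq0\}$, let $y$ be the unit bisector of $B_i$ and let $\tilde y\perp y$ be the inward unit normal $n_j$ of the far facet $(\partial B_j)\setminus B_i$. By \eqref{four2}, $\rho^{-1}\nabla T_\rho(1_{B_i}-1_{B_j})(x)$ is the combination $2e_1\mu_1+n_i\mu_2-n_j\mu_3$ of the three facet normals weighted by the Gaussian surface measures $\mu_m$ of the translated facets $(F-x\rho)/\sqrt{1-\rho^2}$. Fix $x\in B_i$ with $\langle x,\tilde y\rangle>0$ held constant and let $\langle x,y\rangle\to\infty$. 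Because $\rho<0$, the translate $-x\rho=\absf{\rho}x$ escapes to infinity \emph{parallel to} the facet $(\partial B_j)\setminus B_i$ (whose direction is $-y$), so $\mu_3$ tends to a constant $c=c(\langle x,\tilde y\rangle)>0$ while $\mu_1,\mu_2$ decay; hence $\langle x,\rho^{-1}\nabla T_\rho(1_{B_i}-1_{B_j})(x)\rangle\to-\langle x,\tilde y\rangle c<0$, which is \eqref{seven1}. The second integral in \eqref{six1.2} carries the prefactor $\rho/(1-\rho^2)$ and is bounded via \eqref{seven2} by $O(\rho^2\langle x,\tilde y\rangle\,(n+2)!)$ plus terms exponentially small in $\rho^2\vnormf{x}_2^2$, so it cannot flip the sign once $\absf{\rho}\lesssim c/(n+2)!$ and $\vnormf{x}_2\gg1/\absf{\rho}$. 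This yields $\Phi_{ij}(x)<0$ at such $x\in B_i$, and the reflection antisymmetry $\Phi_{ij}(\sigma x)=-\Phi_{ij}(x)$ then supplies your pair $y_F=x$, $y_E=\sigma x$ with $\gamma_n(E)=\gamma_n(F)$ for free. One further caution about your specific points: $y_F$ at polar angle $15^\circ$ has $\langle y_F,\tilde y\rangle<0$, so there $\absf{\rho}y_F$ runs off into the deep interior of $B_1$, all three $\mu_m$ vanish, and $\Phi_{12}(y_F)$ is merely exponentially small of indeterminate sign rather than negative; the argument still closes because $\Phi_{12}(y_E)$ is bounded below by a positive constant, but the mirror pair (e.g.\ $75^\circ$ and $105^\circ$) is the choice for which a single estimate handles both points.
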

\begin{proof}
Let $e_{1}=(1,0,\ldots,0)$, $e_{2}=(0,1,0,\ldots,0)$.  Fix $i,j\in\{1,\ldots,k\},i\neq j$.  Let $\sigma\colon\R^{n}\to\R^{n}$ denote reflection across $B_{i}\cap B_{j}$.  Since $B_{i}=\sigma(B_{j})$, by \eqref{three2.5}, it suffices to find $i,j\in\{1,\ldots,k\}$ and $x\in B_{i}$ such that $\rho^{-1}LT_{\rho}1_{B_{i}}(x)<\rho^{-1}LT_{\rho}1_{B_{j}}(x)$.  By replacing $\{B_{i}\}_{i=1}^{k}$ with $\{\tau B_{i}\}_{i=1}^{k}$ for $\tau\colon\R^{n}\to\R^{n}$ a rotation, we may assume that $\mathrm{span}\{z_{i}\}_{i=1}^{k}=\mathrm{span}\{e_{1},e_{2}\}$.  Moreover, we may assume $B_{i}\cap B_{j}\subset\{x\in\R^{n}\colon x_{1}=0\}$ and $B_{i}\subset\{x\in\R^{n}\colon x_{1}\geq0\}$.  Let $y\colonequals(\sqrt{3}/2)e_{1}+(1/2)e_{2}$, $\widetilde{y}\colonequals-(1/2)e_{1}+(\sqrt{3}/2)e_{2}$.  Fix $x\in B_{i}$ with $\langle x,\widetilde{y}\rangle>0$ also fixed.  From \eqref{four2} and the fact that $\rho<0$, there exists $c=c(\langle x,\widetilde{y}\rangle)>0$ such that
\begin{equation}\label{seven1}
\left\langle x,\frac{1}{\rho}\nabla T_{\rho}(1_{B_{i}}-1_{B_{j}})(x)\right\rangle
=-\langle x,\widetilde{y}\rangle(c+O(e^{-\langle x,y\rangle^{2}/2})).
\end{equation}

For $x\in\R^{n}$ with $\langle x,\widetilde{y}\rangle=0$, we have, as in Lemma \ref{lemma6.1}, and Lemma \ref{lemma3},
\begin{flalign*}
&\bigg|\int_{\R^{n}}\bigg(\sum_{i=1}^{n}(1-y_{i}^{2})(1_{B_{i}}-1_{B_{j}})(x\rho+y\sqrt{1-\rho^{2}})d\gamma_{n}(y)\bigg)\bigg|\\
&\qquad\leq2\bigg|\int_{B(0,\rho\vnorm{x}_{2})}\sum_{i=1}^{n}(1-y_{i}^{2})d\gamma_{n}(y)\bigg|
\leq200(n+1)!((\rho\vnorm{x}_{2})^{n}+1)e^{-\rho^{2}\vnorm{x}_{2}^{2}}.
\end{flalign*}
So, a derivative bound as in the proof of \eqref{four0.5} shows
\begin{equation}\label{seven2}
\begin{aligned}
&\bigg|\int_{\R^{n}}\bigg(\sum_{i=1}^{n}(1-y_{i}^{2})(1_{B_{i}}-1_{B_{j}})(x\rho+y\sqrt{1-\rho^{2}})d\gamma_{n}(y)\bigg)\bigg|\\
&\qquad\leq \rho\langle x,\widetilde{y}\rangle200(n+2)!
+200(n+1)!((\rho\vnorm{x}_{2})^{n}+1)e^{-\rho^{2}\vnorm{x}_{2}^{2}}.
\end{aligned}
\end{equation}

Then,
\begin{equation}\label{seven3}
\begin{aligned}
&\rho^{-1}LT_{\rho}(1_{B_{i}}-1_{B_{j}})(x)
\stackrel{\eqref{six1.1}}{=}\frac{1}{\rho}(\langle x,\nabla T_{\rho}(1_{B_{i}}-1_{B_{j}})(x)\rangle-\Delta T_{\rho}(1_{B_{i}}-1_{B_{j}})(x))\\
&=\langle x,T_{\rho}(\nabla (1_{B_{i}}-1_{B_{j}}))(x)\rangle
+\frac{\rho}{1-\rho^{2}}\int_{\R^{n}}\bigg(\sum_{i=1}^{n}(1-y_{i}^{2})(1_{B_{i}}-1_{B_{j}})(x\rho+y\sqrt{1-\rho^{2}})\bigg)d\gamma_{n}(y).
\end{aligned}
\end{equation}
So, choose $\rho<(c/8)(200(n+2)!)^{-1}$, then choose $\langle x,y\rangle$ sufficiently large, and then combine \eqref{seven1},\eqref{seven2} and \eqref{seven3} to get
$$
\rho^{-1}LT_{\rho}(1_{B_{i}}-1_{B_{j}})(x)<-\langle x,\widetilde{y}\rangle\frac{c}{4}.
$$
\end{proof}

\section{Open Problems}

There are two problems that are left open in this work.  First, Conjecture \ref{SSC} remains entirely open for $k\geq4$ partition elements.  Some of the results of this work hold for the case $k\geq4$, and some do not.  The first variation in Lemma \ref{lemma1} holds for all $k\geq4$.  Strictly speaking, the argument of Lemma \ref{lemma1} may not hold for $\rho<0$ since it is not clear whether or not the functional \eqref{three1.5} is convex.  Also, the technical error estimate from Lemma \ref{lemma7} holds.  One of the main issues for the case $k\geq4$ is that Lemma \ref{lemma0} is no longer available.  Moreover, the stability estimate in Lemma \ref{lemma5} would be needed for $k\geq4$. The following conjecture summarizes the main technical issue in proving an analogue of Lemma \ref{lemma0} for $k=4$, $n=3$.  If we could have a stability estimate for Conjecture \ref{c1} below, resembling the estimate of Lemma \ref{lemma5}, then in principle the proof of the Main Lemma, Lemma \ref{lemma8} would go through, and therefore Theorem \ref{thm2} would hold for $k\geq4$ as well.  Before we state the conjecture, recall Definition \ref{dkedef}, \eqref{three1.5} and \eqref{six3}.
\begin{conj}\label{c1}
Let $k=4$, $n=3$.  Suppose $\{A_{i}\}_{i=1}^{k}\subset\R^{n}$ satisfies
$$
\psi_{0}(1_{A_{1}},\ldots,1_{A_{k}})=\sup_{(f_{1},\ldots,f_{k})\in\Delta_{k}^{0}(\gamma_{n})}\psi_{0}(f_{1},\ldots,f_{k}).
$$
Then $\{A_{i}\}_{i=1}^{k}$ is a simplicial conical partition.
\end{conj}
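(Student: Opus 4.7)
By hypothesis, $\{A_i\}_{i=1}^4\subset\R^3$ is a partition with $\gamma_3(A_i)=1/4$ for each $i$ that maximizes $\psi_0$ over $\Delta_4^0(\gamma_3)$. My plan is to proceed in two phases: first, first-variation analysis forces $\{A_i\}$ to be a \emph{weighted} Voronoi partition of its own means; second, a Voronoi-comparison argument promotes ``weighted'' to ``unweighted.''  Set $z_i\colonequals\int_{A_i}x\,d\gamma_3$. Given measurable $E\subset A_i$ and $F\subset A_j$ of equal Gaussian measure $\epsilon$, the swap $A_i\mapsto(A_i\setminus E)\cup F$, $A_j\mapsto(A_j\setminus F)\cup E$ produces another element of $\Delta_4^0(\gamma_3)$, and the first-order coefficient of the induced change in $\psi_0$ equals $2\langle z_i-z_j,\int_F x\,d\gamma_3-\int_E x\,d\gamma_3\rangle$. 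Maximality forces $\operatorname{ess\,sup}_{x\in A_j}\langle z_i-z_j,x\rangle\le\operatorname{ess\,inf}_{x\in A_i}\langle z_i-z_j,x\rangle$ for all $i\neq j$, and assembling the resulting thresholds via the cocycle identity at triple-boundary points produces Lagrange multipliers $\mu_1,\ldots,\mu_4\in\R$ such that, up to $\gamma_3$-null sets,
\begin{equation*}
A_i=\{x\in\R^3:\langle z_i,x\rangle-\mu_i\ge\langle z_j,x\rangle-\mu_j,\ \forall j\neq i\}.
\end{equation*}
Because $\sum_i z_i=\int x\,d\gamma_3=0$ and $\{z_i\}$ generically affinely spans $\R^3$ (recall $k-1=n$), the six bisector planes intersect in a unique common vertex $v\in\R^3$ determined by $\langle z_i-z_j,v\rangle=\mu_i-\mu_j$, so $A_i=v+B_i$ where $\{B_i\}$ is the origin-centered simplicial conical Voronoi partition of $\{z_i\}$. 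The conjecture reduces to showing $v=0$.

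For the second phase I would apply the Voronoi-comparison inequality underlying Lemma~\ref{lemma0}. Let $z_i'\colonequals\int_{B_i}x\,d\gamma_3$. Then
\begin{equation*}
\sum_{i=1}^{4}\vnorm{z_i}_{\ell_2^3}^2=\int_{\R^3}\sum_{i=1}^{4}\langle z_i,x\rangle 1_{A_i}(x)\,d\gamma_3\le\int_{\R^3}\max_j\langle z_j,x\rangle\,d\gamma_3=\sum_{i=1}^{4}\langle z_i,z_i'\rangle,
\end{equation*}
and Cauchy--Schwarz yields $\psi_0(A)\le\psi_0(B)$, with equality iff $A=B$ (equivalently $v=0$). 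If the simplicial conical partition $\{B_i\}$ happened to satisfy $\gamma_3(B_i)=1/4$ for each $i$, then $B\in\Delta_4^0(\gamma_3)$, and maximality of $A$ would immediately force $\psi_0(A)=\psi_0(B)$, hence $v=0$, completing the argument.

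The main obstacle is precisely that the cones $\{B_i\}$ generically have \emph{unequal} Gaussian measures, so $B\notin\Delta_4^0(\gamma_3)$ and the direct comparison cannot be invoked. I see two routes around this. The first is to construct a continuous homotopy $\{A^t\}_{t\in[0,1]}\subset\Delta_4^0(\gamma_3)$ joining $A$ to a simplicial conical partition by sliding the Voronoi vertex from $v$ to $0$ while continuously adjusting the centers $\{z_i^t\}$ so as to preserve $\gamma_3(A_i^t)=1/4$, and proving $\psi_0(A^t)$ is monotone non-decreasing along it. The second, which I expect to be cleaner, is to use the self-consistency identities
\begin{equation*}
z_i=e^{-\vnorm{v}_{\ell_2^3}^2/2}\!\int_{B_i}(y+v)e^{-\langle y,v\rangle}\,d\gamma_3(y),\quad\tfrac{1}{4}=e^{-\vnorm{v}_{\ell_2^3}^2/2}\!\int_{B_i}e^{-\langle y,v\rangle}\,d\gamma_3(y),
\end{equation*}
together with $\sum_i z_i=0$ and the coincidence $k-1=n=3$, to show that these constraints admit only $v=0$. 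Either way, the genuinely hard part---the reason the paper leaves this open---will be upgrading this qualitative uniqueness to a \emph{quantitative} stability estimate of Lemma~\ref{lemma5}-type, bounding $\sup_{\Delta_4^0(\gamma_3)}\psi_0-\psi_0(A)$ below by a positive power of the $d_2$-distance from $A$ to the family of simplicial conical partitions; this quantitative form is precisely what the feedback-loop strategy of Lemma~\ref{lemma8} requires to extend Theorem~\ref{thm2} to $k\ge 4$.
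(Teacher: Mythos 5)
The statement you are asked to prove is labeled a \emph{conjecture} in the paper (Conjecture~\ref{c1}), and the paper explicitly leaves it open: immediately after stating it the text remarks that the unconstrained version (over $\Delta_{k}(\gamma_{n})$) is known from \cite[Lemma 3.3]{khot09} but that ``the volume constraint of $\Delta_{k}^{0}(\gamma_{n})$ causes difficulties for the methods of \cite{khot09,khot11}.'' So there is no proof in the paper to compare against, and your write-up is, as you candidly say, a proposal rather than a proof. Your diagnosis of the obstruction coincides exactly with the paper's: the Voronoi-comparison inequality underlying Lemma~\ref{lemma0} would finish the job if only the unweighted Voronoi cones $\{B_i\}$ of the means $\{z_i\}$ had equal Gaussian measure, and in general they do not.

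Within phase~1 there is a soft spot worth flagging. The pairwise ess~sup/ess~inf inequality only pins down a unique threshold $t_{ij}$ along the $(z_i-z_j)$-direction when $A_i$ and $A_j$ actually share boundary; when they do not, the threshold lies in a nondegenerate interval and the ``cocycle identity at triple-boundary points'' has nothing to grip. Likewise, ``$\{z_i\}$ generically affinely spans $\R^3$'' needs an argument to rule out degenerate optimizers for which $z_1-z_2,z_1-z_3,z_1-z_4$ fail to span, since then the six bisector hyperplanes need not be concurrent. Both issues can plausibly be handled by a KKT/Lagrangian argument on $\Delta_{k}^{0}(\gamma_{n})$ (with a constraint-qualification step) together with a nondegeneracy lemma, but neither is automatic, and you would have to supply the argument rather than gesture at it.

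The genuine gap is phase~2: showing the common vertex $v$ is the origin. You propose two routes but carry out neither, and both look hard. The homotopy route would require a monotonicity estimate for $\psi_{0}$ along a measure-preserving deformation, which is essentially a new, constrained analogue of Lemma~\ref{lemma5} --- exactly the missing ingredient the paper points to. The fixed-point route produces a nonconvex transcendental system coupling $v$ to the dihedral angles of the Voronoi cones, and uniqueness of $v=0$ for that system is not visibly easier than the original question. It is also worth noting the caution from \cite{heilman14}, cited at the end of the paper: for the full noise-stability functional, once the prescribed measures deviate from $(1/k,\ldots,1/k)$ the optimizers cease to be (translated) simplicial conical at all. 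This strongly suggests that whatever forces $v=0$ must exploit the exact symmetry of the equal-measure constraint and will not follow from a soft convexity or compactness argument. In short, your sketch is a sensible plan that reproduces the paper's own assessment of where the difficulty lies, but it does not close the gap, and the conjecture remains open.
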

This result is known to be true if we replace $\Delta_{k}^{0}(\gamma_{n})$ with $\Delta_{k}(\gamma_{n})$, by \cite[Lemma 3.3]{khot09}.  However, the volume constraint of $\Delta_{k}^{0}(\gamma_{n})$ causes difficulties for the methods of \cite{khot09,khot11}.

The second problem that remains open is Conjecture \ref{SSC} for $\rho<0$ or for $\rho$ positive and much larger than $0$.  We have already discussed the issues for $\rho<0$ in Theorem \ref{thm3}, where it is shown that our proof strategy surprisingly fails for $\rho<0$.  For $\rho$ with, e.g. $\rho\in(1/2,1)$, the error bounds that we use in the proof of Theorem \ref{thm2} seem to break down, especially when we apply Lemma \ref{lemma8}, Lemma \ref{lemma4}, and \eqref{four0.5}.  There is nothing special about our choice of $1/2$ here, other than that it is a positive number that is sufficiently far from $0$.  So, it seems that our method is not applicable for $\rho\in(1/2,1)$.  For example, Lemma \ref{lemma7} has an error term which is estimated by Lemma \ref{lemma6.1}.  However, the error estimate of Lemma \ref{lemma6.1} grows exponentially in $n$.  And to compensate for this error, we need to choose $\rho$ to decrease exponentially in $n$.  Even before we apply the Main Lemma \ref{lemma8}, there is also a loss in \eqref{two2}, where we essentially need a very specific $L_{\infty}$ bound on the Gaussian heat kernel (or Mehler kernel).  We used the rather crude method of bounding each Hermite polynomial separately in Lemma \ref{lemma6}, and then summing up these polynomials.  In principle, both of these losses could be avoided with dimension independent error estimates in Lemmas \ref{lemma7} and Lemma \ref{lemma6.1}.  However, this seems to be a difficult task.

However, since the case $\rho\in(1/2,1)$ relates to geometric multi-bubble problems, whereas the case of small $\rho$ seems to concern entirely different geometric information, it is unclear whether or not a single method could simultaneously solve or interpolate between different values of $\rho$ in Conjecture \ref{SSC}.

Finally, a new open problem has emerged subsequent to this work.  It turns out that if we modify the measure restriction
 in Conjecture \ref{SSC} in any way, then the analogue of Conjecture \ref{SSC} is false \cite{heilman14}.
 To be precise, in the statement of Conjecture \ref{SSC}, let
 $(a_{1},\ldots,a_{k})$ with $0<a_{i}<1$ for all $i=1,\ldots,k$, and such that $\sum_{i=1}^{k}a_{i}=1$.
 Assume that $(a_{1},\ldots,a_{k})\neq(1/k,\ldots,1/k)$.  Then, the partition $\{A_{i}\}_{i=1}^{k}\subset\R^{n}$
 which optimizes the noise stability \eqref{six1.5}
 subject to the constraint $\gamma_{n}(A_{i})=a_{i}$ for all $i=1,\ldots,k$ is not any translation of a regular simplicial
 conical partition.  In fact, the optimal partition $\{A_{i}\}_{i=1}^{k}$ has essentially no elementary description
 using simplices.  See \cite[Theorem 2.6]{heilman14} for a precise statement.  So, for example, the following question is
 open.
 \begin{question}
 Let $\rho>0$, $k=3$, $n=2$, and let $(a_{1},a_{2},a_{3})\in(0,1)^{3}$ with $\sum_{i=1}^{3}a_{i}=1$.  What is the
 partition $\{A_{i}\}_{i=1}^{3}$ maximizing the noise stability \eqref{six1.5} subject to the constraint $\gamma_{2}(A_{i})=a_{i}$
 for all $i=1,2,3$?
 \end{question}

\medskip

\noindent{\textbf{Acknowledgement.}} Thanks to Assaf Naor for guidance and encouragement, and for helpful comments concerning Corollary \ref{thm0.1}.  Thanks to Elchanan Mossel for helpful comments concerning the details of Corollary \ref{thm0.1} and the vacuous cases of Conjecture \ref{PS}.
Thanks to Oded Regev for reading the manuscript thoroughly and providing helpful comments.
Thanks also to the anonymous reviewers' thorough reading and helpful comments.

\bibliographystyle{abbrv}
\bibliography{12162011}

\section{Appendix: Differentiation of the Ornstein Uhlenbeck Semigroup}\label{secapp}

We prove \eqref{six1.1} and \eqref{six1.2}.  Let $\rho\in(-1,1)$ and let $f\colon\R^{n}\to\R$.  In the following calculations, we use integration by parts freely, and we use differentiation in the distributional sense.  We first calculate derivatives of $T_{\rho}f(x)$ with respect to $x\in\R^{n}$, and then we calculate the derivative of $T_{\rho}f(x)$ with respect to $\rho$.
%

\begin{flalign}
\frac{\partial}{\partial x_{i}}T_{\rho}f(x)
&\stackrel{\eqref{six0}}{=}\int\frac{\partial}{\partial x_{i}}[f(x\rho+y\sqrt{1-\rho^{2}})]d\gamma_{n}(y)\nonumber\\
&=\int\frac{\partial f(x\rho+y\sqrt{1-\rho^{2}})}{\partial z_{i}}d\gamma_{n}(y)\rho\nonumber\\
&=\int\frac{\partial}{\partial y_{i}}[f(x\rho+y\sqrt{1-\rho^{2}})]d\gamma_{n}(y)\frac{\rho}{\sqrt{1-\rho^{2}}}\nonumber\\
&=-\int f(x\rho+y\sqrt{1-\rho^{2}})\frac{\partial}{\partial y_{i}}[d\gamma_{n}(y)]\frac{\rho}{\sqrt{1-\rho^{2}}}\nonumber\\
&=\frac{\rho}{\sqrt{1-\rho^{2}}}\int y_{i}f(x\rho+y\sqrt{1-\rho^{2}})d\gamma_{n}(y).\label{nine1}
\end{flalign}

\begin{flalign}
\frac{\partial^{2}}{\partial x_{i}^{2}}T_{\rho}f(x)
&\stackrel{\eqref{nine1}}{=}\frac{\rho}{\sqrt{1-\rho^{2}}}\int\frac{\partial}{\partial x_{i}}[f(x\rho+y\sqrt{1-\rho^{2}})]y_{i}d\gamma_{n}(y)\nonumber\\
&=\frac{\rho}{\sqrt{1-\rho^{2}}}\int\frac{\partial f(x\rho+y\sqrt{1-\rho^{2}})}{\partial z_{i}}y_{i}d\gamma_{n}(y)\rho\nonumber\\
&=\frac{\rho^{2}}{1-\rho^{2}}\int\frac{\partial}{\partial y_{i}}[f(x\rho+y\sqrt{1-\rho^{2}})]y_{i}d\gamma_{n}(y)\nonumber\\
&=-\frac{\rho^{2}}{1-\rho^{2}}\int f(x\rho+y\sqrt{1-\rho^{2}})\frac{\partial}{\partial y_{i}}[y_{i}d\gamma_{n}(y)]\nonumber\\
&=-\frac{\rho^{2}}{1-\rho^{2}}\int f(x\rho+y\sqrt{1-\rho^{2}})(-y_{i}^{2}+1)d\gamma_{n}(y)\nonumber\\
&=\frac{\rho^{2}}{1-\rho^{2}}\int (y_{i}^{2}-1)f(x\rho+y\sqrt{1-\rho^{2}})d\gamma_{n}(y).\label{nine2}
\end{flalign}

\begin{flalign*}
&\frac{d}{d\rho}T_{\rho}f(x)
=\frac{d}{d\rho}\int_{\R^{n}}f(x\rho+y\sqrt{1-\rho^{2}})d\gamma_{n}(y)\\
&=\int_{\R^{n}}\sum_{i=1}^{n}\frac{\partial f(x\rho+y\sqrt{1-\rho^{2}})}{\partial z_{i}}
\left(x_{i}-y_{i}\frac{\rho}{\sqrt{1-\rho^{2}}}\right)d\gamma_{n}(y)\\
&=\int_{\R^{n}}\sum_{i=1}^{n}\frac{\partial}{\partial y_{i}}[f(x\rho+y\sqrt{1-\rho^{2}})]
\left(x_{i}-y_{i}\frac{\rho}{\sqrt{1-\rho^{2}}}\right)\frac{d\gamma_{n}(y)}{\sqrt{1-\rho^{2}}}\\
&=-\int_{\R^{n}}f(x\rho+y\sqrt{1-\rho^{2}})\sum_{i=1}^{n}\frac{\partial}{\partial y_{i}}
\left[\left(x_{i}-y_{i}\frac{\rho}{\sqrt{1-\rho^{2}}}\right)\frac{d\gamma_{n}(y)}{\sqrt{1-\rho^{2}}}\right]\\
&=-\int_{\R^{n}}f(x\rho+y\sqrt{1-\rho^{2}})\sum_{i=1}^{n}
\left[\left(x_{i}-y_{i}\frac{\rho}{\sqrt{1-\rho^{2}}}\right)(-y_{i})-\frac{\rho}{\sqrt{1-\rho^{2}}}\right]\frac{d\gamma_{n}(y)}{\sqrt{1-\rho^{2}}}\\
&=-\int_{\R^{n}}f(x\rho+y\sqrt{1-\rho^{2}})\sum_{i=1}^{n}
\left[(y_{i}^{2}-1)\frac{\rho}{\sqrt{1-\rho^{2}}}-x_{i}y_{i}\right]\frac{d\gamma_{n}(y)}{\sqrt{1-\rho^{2}}}\\
&=\frac{1}{\rho}\bigg[\frac{\rho}{\sqrt{1-\rho^{2}}}\left\langle x,\int_{\R^{n}}yf(x\rho+y\sqrt{1-\rho^{2}})d\gamma_{n}(y)\right\rangle\\
&\qquad\qquad\qquad
+\frac{\rho^{2}}{1-\rho^{2}}\int_{\R^{n}}\left(\sum_{i=1}^{n}(1-y_{i}^{2})\right)f(x\rho+y\sqrt{1-\rho^{2}})d\gamma_{n}(y)\bigg]\\
&\stackrel{\eqref{nine1}\wedge\eqref{nine2}}{=}\frac{1}{\rho}\left(\langle x,\nabla T_{\rho}f(x)\rangle-\Delta T_{\rho}f(x)\right).
\end{flalign*}

\end{document}